\tikzset{strike thru arrow/.style={
  decoration={markings, mark=at position 0.5 with {
    \draw [blue, thick,-]
      ++ (-0.15cm, -0.15cm)
      -- ( 0.15cm,  0.15cm);}
  },
  postaction={decorate},
}}
\definecolor{blue1}{HTML}{2171b5}
\definecolor{blue2}{HTML}{6baed6}
\definecolor{blue3}{HTML}{bdd7e7}
\definecolor{blue4}{HTML}{eff3ff}
\theoremstyle{plain}
\newtheorem{thm}{Theorem}
\newtheorem{prop}[thm]{Proposition}
\newtheorem{cor}[thm]{Corollary}
\theoremstyle{definition}
\newtheorem{defn}[thm]{Definition}
\newtheorem{cas}{Case}
\theoremstyle{remark}
\newcommand{\nc}{\newcommand}
\nc{\dmo}{\DeclareMathOperator}
\newcommand{\rnc}{\renewcommand}
\nc{\B}[1]{\mathbb{#1}}
\nc{\C}[1]{\mathcal{#1}}
\nc{\Sc}[1]{\mathscr{#1}}
\nc{\F}[1]{\mathfrak{#1}}
\nc{\cdes}[1][]{\sate_{M}}
\nc{\cdeshat}[1][]{\widehat{\sate}_{M}}
\nc{\tes}[1][]{\te_{\Ob #1}}
\nc{\ates}[1][]{\ate_{\Ob #1}}
\nc{\rsqy}{R^2_{Y \sim U \mid Z, X}}
\nc{\rsqz}{R^2_{Z \sim U \mid X}}
\nc{\wt}{\widetilde}
\nc{\wh}{\widehat}
\nc{\la}{\langle}
\nc{\ra}{\rangle}
\nc{\fvert}{\left\vert\vphantom{\frac11}\right.}
\nc{\indep}{\protect\mathpalette{\protect\independenT}{\perp}}
\def\independenT#1#2{\mathrel{\rlap{\(#1#2\)}\mkern2mu{#1#2}}}
\rnc{\epsilon}{\varepsilon}
\dmo{\cde}{\textsc{cde}}
\dmo{\te}{\textsc{te}}
\dmo{\ate}{\textsc{ate}}
\dmo{\cate}{\textsc{cate}}
\dmo{\sate}{\textsc{sate}}
\dmo{\var}{\textsc{Var}}
\dmo{\bnm}{\textsc{Binom}}
\dmo{\EE}{\mathbb{E}}
\dmo{\Ob}{Ob}
\dmo{\bern}{\textsc{Bern}}
\dmo{\unif}{\textsc{Unif}}
\dmo{\se}{\textsc{se}}
\dmo{\logit}{logit}
\dmo{\Do}{\textit{do}}
\dmo{\NA}{NA}
\title{%
  A Causal Framework for Observational Studies of Discrimination\footnote{%
    We thank Alex Chohlas-Wood, Avi Feller, Andrew Gelman, Zhiyuan ``Jerry''
    Lin, Julian Nyarko, Brendan O'Flaherty, Elizabeth Ogburn, Jos{\'e} Luis
    Montiel Olea, Steven Raphael, James Robins, Rajiv Sethi, Amy Shoemaker, and
    Ilya Shpitser for helpful conversations. Code to replicate our analysis is
    available online at: \url{https://github.com/stanford-policylab/gcbsgh-rep}.
}}
\author{%
  Johann Gaebler\\
  Stanford University\\
  \and
  William Cai\\
  Stanford University\\
  \and
  Guillaume Basse \\
  Stanford University
  \and
  Ravi Shroff \\
  New York University
  \and
  Sharad Goel\\
  Harvard University\\
  \and
  Jennifer Hill \\
  New York University \\
}
\date{}
\begin{document}

\maketitle
\thispagestyle{empty}

\begin{abstract}
  In studies of discrimination, researchers often seek to estimate a causal
  effect of race or gender on outcomes. For example, in the criminal justice
  context, one might ask whether arrested individuals would have been
  subsequently charged or convicted had they been a different race. It has long
  been known that such counterfactual questions face measurement challenges
  related to omitted-variable bias, and conceptual challenges related to the
  definition of causal estimands for largely immutable characteristics. Another
  concern, which has been the subject of recent debates, is post-treatment bias:
  many studies of discrimination condition on apparently intermediate outcomes,
  like being arrested, that themselves may be the product of discrimination,
  potentially corrupting statistical estimates. There is, however, reason to be
  optimistic. By carefully defining the estimand---and by considering the
  precise timing of events---we show that a primary causal quantity of interest
  in discrimination studies can be estimated under an ignorability condition
  that may hold approximately in some observational settings. We illustrate
  these ideas by analyzing both simulated data and the charging decisions of a
  prosecutor's office in a large county in the United States.
\end{abstract}

\newpage

To assess the role of race or gender in decision making, researchers often
examine disparities between groups after adjusting for relevant factors. For
example, to measure racial discrimination in lending decisions, one might
estimate race-specific approval rates after adjusting for creditworthiness,
typically via a regression model. This simple statistical strategy---sometimes
called benchmark analysis---has been used to study discrimination in a wide
variety of domains, including banking~\citep{munnell1996mortgage},
employment~\citep{berg2002measuring}, education~\citep{baum2005gender},
healthcare~\citep{balsa2005testing}, housing~\citep{greenberg2016discrimination,
edelman2014digital}, and criminal justice~\citep{rehavi2014racial,
fryer2019empirical, gelman2007analysis, ayres2002, macdonald2017analysis}.

The results of benchmark analyses are often framed in causal terms (e.g., as an
effect of race on outcomes), but it is well understood that such an approach
suffers from at least three significant statistical challenges when used to
estimate causal quantities. First, at a conceptual level, it is unclear how
best to rigorously define causal estimands of interest when the treatment is
race, gender, or another largely immutable trait. Second, estimates can be
plagued by omitted-variable bias if one does not appropriately adjust for all
relevant covariates. Third---and the focus of our paper---there are worries
that estimates are corrupted by post-treatment bias when one adjusts for
covariates or restricts to samples of individuals determined downstream from
race, gender, or another such treatment variable. This concern, in particular,
has raised doubts about the reliability of the literature on police
discrimination, where many studies rely on administrative stop records, and
hence implicitly condition on officers stopping an individual, an action that
itself is likely discriminatory~\citep{knox-2019, heckman2020comment}.

Here we present a causal framework for conceptualizing and estimating a measure
of discrimination that is suitable for many applied problems. Our framing
specifically addresses concerns about post-treatment bias. To do so, we first
define a causal quantity---the second-stage sample average treatment effect, or
\(\cdes\)---which closely maps to the legal notion of disparate treatment. For
this estimand, by carefully considering the timing of events, we show that
treatment assignment conceptually occurs after selection into the sample of
interest. We then introduce the notion of subset ignorability, show that this
condition formally justifies the use of benchmark analysis to estimate the
\(\cdes\), and discuss settings in which it is likely to hold approximately. We
illustrate these ideas by analyzing synthetic data, as well as a detailed
dataset of prosecutorial charging decisions for approximately 20,000 felony
arrests in a major U.S.\ county. By developing this statistical foundation, we
hope to place discrimination studies on more solid theoretical footing.

\section{A Motivating Example}
\label{sec:motex}

Consider the problem of measuring racial discrimination in prosecutorial
charging decisions. After an individual has been arrested, prosecutors in the
district attorney's office read the arresting officer's incident report and then
decide whether or not to press charges. For simplicity, suppose prosecutors
only have access to the incident report---and to no other information---when
making their decisions. We allow for the possibility that the arrest decision
that preceded the charging decision may have suffered from racial discrimination
in complex ways that cannot be inferred from the incident reports themselves.
Finally, suppose that a researcher has access to these incident reports for
arrested individuals, but, importantly, not to any data on individuals that
officers considered but ultimately decided against arresting. What, if
anything, might one hope to discover about racial discrimination in charging
decisions in light of the fact that the people about whom the prosecutor makes
charging decisions have been selected---that is, arrested---not randomly, but
rather in ways that likely depended on their race?

The first challenge is to rigorously define causal estimands of interest. The
inherent difficulty is captured by the statistical refrain ``no causation
without manipulation''~\citep{holland1986statistics}, since it is often unclear
what it means to alter attributes like race and gender~\citep{sekhonneyman}.
One common maneuver is to instead consider the causal effect of \emph{perceived}
attributes (e.g., perceived race or perceived gender), which ostensibly can be
manipulated---for example, by changing the name listed on an employment
application~\citep{bertrand2004emily}, or by masking an individual's
appearance~\citep{goldin2000orchestrating, grogger2006testing, OPP}. In our
case, one might imagine a hypothetical experiment in which explicit mentions of
race in the incident report are altered (e.g., replacing ``white'' with
``Black''). The causal effect is then, by definition, the difference in
charging rates between those cases in which arrested individuals were randomly
described (and hence may be perceived) as ``Black'' and those in which they were
randomly described as ``white.'' This conceptualization of discrimination
conforms to one common causal understanding of discrimination used, for example,
in audit studies. This framing also maps closely to the legal notion of
disparate treatment, a form of discrimination in which actions are motivated by
animus or otherwise discriminatory intent~\citep{goel2017combatting}.

While researchers have carried out such audit studies---including in the case of
prosecutorial charging decisions
\citep{chohlaswood2020blind,robertson_race_2019}\footnote{%
  There are some differences between the idealized audit study described above
  and these two experiments. \citeauthor{chohlaswood2020blind} conduct a
  quasi-random field trial in which they mask---but do not switch---the stated
  race of individuals in police narratives used to make actual charging
  decisions. \citeauthor{robertson_race_2019} survey prosecutors in a randomized
  lab experiment and ask them, hypothetically, what their charging decision
  would be based on fact patterns in which the race of the suspect is
  manipulated. Although neither of these studies maps exactly to the
  hypothetical experiment motivating our estimand, both demonstrate the
  feasibility of conducting such an experiment.
}---it is often
infeasible to study important policy questions through randomized experiments.
In the absence of a controlled experiment, one can in theory identify this type
of causal estimand from purely observational data by comparing charging rates
across pairs of cases that are identical in all aspects other than the stated
race of the arrested individual.\footnote{%
  It suffices to compare groups of cases that have the same
  distribution of potential outcomes---even if the cases themselves are not
  identical---a property we formalize in Definition~\ref{def:si} below.
} That strategy, which mimics the key features of the hypothetical randomized
experiment described above, is formally justified when treatment assignment
(i.e., description of race on the incident report, and subsequent perception by
the prosecutor) is \emph{ignorable} given the observed covariates (i.e.,
features of the incident report)~\citep{imbens2015causal}. In practice, though,
this approach may suffer from omitted-variable bias when the full incident
report is not available to researchers, and may suffer from lack of overlap when
suitable matches cannot be found for each case---limitations common to many
observational studies of causal effects. To address these issues, one can
restrict attention to the overlap region and gauge the robustness of estimates
to varying forms and degrees of unmeasured
confounding~\citep{cornfield1959smoking, rr, cinelli2018making}, an approach we
demonstrate below.

Finally, there is the issue of post-treatment bias, especially due to sample
selection. \citet{knox-2019}\ argue that researchers often inadvertently
introduce post-treatment bias in observational studies of discrimination by
subsetting on apparently intermediate outcomes---such as, in our charging
example, being arrested---that themselves may be the product of discrimination.
As a result, the authors caution that causal quantities of interest cannot be
identified by the data in the absence of implausible assumptions, such as lack
of discrimination in the initial arrest decision. In making their argument,
\citeauthor{knox-2019}\ focus on the use of force by police officers in civilian
encounters, but they suggest their formal critique applies more broadly, casting
doubt on a wide range of observational studies of discrimination.

Here we show that such customary subsetting does not pose an insurmountable
threat to discrimination research. To understand why, one must precisely define
the causal estimand, and carefully consider the timing of events. For instance,
in our charging example, there are two relevant treatments, the officer's
perception of race, affecting the officer's arrest decision, and the
prosecutor's perception of race, affecting the prosecutor's charging decision.
The arrest decision is post-treatment relative to the officer's perception of
race but, importantly, it is pre-treatment relative to the prosecutor's
perception of race. Similarly, the features of the incident report---which we
must adjust for in this type of benchmark analysis---are post-treatment relative
to the officer's perception of race but pre-treatment relative to the
prosecutor's perception of race. In such a two-decider situation, as
\citet{greiner2011causal} suggest, it is possible to recover estimates of
discrimination by the second decider (e.g., in the charging decision) even if
there is discrimination by the first decider (e.g., in the arrest decision).

\section{A Measure of Discrimination}
\label{sec:post_treat_bias}

We present a simple two-stage model to characterize discriminatory decision
making in a variety of real-world situations and define our main causal quantity
of interest---the second-stage sample average treatment effect, or
\(\cdes\)---within this general framework. In the context of our motivating
example, the \(\cdes\) corresponds to the quantity that would be measured in the
hypothetical audit study of prosecutorial decisions described in
Section~\ref{sec:motex}. A central aim of this paper is to formalize technical
assumptions that allow one to statistically identify discrimination---more
precisely, disparate treatment---in the second stage (e.g., in prosecutorial
charging decisions) when data are only available for individuals who made it
past the first stage (e.g., those who were arrested). Importantly, our
formalization accommodates scenarios in which first-stage decisions may
themselves be discriminatory.

In the first stage, we assume each individual in some population is subject to a
binary decision \(M\), such as an offer of employment, admission to college, or
law enforcement action. Those who receive a ``positive'' first-stage decision
(e.g., those who are arrested) proceed to a second stage, where another binary
decision \(Y\) is made. In our running example, the case of each arrested
individual is reviewed in the second stage by a prosecutor who may or may not
choose to press charges. Those who are not arrested do not have a case that
requires review by a prosecutor and, indeed, there may be no administrative
record of those individuals.

When considering racial discrimination in decisions involving Black and white
individuals, our primary quantity of interest is the second-stage sample average
treatment effect, \(\EE[Y(b) - Y(w)]\), where \(Y(z)\) indicates the potential
second-stage decision and the expectation is taken over individuals reaching the
second stage. Here, we imagine that the perception of race is counterfactually
determined after the first-stage decision but before the second-stage decision
(e.g., after arrest but before charging, perhaps by altering the description of
race on the incident report viewed by a prosecutor). The second-stage sample
average treatment effect thus captures discrimination in the second-stage
decision among those who made it past the first stage (e.g., discrimination in
charging decisions among those who were arrested). This estimand maps onto a
common understanding of disparate treatment in second-stage decisions, including
in our charging example.

\subsection{A formal model of discrimination}
\label{ssec:discrimination_model}

We now formalize the above discussion to explicitly include decisions made at
both the first and second stages. For ease of interpretation, we follow
\citet{greiner2011causal} and motivate our statistical model by considering
settings where there are two deciders (e.g., an officer and a prosecutor) whose
perceptions of race---or gender, or another trait---can in theory be
independently altered  prior to their decisions. There are, however, examples
in which one can plausibly intervene twice even when a single decider makes both
decisions. For instance, an officer may decide to stop a motorist based in part
on a brief impression of the motorist's skin tone as they drive
past~\citep{grogger2006testing, OPP}. This visual impression of race could
subsequently be altered if the motorist presents a driver's license bearing a
name characteristic of another race group, or speaks a dialect of English at
odds with the officer's expectation. It thus may be possible to apply our
framework whether one imagines there are two deciders or a single one.

We begin by denoting the race of an individual as perceived by the first decider
at the first stage by \(D \in \{w,b\}\), where, for simplicity, we consider a
population consisting of only white and Black individuals. We focus on racial
discrimination for concreteness, but similar considerations apply to
discrimination based on other attributes, such as gender. Assuming that there is
no interference between units~\citep{imbens2015causal}, we let the binary
variables \(M(w)\) and \(M(b)\) denote the potential first-stage decisions for
an individual (e.g., whether they were arrested), and write \(M = M(D)\) for the
observed first-stage decision. To avoid triviality, we assume throughout that
\(\Pr(M = 1) > 0\).

Next, we let \(Z \in \{w,b\}\) denote the race of an individual as perceived by
the second decider, at the second stage. In our running example, \(Z\) denotes
race as perceived by the prosecutor reviewing that person's file, while \(D\)
denotes race as perceived by the police officer during the encounter. Finally,
we define the second-stage potential outcomes as a function of both the
first-stage outcome \(M\) (e.g., the arrest decision) and the second decider's
perception of race \(Z\). Thus, assuming once again that there is no
interference, the observed second-stage outcome for an individual can be denoted
\(Y = Y(Z, M)\), where we consider four potential second-stage outcomes for each
individual: \(Y(z, m)\), where \(z\in \{w,b\}\) and \(m \in \{0,1\}\). In our
example, only those who were arrested can be charged, and so \(Y(b, 0) = Y(w, 0)
= 0\) for all individuals.\footnote{%
  To avoid imagining values of \(Z\) for individuals not arrested, one could
  also make them ``missing'' by setting \(Z = Z(D, M)\), \(Z(d, 0) = \NA\),
  \(Z(d, 1) = d\), \(Y(\NA, 1) = \NA\), and \(Y(z, 0) = 0\) for \(z \in \{w, b,
  \NA\}\), as we do in the simulation in Section~\ref{sec:ex} below. This does
  not affect any of the mathematical details in what follows.
}

We further allow each individual to have an associated vector of (non-race)
covariates \(X\), representing, for example, their behavior during a police
encounter, their recorded criminal history, or both. We imagine these
covariates are fixed prior to the second-stage treatment (e.g., prior to the
prosecutor's perception of race), since otherwise the key ignorability
assumption in Definition~\ref{def:si} below is unlikely to hold. In practice,
\(X\) is only observed for a subset of the population (e.g., those who were
arrested and hence in the dataset), but we nonetheless define the covariate
vector for all individuals in our population of interest. These covariates are
not necessary to define our causal estimands of interest, but they play an
important role in constructing our statistical estimators.

In this model of discrimination, we have taken care to distinguish between the
(realized) first- and second-stage perceptions of race, \(D\) and \(Z\), because
this helps to clarify the timing of events and the meaning of causal quantities.
Importantly, this makes it clear that we can conceive of \(D\) and \(Z\) as
separately manipulable. At the same time, our focus is observational settings,
in which disagreement between \(Z\) and \(D\) may be realized only rarely, if at
all, in the data we observe. For instance, barring manipulation of the incident
report, it seems unlikely that an arresting officer's perception of race will
frequently differ from a prosecutor's perception. Our simulation in
Section~\ref{sec:ex} thus imposes the further constraint that perceived race is
the same at each stage, though this restriction is not necessary in general.

With this framing, we now formally describe the primary causal estimand we
consider. This quantity, which we call the second-stage sample average treatment
effect (\(\cdes\)) reflects discrimination in the second stage of the
decision-making process outlined above, such as discrimination in the
prosecutor's charging decision.\footnote{%
  The \(\cdes\) is  notationally equivalent to the \(\cde_{M = 1}\) defined in
  \citet{knox-2019}. In our case, however, we have taken care to specify that
  the first parameter in the quantity \(Y(z, m)\) denotes intervening on the
  \emph{second-stage} perception of race. Moreover, the \(\cdes\) is distinct
  from what  \citeauthor{knox-2019}\ call the \(\ate_{M = 1}\).
}

\begin{defn}[\(\cdes\)]
\label{def:cdes}
  The \emph{second-stage sample average treatment effect}, denoted \(\cdes\),
  is:
    \begin{equation}
    \label{eq:cdes}
      \cdes = \EE[Y(b, 1) - Y(w, 1) \mid M = 1].
    \end{equation}
\end{defn}

The estimand in Eq.~\eqref{eq:cdes} compares the potential second-stage
decisions under two race perception scenarios. For example, it compares the
potential charging decisions when the prosecutor perceives the individual to be
either Black or white; importantly, though, the estimand does not explicitly
consider the arresting officer's perception of race. Moreover, this estimand
restricts to the subset of individuals who had a ``positive'' first-stage
decision (e.g., those who were in reality arrested).

Because we condition on \(M = 1\) in the definition of the \(\cdes\), we may
equivalently write Eq.~\eqref{eq:cdes} as
\begin{equation}
  \label{eq:cdes_alt}
    \cdes = \EE[Y(b, M) - Y(w, M) \mid M = 1].
  \end{equation}
We can further write
  \begin{equation}
  \label{eq:cdes_second_alt}
    \cdes = \EE[Y(b) - Y(w) \mid M = 1],
  \end{equation}
where we define \(Y(z) = Y(z, M)\). Among those who reach the second stage
(i.e., individuals with \(M=1\)), \(Y(z) = Y(z, 1)\) denotes the outcome of
intervening \emph{only} on the second decider's perception of race. Among those
who do not reach the second stage (i.e., individuals with \(M=0\)), \(Y(z) =
Y(z, 0) = 0\).
Eqs.~\eqref{eq:cdes},~\eqref{eq:cdes_alt},~and~\eqref{eq:cdes_second_alt}, as
well as the informal estimand introduced at the beginning of
Section~\ref{sec:post_treat_bias}, are equivalent ways of capturing the same
underlying quantity, varying only in the degree to which they are explicit about
the staged nature of the process.

\subsection{Estimating the \texorpdfstring{\(\cdes\)}{CDE-Ob}}

Having defined the \(\cdes\), our goal is now to estimate it using only
second-stage data. That is, we aim to estimate the \(\cdes\) only using
observations for those individuals who received a ``positive''---and potentially
discriminatory---decision in the first stage. For example, we seek to estimate
discrimination in charging decisions based only on data describing those who
were arrested. As we show now, an ignorability assumption, together with an
overlap condition, is sufficient to guarantee the  \(\cdes\) is
nonparametrically identified by data on the second-stage decisions.

\begin{defn}[Subset ignorability]
\label{def:si}
  We say that \(Y(z, 1)\), \(Z\), \(M\), and \(X\) satisfy \emph{subset
  ignorability} if
    \begin{equation}
    \label{eq:ci}
      Y(z, 1) \indep Z \mid X, M=1
    \end{equation}
  for \(z\in\{w,b\}\).
\end{defn}
In our recurring example, subset ignorability means that among arrested
individuals, after conditioning on available covariates, race (as perceived by
the prosecutor) is independent of the potential outcomes for the charging
decision. As above, we can equivalently write Eq.~\eqref{eq:ci} as
  \begin{equation}
  \label{eq:ci_alt}
    Y(z) \indep Z \mid X, M=1.
  \end{equation}
This latter expression makes clear that subset ignorability is closely related
to the traditional ignorability assumption in causal inference, but where we
have explicitly referenced the first-stage outcomes to accommodate a staged
model of decision making.

In our prosecutorial setting, subset ignorability would fail if, for example,
there were a factor that prosecutors used to make their charging decisions but
which was not accounted for in the analysis (e.g., if prosecutors reviewed
witness statements that were not in the case files provided to the analyst),
and, further, that factor were unbalanced between groups (e.g., if all else
equal, witness statements were more commonly available in the cases of white
individuals). See Sections~\ref{sec:ex}~and~\ref{sec:empirical} for further
discussion of such unobserved confounders and their statistical consequences.

Almost all causal analyses implicitly rely on a version of subset ignorability,
since researchers rarely make inferences about the full population of interest.
For example, analyses are typically limited to the individuals who agreed to
participate in the study. Even randomized experiments, while ideal for internal
validity, frequently lack external validity because the study participants do
not resemble a larger population of interest. Whenever ascribing causal
interpretations to non-experimental data, it is important to carefully consider
the plausability of ignorability and other assumptions, as we discuss in detail
in Sections~\ref{sec:ex}~and~\ref{sec:empirical} below. We note, though, that
the assumptions we rely on are similar to those invoked in nearly every
observational study of causal effects.

Ignorability assumptions typically require a corresponding overlap condition to
guarantee consistent estimation.\footnote{%
  In the following, we assume that \(X\) is discrete for simplicity of
  exposition; for continuous analogues of these results, see
  Appendix~\ref{app:cont}.
}
\begin{defn}[Overlap]
  We say that \emph{overlap} holds when \(\Pr(Z = z \mid X = x, M = 1) > 0\) for
  all \(z\) and \(x\) such that \(\Pr(X = x, M = 1) > 0\).
\end{defn}

Overlap states that there are no covariate levels for which the probability of
receiving one of the treatments is zero within the population of interest. In
our prosecution example, overlap ensures that every case has a ``twin'',
identical in all aspects other than the stated race of the arrested individual,
against which it can be compared. Overlap would fail, therefore, in the
prosecutorial setting, if, for instance, there were alleged offenses for which
only Black individuals were arrested. We note that, unlike ignorability, overlap
can be assessed directly from the data; see Section~\ref{sec:empirical}. In
cases where overlap fails to hold, one can still elicit valid causal estimates
by restricting to the subset of the population where overlap holds. For example,
in assessing discrimination in prosecutorial charging decisions, one might only
consider those alleged offenses for which both Black and white individuals have
a non-zero probability of being arrested. But this restriction comes at the cost
of inferential validity for the original population. In such cases, one is
estimating the causal effect \emph{only} for the restricted population; the
causal effect for the original population may differ, sometimes substantially.

In the traditional, single-stage setting, ignorability and overlap are
sufficient to obtain consistent estimates of the average treatment effect.
Likewise, we now show that in our two-stage model of discrimination, subset
ignorability and overlap are sufficient to guarantee consistent estimates of the
\(\cdes\). In practice, if one can adjust for (nearly) all relevant factors
affecting second-stage decisions, one can (approximately) satisfy subset
ignorability, and in particular, one can estimate the \(\cdes\) only using data
available at the second stage. In the Appendix, we compare subset ignorability
to several alternatives, and show that those variants tend either to be too weak
to guarantee identifiability, or unnecessarily demanding for real-world
applications. We emphasize that since the first-stage decision, \(M\), and the
covariates, \(X\), can be viewed as pre-treatment relative to the second-stage
intervention, concerns about post-treatment bias corrupting estimates of the
\(\cdes\) are more naturally thought of as familiar concerns about
omitted-variable bias.\footnote{%
    See \cite{heckman1979sample} for related discussion on interpreting sample
    selection bias as omitted-variable bias.
}

\begin{thm}
\label{thm:main}
  Suppose \(Y(z,1)\), \(Z\), \(M\), and \(X\) satisfy subset ignorability and
  overlap. Then, the \(\cdes\) equals
    \begin{align*}
      \begin{split}
         & \sum_x \EE[Y \mid Z = b, X=x, M=1] \cdot \Pr(X=x \mid M = 1) \\
         & \hspace{1cm} - \sum_x \EE[Y \mid Z = w, X=x, M=1]
          \cdot \Pr(X=x \mid M = 1).
      \end{split}
    \end{align*}
\end{thm}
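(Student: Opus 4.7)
The plan is to mirror the standard identification argument for the average treatment effect under ignorability, applied to each of the two terms in the $\cdes$ separately. Writing $\cdes = \EE[Y(b,1) \mid M=1] - \EE[Y(w,1) \mid M=1]$, I would show that for each $z \in \{w,b\}$,
\[
\EE[Y(z,1) \mid M=1] = \sum_x \EE[Y \mid Z=z, X=x, M=1] \cdot \Pr(X=x \mid M=1),
\]
and then subtract the two expressions to conclude.

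The steps would be, for a fixed $z$: first, use the tower property conditioning on $X$ within the subpopulation $M=1$ to write
\[
\EE[Y(z,1) \mid M=1] = \sum_x \EE[Y(z,1) \mid X=x, M=1] \cdot \Pr(X=x \mid M=1).
\]
Second, apply subset ignorability (Definition~\ref{def:si}) inside each summand: since $Y(z,1) \indep Z \mid X, M=1$, the conditional expectation $\EE[Y(z,1) \mid X=x, M=1]$ equals $\EE[Y(z,1) \mid Z=z, X=x, M=1]$, where the overlap hypothesis $\Pr(Z=z \mid X=x, M=1) > 0$ is what makes the right-hand conditional well-defined. Third, invoke consistency: on the event $\{Z=z, M=1\}$, the observed outcome $Y = Y(Z, M)$ coincides with $Y(z,1)$, so $\EE[Y(z,1) \mid Z=z, X=x, M=1] = \EE[Y \mid Z=z, X=x, M=1]$.

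Chaining these three equalities yields the desired expression for each arm, and subtracting the $z=w$ case from the $z=b$ case gives the statement of the theorem. Since $X$ is assumed discrete, all sums are well-defined and no measure-theoretic care beyond the tower property is needed.

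The calculation itself is routine and follows the textbook template for nonparametric identification under ignorability; the only genuinely substantive content is that the correct conditioning set is $\{X, M=1\}$ rather than just $\{X\}$. The main conceptual point I would emphasize is that the argument goes through precisely because $M$ and $X$ are treated as pre-treatment relative to $Z$ in the staged model of Section~\ref{ssec:discrimination_model}, so conditioning on $M=1$ poses no post-treatment bias concern with respect to the second-stage intervention. Consequently, the hardest aspect is not any step of the derivation but rather justifying that the consistency relation $Y = Y(Z, M)$ combines cleanly with subset ignorability; once that is in place, the identification formula follows immediately.
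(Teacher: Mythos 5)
Your proposal is correct and follows essentially the same route as the paper's proof: condition on \(X\) within the subpopulation \(M=1\), use subset ignorability together with overlap to insert the conditioning on \(Z=z\) in each summand, and then invoke consistency to replace potential outcomes with realized outcomes. The only cosmetic difference is that you identify each arm separately and subtract, while the paper carries both terms through simultaneously.
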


\begin{proof}
Conditioning on \(X\) in Eq.~\eqref{eq:cdes}, we have
    \begin{align}
      \begin{split}
      \label{eq:decomp}
        \cdes
          & = \sum_x \EE[Y(b, 1) \mid X = x, M = 1] \cdot \Pr(X = x\mid M = 1)\\
          & \hspace{1cm} - \sum_x \EE[Y(w, 1) \mid X = x, M = 1] \cdot
              \Pr(X = x\mid M = 1).
      \end{split}
    \end{align}
  By subset ignorability and overlap, we can condition the summands in
  Eq.~\eqref{eq:decomp} on \(Z = b\) and \(Z = w\), respectively, without
  changing their values, yielding
    \begin{align}
      \begin{split}
        \cdes
          & = \sum_x \B E[Y(b, 1) \mid Z = b, X = x, M = 1] \cdot
              \Pr(X = x\mid M = 1) \\
          & \hspace{1cm} - \sum_x \EE[Y(w, 1) \mid Z = w, X = x, M = 1] \cdot
              \Pr(X = x\mid M = 1)
      \end{split} \\
      \begin{split}
          & = \sum_x \B E[Y(Z, M) \mid Z = b, X = x, M = 1] \cdot
              \Pr(X = x\mid M = 1) \\
          & \hspace{1cm} - \sum_x \EE[Y(Z, M) \mid Z = w, X = x, M = 1] \cdot
              \Pr(X = x\mid M = 1).
      \end{split}
    \end{align}
  Finally, the statement of the proposition follows by consistency, as \(Y =
  Y(Z, M)\).
\end{proof}

\begin{cor}
\label{cor:main}
  Suppose subset ignorability and overlap hold, and that we have \(n\) i.i.d.\
  observations \((X_i, Z_i, Y_i)_{i=1}^n\) with \(M_i = 1\). Let \(S_x^{(n)} =
  \{1 \leq i \leq n : X_i = x\}\) represent the set of observations with \(X =
  x\), and let \(S_{zx}^{(n)} = \{1 \leq i \leq n : Z_i = z \land X_i = x\}\)
  represent the set of observations with \(X = x\) and \(Z = z\). Then the
  stratified difference-in-means estimator,
    \begin{equation}
    \label{eq:estimator}
      \Delta_n = \sum_x \left[
        \frac {1} {\big| S_{bx}^{(n)} \big|} \sum_{i \in S_{bx}^{(n)}} Y_i
      \right] \frac {\big| S_x^{(n)} \big|} {n} - \sum_x \left[
        \frac {1} {\big| S_{wx}^{(n)} \big|} \sum_{i \in S_{wx}^{(n)}} Y_i
      \right]\frac {\big| S_x^{(n)} \big|} {n},
    \end{equation}
  yields a consistent estimate of the \(\cdes\).
\end{cor}

\begin{proof}
Note that by the strong law of large numbers,
  \begin{align*}
    \lim_{n \rightarrow \infty} \frac {1} {\big| S_{zx}^{(n)} \big|} \sum_{i \in
    S_{zx}^{(n)}} Y_i
       &  \stackrel{\text{a.s.}}{=} \EE[Y \mid Z = z, X=x, M=1], \ \text{and} \\
    \lim_{n \rightarrow \infty} \frac{\big| S_x^{(n)} \big|}{n}
       & \stackrel{\text{a.s.}}{=} \Pr(X = x \mid M = 1).
  \end{align*}
Consequently,
  \begin{align*}
    \lim_{n \rightarrow \infty}\Delta_n &\stackrel{\text{a.s.}}{=}
      \sum_{x} \mathbb{E} [Y \mid Z = b, X = x, M = 1] \cdot \Pr(X = x \mid M =
      1) \\
      &\hspace{1cm} - \sum_x \mathbb{E} [Y \mid Z = w, X = x, M = 1] \cdot
        \Pr(X = x \mid M = 1),
  \end{align*}
which is the \(\cdes\), by Theorem~\ref{thm:main}.
\end{proof}

A straightforward calculation further shows that the following expression yields
a consistent estimate of the standard error of \(\Delta_n\):
  \begin{equation}
  \label{eq:sedim}
    \wh{\se}(\Delta_n) = \sqrt{\sum_x \left(
        \frac{\big| S_x^{(n)} \big|}{n}
      \right)^2 \left [
        \frac{c_{bx}(1-c_{bx})}{\big| S_{bx}^{(n)} \big|} +
        \frac{c_{wx}(1-c_{wx})}{\big| S_{wx}^{(n)} \big|}
      \right]},
  \end{equation}
where
  \begin{align*}
    c_{zx} = \frac {1} {\big| S_{zx}^{(n)} \big|} \sum_{i \in S_{zx}^{(n)}} Y_i.
  \end{align*}
Eq.~\eqref{eq:sedim} accordingly allows us to form confidence intervals for
\(\Delta_n\).

The nonparametric stratified difference-in-means estimator \(\Delta_n\) is the
basis for nearly all applications of benchmark analysis in discrimination
studies. In practice, as we discuss further in Section~\ref{sec:ex}, it is
common to approximate \(\Delta_n\) via a parametric regression model---but the
two estimators share the same theoretical underpinnings. As such, our analysis
above simply grounds traditional benchmark analysis within a specific causal
framework, and demonstrates that a particular ignorability assumption, together
with overlap, is sufficient to yield valid estimates.

\subsection{An alternative measure of discrimination}
\label{ssec:causal_comparison}

To better understand the \(\cdes\), we now contrast it with the total effect
(\(\te\))~\citep{imai-2010-general}, a second estimand considered by
discrimination researchers~\citep{knox-2019, heckman2020comment, zhao2020note}.
The total effect and the \(\cdes\) differ in our setting in two ways: (1) the
population of individuals about which we make inferences; and (2) the potential
outcomes being contrasted. The total effect is not restricted to individuals who
had a ``positive'' first-stage decision (e.g., it is not restricted to those who
were arrested). Additionally, we imagine a causal variable that reflects a
situation where the perception of race is counterfactually determined
\emph{before} the first-stage decision (instead of \emph{after} the first-stage
decision, as with the \(\cdes\)), and is the same at both stages.

We note that, in general---as discussed in Section~\ref{sec:motex} and
below---there is no fully coherent notion of a ``total effect'' of race, since
one cannot intervene on race, \emph{per se}. In our running example, the two
treatments (i.e., the officer's perception of race and the prosecutor's
perception of race) represent distinct, situation-specific notions of
intervening on race. In this restricted context, then, there is a natural
estimand that captures the spirit of a ``total effect'': comparing an
individual's potential outcomes had they been perceived as white or Black when
\emph{both} the first- and second-stage decisions were made. We formalize this
as follows:

\begin{defn}[\(\te\)]
\label{def:te}
  The \emph{total effect}, denoted \(\te\), is given by:
    \begin{equation}
    \label{eq:te}
      \te = \EE[Y(b, M(b)) - Y(w, M(w))].
    \end{equation}
\end{defn}

Unlike the \(\cdes\), which only measures discrimination in the second decision,
the total effect measures cumulative discrimination across \emph{both}
decisions. In our recurring example, the total effect captures the effect of
race at the time of arrest on the subsequent charging decision. In particular,
if a charged Black individual had instead been perceived as white by an officer,
they might never have been arrested, and hence never been at risk of being
charged, a possibility encompassed by the definition of the total effect, but
not by the \(\cdes\).

We stress, however, that in studies of discrimination---particularly racial
discrimination---there is often no clear intervention point, and the difference
between the \(\te\) and the \(\cdes\) is largely an artifact of how one defines
both the population of interest and the start of the decision-making process.
What is the \(\te\) in one description of events may be the \(\cdes\) in
another, equally valid description of the same events, as we describe next.

\begin{figure*}[t]
  \begin{center}
    \begin{tikzpicture}[
      minimum size=0.25cm,
      inner sep=0pt,
      yscale=0.35
    ]
      \draw[fill=blue1] (0, 10) node[anchor=north west, inner sep=5pt] {Spotted}
          rectangle (3, 0);
      \draw[fill=blue2] (3, 9) node[anchor=north west, inner sep=5pt] {Stopped}
          rectangle (6, 1);
      \draw[fill=blue3] (6, 8) node[anchor=north west, inner sep=5pt] {Arrested}
          rectangle (9, 2);
      \draw[fill=blue4] (9, 7) node[anchor=north west, inner sep=5pt] {Charged}
          rectangle (12, 3);

      \node[shape=circle, fill=white, draw] at (1.5, 2) (A1) {};
      \node[shape=circle, fill=white, draw] at (1.5, 3) (B1) {};
      \node[shape=circle, fill=white, draw] at (1.5, 4) (C1) {};
      \node[shape=circle, fill=black, draw] at (1.5, 5) (D1) {};
      \node[shape=circle, fill=black, draw] at (1.5, 6) (E1) {};
      \node[shape=circle, fill=black, draw] at (1.5, 7) (F1) {};

      \node[shape=circle, fill=white, draw] at (4.5, 3) (B2) {};
      \node[shape=circle, fill=white, draw] at (4.5, 4) (C2) {};
      \node[shape=circle, fill=black, draw] at (4.5, 5) (D2) {};
      \node[shape=circle, fill=black, draw] at (4.5, 6) (E2) {};

      \node[shape=circle, fill=white, draw] at (7.5, 4) (C3) {};
      \node[shape=circle, fill=black, draw] at (7.5, 5) (D3) {};
      \node[shape=circle, fill=black, draw] at (7.5, 6) (E3) {};

      \node[shape=circle, fill=black, draw] at (10.5, 5) (D4) {};

      \draw[->] (B1) -- (B2);
      \draw[->] (C1) -- (C2);
      \draw[->] (D1) -- (D2);
      \draw[->] (E1) -- (E2);

      \draw[->] (C2) -- (C3);
      \draw[->] (D2) -- (D3);
      \draw[->] (E2) -- (E3);

      \draw[->] (D3) -- (D4);

      \draw[thin, dotted] (0, 0) -- (0, -7);
      \draw[thin, dotted] (3, 0) -- (3, -7);
      \draw[thin, dotted] (6, 1) -- (6, -7);
      \draw[thin, dotted] (9, 2) -- (9, -7);
      \draw[thin, dotted] (12, 3) -- (12, -7);

      \draw[draw=blue2, fill=blue2] (3, -1.1) rectangle (6, -1.9);
      \draw[draw=blue3, fill=blue3] (6, -1.1) rectangle (9, -1.9);
      \draw[draw=blue4, fill=blue4] (9, -1.1) rectangle (12, -1.9);
      \draw[fill opacity=0] (3, -1.1) rectangle (12, -1.9);

      \draw[
        decorate, decoration={brace, amplitude=0.8em, aspect=7/12, mirror},
        yshift=-4pt
      ] (3, -2) -- (12, -2);
      \node at (7.5, -3.5) {\(\cdes = \te\)};
      \draw[
        decorate, decoration={brace, amplitude=1em, aspect=37/64}, yshift=4pt
      ] (0, -5) -- (12, -5);

      \draw[draw=blue1, fill=blue1] (0, -5.1) rectangle (3, -5.9);
      \draw[draw=blue2, pattern={crosshatch dots}, pattern color=white]
          (0, -5.1) rectangle (3, -5.9);
      \draw[draw=blue2, fill=blue2] (3, -5.1) rectangle (6, -5.9);
      \draw[draw=blue3, fill=blue3] (6, -5.1) rectangle (9, -5.9);
      \draw[draw=blue4, fill=blue4] (9, -5.1) rectangle (12, -5.9);
      \draw[fill opacity=0] (0, -5.1) rectangle (12, -5.9);

      \draw[thick, <->] (-1.5, -7) node[anchor=east] {\(\cdots\)} -- (13.5, -7)
          node[anchor=west] {\(\cdots\)};

      \draw (1.5, -7cm + 4pt) -- (1.5, -7cm - 4pt) node[anchor=north]
          {\(t_{k-1}\)};
      \draw (4.5, -7cm + 4pt) -- (4.5, -7cm - 4pt) node[anchor=north]
          {\(t_{k}\)};
      \draw (7.5, -7cm + 4pt) -- (7.5, -7cm - 4pt) node[anchor=north]
          {\(t_{k+1}\)};
      \draw (10.5, -7cm + 4pt) -- (10.5, -7cm - 4pt) node[anchor=north]
          {\(t_{k+2}\)};
    \end{tikzpicture}
  \end{center}
  \caption{\emph{%
    This figure illustrates estimands one could consider, and the populations
    they concern, as individuals move through one segment of the criminal
    justice system. For instance, one can measure combined discrimination in
    arrest and charging decisions either via the \(\te\) or the \(\cdes\). In
    studies of discrimination, there is no clear point at which race is
    ``assigned'' and so both the \(\te\) and the \(\cdes\) can be used
    interchangeably to express the same underlying causal effect, the \(\te\)
    with respect to the population of stopped individuals, and the \(\cdes\)
    with respect to the population of spotted individuals. More generally, the
    diagram illustrates a multistage process, where one seeks to measure
    discrimination culminating at stage \(t_{k + 2}\) (e.g., charging decisions)
    among those who make it to stage \(t_k\) (e.g., those who were stopped by
    the police). This quantity can be viewed as the \(\te\), where one imagines
    the process starting at time \(t_k\). Alternatively, it can be viewed as the
    \(\cdes\), where one views the process as starting earlier (at, say,
    \(t_{k-1}\), indicating that an officer spotted an individual), and then
    conditioning on those who made it to stage \(t_k\). Note that the
    quantities themselves are formally defined---and equivalent in the manner
    just described---even absent any considerations of estimation and
    randomization, which are not illustrated here.
  }}
\label{fig:cde-te}
\end{figure*}
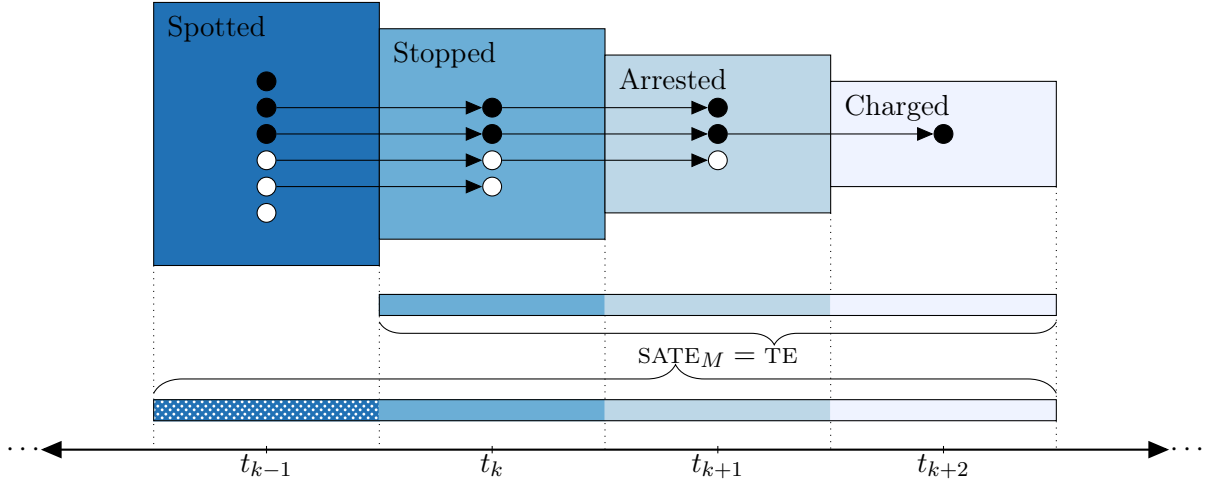

In our running example, the implicit population of interest consists of those
individuals stopped by the police, and the \(\te\) reflects a description of
events in which the decision-making process starts---and perception of race is
counterfactually determined---when the arrest decision is made. We can, however,
imagine moving back the clock and starting the process when the stop decision is
made, with the population of interest now comprising those individuals spotted
by an officer. In this case, the original \(\te\) is equivalent to the \(\cdes\)
on this newly defined population, where the first-stage decision indicates
whether an individual was \emph{stopped}. Both the original \(\te\) and the new
\(\cdes\) capture combined discrimination in the arrest and charging decisions,
among the subset of individuals who were stopped.\footnote{%
  To be explicit, our point is that the original \(\te\) and the new \(\cdes\)
  are the same quantities, and hence are estimable using the same data. However,
  the new \(\cdes\) (which subsets on individuals who are \emph{stopped} among
  those who are \emph{spotted}) and the original \(\cdes\) (which subsets on
  individuals who are \emph{arrested} among those who are \emph{stopped}), are,
  in contrast, not equal in general, and not necessarily estimable using the
  same data. In particular, if one wants to estimate either the original \(\te\)
  or, equivalently, the new \(\cdes\), the arrest decision can be viewed as an
  intermediate variable, and, accordingly, subsetting to arrested individuals
  would in general introduce post-treatment bias.
}

But the moment when an individual is spotted is no more statistically privileged
as a starting point than the moment when an officer makes a stop decision. One
could similarly measure cumulative discrimination that includes the stop
decision itself, either in terms of the \(\te\) or the \(\cdes\). For the
\(\te\),  as above, we imagine time starting immediately after a potential
police encounter, with the first-stage decision indicating whether an individual
was stopped (among a population of individuals spotted by the officer). For the
\(\cdes\), we back up the clock once again and imagine the first-stage decision
indicating whether an individual was spotted by an officer, among an even larger
population of people walking through the neighborhood where the officer patrols.
Figure~\ref{fig:cde-te} provides a graphical depiction of this
interchangeability.\footnote{%
  The formalism above shows a certain statistical equivalence between estimands
  having different starting points of the decision-making process. Nonetheless,
  the choice of starting point corresponds to measuring discrimination across
  different parts of the process, and so different estimands are relevant in
  different contexts. As such, we do not assert any normative ordering among
  them.
}

Although the \(\te\) may appear to avoid conditioning on intermediate outcomes,
it simply masks a complex chain of events that came before the nominal start of
the process, a chain that itself was likely influenced by discriminatory
decisions. For instance, the officer spotting and stopping motorists in our
running example could be patrolling the neighborhood in question because of its
racial composition.\footnote{%
  Importantly, even if the population of individuals spotted by police at a
  street corner is a (near) random sample of people living or working in the
  neighborhood, we still cannot think of race as being randomly assigned in that
  subset. In particular, spotted individuals may still differ on a variety of
  dimensions (e.g., socio-economic status) across race groups. As such, one
  would need to statistically account for these differences in any analysis that
  seeks to measure disparate treatment.
} The very idea of ``intermediate outcomes''---a concept
central to concerns about post-treatment bias---is a slippery notion in the
context of discrimination studies, where there is no clear point in time where
one can imagine that race is ``assigned.'' Even birth cannot be considered the
ultimate starting point since, in theory, one might include, at the least, the
race of a child's parents, determined at an earlier stage, when assessing
discrimination.\footnote{%
  In the case of biological sex, one might consider assignment to occur at
  conception, though that is typically not the primary moment of interest in
  studies of sex discrimination.
} Indeed, such generational counterfactuals may be critical for understanding
systemic, institutional discrimination.

Our discussion of discrimination in multi-staged, multi-decider scenarios
applies widely, but it is not universal. In particular, measuring discrimination
in a single-decider case---and, specifically, in officer use of force---is
challenging. In many of these single-decider scenarios, it is hard to imagine
intervening on race after the decision-making process begins, making it
difficult to isolate discrimination in later stages.

\section{Assessing Second-Stage Discrimination in a Stylized Scenario}
\label{sec:ex}

Subset ignorability, in theory, is sufficient to ensure nonparametrically
identified estimates of the \(\cdes\), even when the first-stage decisions are
discriminatory. We illustrate that idea by investigating in detail a
hypothetical scenario involving discriminatory arrest decisions in the first
stage and discriminatory charging decisions in the second stage. We explore the
properties of simple estimators in this setting through a simulation study. We
demonstrate that failing to adjust for a factor that directly influences
charging decisions can result in biased estimates of discrimination in those
decisions, but by accounting for all factors that directly influence charging
decisions---and hence satisfying subset ignorability---one can accurately
estimate the \(\cdes\), even when there is unmeasured confounding in arrest
decisions. This example further clarifies the conceptual importance of
distinguishing between an officer's perception of race and a prosecutor's
perception of race when defining and estimating our quantities of interest.

We consider a hypothetical jurisdiction in which police officers observe the
behavior and race of individuals who are potentially engaged in specific
criminal activity (e.g., a drug transaction) and then decide whether or not to
make an arrest. Subsequently, the case files of arrested
individuals---consisting of a written copy of the officer's description of the
encounter and the arrested individual's criminal history---are brought to a
prosecutor who decides whether or not to press charges. We assume the
prosecutor only observes the documented race and criminal record of the
arrestee, and the arresting officer's written description of the encounter;
accordingly, by construction, the charging decision depends only on these three
factors. For example, the prosecutor may choose only to charge individuals who
have several previous drug convictions and who were reported to be engaging in a
drug transaction. Importantly, while the prosecutor has access to an officer's
written report, the prosecutor does not directly observe the individual's
behavior leading up to the arrest.

Our goal is to estimate discrimination in charging decisions, formalized in
terms of the \(\cdes\). Intuitively, if we observe every arrested
individual's criminal history, race, and officer report, then subset
ignorability would hold because the prosecutor's charging decision depends only
on these factors. Thus, with these three covariates, we could generate valid
estimates of discrimination in prosecutorial decisions, even without knowing all
of the factors that led to an arrest, a decision that may itself have been
discriminatory. However, if any of these three covariates---criminal history,
race, or officer report---are unobserved, we will, in general, be unable to
accurately assess discrimination in prosecutorial decisions. In both scenarios,
with and without unmeasured confounding, our analysis is based on the
subpopulation of arrested individuals, where we note that the subsetting (i.e.,
arrest) is not influenced by the prosecutor's perception of race. In this
setting, the primary concern is thus omitted-variable bias, not post-treatment
bias.

We emphasize that we seek only to estimate discrimination in the second-stage
charging decision, not cumulative discrimination stemming from both the arrest
and charging decisions. In particular, while officer reports may represent an
inaccurate---and discriminatory---account of events, such discrimination is
distinct from that in the charging decision itself. Similarly, criminal
histories reflect a form of complex, long-term discrimination that we do not aim
to measure here. Alternative, and more expansive, notions of discrimination are
important to understand, but here we focus on assessing the prosecutor's narrow
contribution to inequities at a specific point in the process, a common
statistical objective closely tied to policy decisions and legal theories of
disparate treatment~\citep{jung2018omitted}.

\subsection{The data-generating process}
\label{sec:simulation}

\begin{figure}[t]
  \centering
    \begin{center}
    \begin{tikzpicture}[xscale = 3.5, yscale = 2, align = center]
      \node (race)      at (0, 0)
        {\(S\)\\{\footnotesize Self-Identified Race}};
      \node (behavior)  at (4/3, 1.8)
        {\(A\)\\{\footnotesize Behavior}};
      \node (race_o)    at (4/3, 0)
        {\(D\)\\{\footnotesize Officer-Perceived Race}};
      \node (arrest)    at (8/3, 0)
        {\colorbox{gray!75}{\(M\)}\\{\footnotesize Arrest}};
      \node (history)   at (2, -2)
        {\colorbox{gray!75}{\(X\)}\\{\footnotesize Criminal History}};
      \node (race_p)    at (8/3, -1)
        {\colorbox{gray!75}{\(Z\)}\\{\footnotesize Prosecutor-Perceived Race}};
      \node (report)    at (8/3, 1.8)
        {\colorbox{gray!25}{\(R\)}\\{\footnotesize Officer Report}};
      \node (charge)    at (4, 0)
        {\colorbox{gray!75}{\(Y\)}\\{\footnotesize Charge}};

      \draw[<->, dashed, bend left = 25]  (race)      to (behavior);
      \draw[->]                           (race)      to (race_o);
      \draw[<->, dashed, bend right = 30] (race)      to (history);
      \draw[->]                           (behavior)  to (arrest);
      \draw[->, bend left = 15]           (behavior)  to (report);
      \draw[->]                           (race_o)    to (arrest);
      \draw[line width = 5pt, white]      (race_o)    to (report);
      \draw[->]                           (race_o)    to (race_p);
      \draw[->]                           (race_o)    to (report);
      \draw[->]                           (race_o)    to (race_p);
      \draw[->, bend right = 30]          (history)   to (charge);
      \draw[->]                           (race_p)    to (charge);
      \draw[->]                           (arrest)    to (charge);
      \draw[->]                           (arrest)    to (report);
      \draw[->]                           (arrest)    to (race_p);
      \draw[->, bend left = 25]           (report)    to (charge);
    \end{tikzpicture}
  \end{center}
  \caption{\emph{%
      A causal DAG depicting our stylized example of arrest and charging
      decisions, where \(D\) represents the officer's perception of race, and
      \(Z\) represents the prosecutor's perception of race. Officer arrest
      decisions (\(M\)) are directly influenced by observed criminal behavior
      (\(A\)) and officer-perceived race (\(D\)); the officer reports of the
      encounters (\(R\)) are directly influenced by \(A\) and \(D\).
      Prosecutorial charging decisions are made for all arrested individuals,
      and are directly influenced by officer reports (\(R\)), criminal history
      (\(X\)), and prosecutor-perceived race (\(Z\)). Finally, an individual's
      self-identified race (\(S\)) influences the officer's perception of race
      (\(D\)), and is confounded with criminal history (\(X\)) and behavior
      (\(A\)). We consider two scenarios. The variables highlighted in dark gray
      (i.e., \(M\), \(Z\), \(X\), and \(Y\)) are always observed. In one
      scenario, the analyst also observes the officer report \(R\), highlighted
      in light gray, obtaining the full set of information available to the
      prosecutor; in the other, the analyst does not observe the officer report
      \(R\) (i.e., only \(M\), \(Z\), \(X\), and \(Y\) are observed), leading to
      omitted-variable bias.
  }}
\label{fig:sim-dag}
\end{figure}
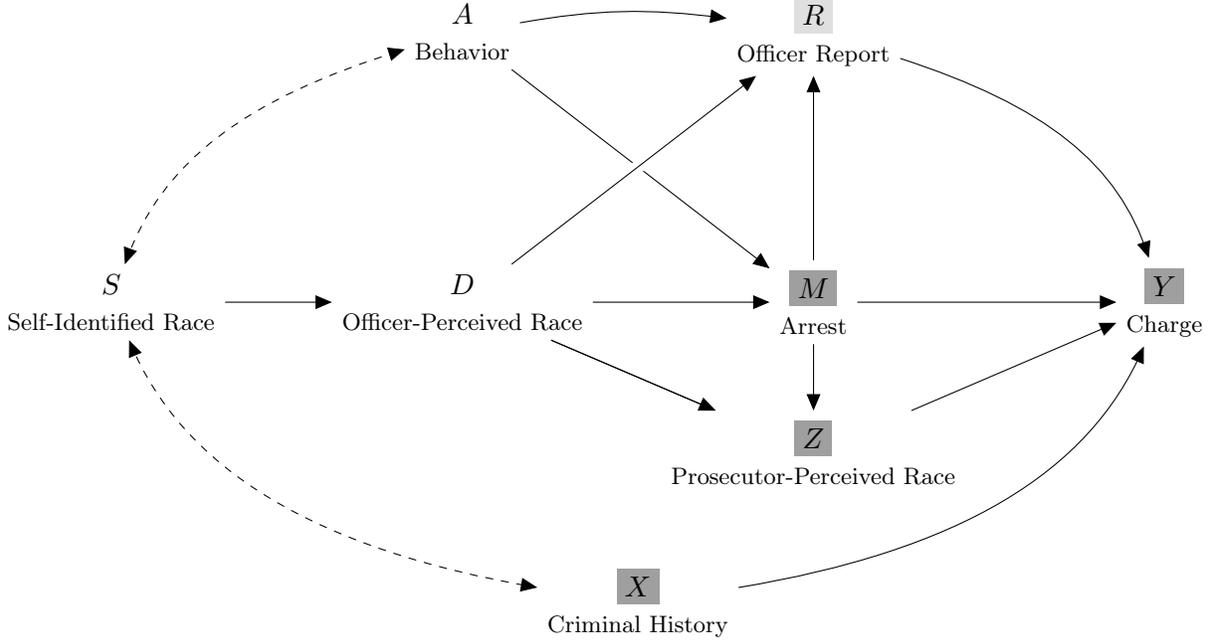

We now formally describe the data-generating process for our stylized example.
Under the structural causal model we consider, we can both compute the true
\(\cdes\) and compute estimates based only on select information available to
the prosecutor. In defining the generative process, we closely follow the
terminology and conventions of
\citet{pearl2009causality} and
\citet{pearl2016causal}.\footnote{%
  In particular, we follow \citet{pearl2009causality} in representing unobserved
  confounding by bidirectional dashed arrows; see Section~1.2.1. We do deviate
  from \citeauthor{pearl2009causality} in one aspect of our notation: we write
  counterfactuals as \(Y(z, m)\) instead of \(Y_{z, m}(u)\), suppressing the
  notational dependence on \(u\). The former notation aligns with the popular
  Rubin-Neyman potential outcome notation that we use when defining the
  \(\cdes\). We further note that this SEM is included primarily for
  illustrative purposes, and consequently contains some simplifications, such as
  strictly binary covariates. In practice, we recommend reasoning about subset
  ignorability and its relevant potential outcomes directly.
}

Our model is defined in terms of the causal directed acyclic graph (DAG)
depicted in Figure~\ref{fig:sim-dag}. In this model, \(S \in \{w,b\}\) indicates
one's self-identified race, and \(D\) and \(Z\) indicate, respectively, an
officer's and a prosecutor's perception of race. Further, \(M \in \{0,1\}\)
indicates the arrest decision, and \(Y \in \{0, 1\}\) indicates the charging
decision. Finally, \(A\) corresponds to an individual's behavior, as observed by
an officer, and \(X\) and \(R\) correspond, respectively, to criminal history
and an officer's description of an encounter, as included in the arrest report.
For simplicity, in our example these latter three variables are operationalized
as being binary---for example, one can imagine that \(X\) indicates whether an
individual had at least one previous drug conviction, \(A\) indicates whether
they were seen actively engaging in a drug transaction, and \(R\) indicates
whether they were reported by the officer to be actively engaging in a drug
transaction. Officers observe \(D\), \(A\), and \(R\) for all individuals;
prosecutors observe \(Z\), \(X\), and \(R\) only for the subset of arrested
individuals. Note that we also allow for \(Z\) and \(R\) to be missing (i.e., to
take the value \(\NA\)) in cases where an individual is not arrested.

Structural causal models are defined by a set of exogenous random variables and
deterministic structural equations specifying the values of all other variables
in the DAG. In our example, the independent exogenous variables are:
  \begin{align*}
    U_L & \sim \bern(\mu_{L}), \\
    U_A,\ U_M,\ U_X,\ U_R,\ U_Y & \sim \unif(0,1),
  \end{align*}
where \(\mu_L\) is an appropriately defined constant.

We define self-identified race (\(S\)), behavior (\(A\)), and criminal history
(\(X\)) in terms of \(U_L\), which captures latent confounding. For constants
\(\mu_A\), \(\gamma\), \(\mu_X\), and \(\delta\), the structural equations for
these three variables are given by:
  \begin{align*}
    f_S(u_L) & = \begin{cases}
        w & u_L = 0, \\
        b & u_L = 1,
    \end{cases} \\
        f_A(u_A, u_L) & = \B 1(u_A \leq \mu_A + \gamma \cdot u_L), \\
    f_X(u_X, u_L) & = \B 1(u_X \leq \mu_X + \delta \cdot u_L).
  \end{align*}
This specification allows for the distributions of criminal history and behavior
to vary by race due to exogenous factors like disparate police deployment and
historical discrimination. For example, stopped Black individuals may be less
likely to be engaged in criminal activity than stopped white individuals,
corresponding to \(\gamma < 0\).

In line with our discussion in Section~\ref{ssec:discrimination_model}, we set
the prosecutor's perception of race (\(Z\)) equal to the officer's perception of
race (\(D\)), and, for simplicity, we set both equal to one's self-identified
race (\(S\)). This choice yields the following structural equations:
  \begin{align*}
    f_D(s) & = s, \\
    f_Z(m, d) & = \begin{cases}
        d   & m = 1, \\
        \NA & m = 0. \\
    \end{cases}
  \end{align*}
Note that, when someone is not arrested, we represent the prosecutor's
perception of race as an explicit missing value. The arrest report, \(R\), is
treated similarly below.

Finally, for constants \(\alpha_0\), \(\alpha_A\), \(\alpha_{\text{black}}\),
\(\lambda_0\), \(\lambda_A\), \(\lambda_{\text{black}}\), \(\beta_0\),
\(\beta_X\), \(\beta_R\), and \(\beta_{\text{black}}\), the structural equations
for arrest decisions (\(M\)), police reports (\(R\)), and charging decisions
(\(Y\)) are given by:
  \begin{align*}
    &f_M(d, a, u_M) \\
      &\hspace{1cm}= \B 1(u_M \leq \alpha_0 + \alpha_A \cdot a +
      \alpha_{\text{black}} \cdot
      \B 1(d = b)), \\
    &f_R(d, a, m, u_R) \\
      &\hspace{1cm}= \begin{cases}
        \B 1(u_R \leq \lambda_0 + \lambda_A \cdot a + \lambda_{\text{black}}
          \cdot \B 1(d = b)) & m = 1, \\
        \mathrlap{\NA}{\hphantom{\B 1(u_Y \leq \beta_0 + \beta_X \cdot x +
          \beta_R \cdot r + \beta_{\text{black}} \cdot \B 1(z = b))}} & m = 0,
      \end{cases} \\
    &f_Y(z, m, r, x, u_Y) \\
      &\hspace{1cm}= \begin{cases}
        \B 1(u_Y \leq \beta_0 + \beta_X \cdot x + \beta_R \cdot r +
          \beta_{\text{black}} \cdot \B 1(z = b)) & m = 1 \land\, z \neq \NA
          \land\, r \neq \NA, \\
        \NA & m = 1 \land\, (z = \NA \lor\, r = \NA), \\
        0 & m = 0.
    \end{cases}
  \end{align*}
In particular, arrest decisions and police reports depend on an officer's
perception of race, whereas charging decisions depend on a prosecutor's
perception of race. This model incorporates both discrimination in arrest
decisions, via \(\alpha_{\text{black}}\), and discrimination in police
reports---e.g., by omitting potentially exculpatory details or by falsifying
information---via \(\lambda_{\text{black}}\). Discrimination in charging
decisions is encoded by \(\beta_{\text{black}}\).

The above structural equations, together with the distributions on the exogenous
variables, fully define the joint distribution of realized and potential
outcomes. In particular,
  \begin{align*}
    S &= f_S(U_L),          & D &= f_D(S),\\
    Z &= f_Z(M, D),         & A &= f_A(U_A, U_L),\\
    X &= f_X(U_X, U_L),     & M &= f_M(D, A, U_M),\\
    R &= f_R(D, A, M, U_R), & Y &= f_Y(Z, M, R, X, U_Y).
  \end{align*}

The primary causal quantity we seek to estimate---the \(\cdes\)---is defined in
terms of counterfactuals \(Y(z,m)\). As discussed in \cite{pearl2009causality}
and \cite{pearl2016causal}, such counterfactuals require some care to define, as
one must appropriately account for the exogenous variables \(U\). In particular,
for the causal DAG in Figure~\ref{fig:sim-dag}, the bivariate charge potential
outcomes, for counterfactual versions of prosecutor-perceived race, are given by
\(Y(z, m) = f_Y(z, m, R(m), X, U_Y)\), where \(R(m) = f_R(D, A, m, U_R)\) are
the counterfactual versions of the officer report. Further, the arrest potential
outcomes---where we consider counterfactual versions of officer-perceived
race---are given by \(M(d) = f_M(d, A, U_M)\). In general, counterfactuals
defined in this way obey the consistency rule, meaning that \(M = M(D)\) and \(Y
= Y(Z, M)\).

When \(\alpha_\text{black} \geq 0\), anyone who would be arrested if white would
also be arrested if Black (i.e., \(M(b) \geq M(w)\)). When
\(\alpha_\text{black} > 0\), we say arrest decisions are discriminatory since,
all else being equal, an individual is more likely to be arrested if they were
Black than if they were white. Likewise, \(Y(b, 1) \geq Y(w, 1)\) when
\(\beta_\text{black} \geq 0\), meaning that an individual who would be charged
if arrested and white would also be charged if arrested and Black. We say the
charging decision is discriminatory when \(\beta_\text{black} > 0\).

\begin{table}[t]
  \begin{center}
    \caption{\emph{%
      A sample of potential and realized outcomes for individuals in our
      hypothetical example. The data-generating process produces the full set of
      entries, but the prosecutor only observes the realized outcomes for those
      who were arrested, indicated by the shaded cells. In the first scenario we
      consider, the analyst also observes all the information in the shaded
      cells; in the second scenario, the analyst only observes the information
      in the dark gray cells (i.e., the analyst does not observe the officer
      report \(R\)), leading to omitted-variable bias.
    }}
    \label{tb:o}
    \begin{tabular}{@{\extracolsep{3pt}} cccccccccccccc}
      \\[-1.8ex]
      \hline \hline \\[-1.8ex]
      \(S\)
        & \(D\)
        & \(A\)
        & \(X\)
        & \(M(b)\)
        & \(M(w)\)
        & \(M\)
        & \(Z\)
        &\(R\)
        & \(R(0)\)
        & \(R(1)\)
        & \(Y(b, 1)\)
        & \(Y(w, 1)\)
        & \(Y\) \\
      \hline \\[-1.8ex]
      \(b\)
        & \(b\)
        & 0
        & 0
        & 0
        & 0
        & 0
        & \(\NA\)
        & \(\NA\)
        & \(\NA\)
        & 0
        & 0
        & 0
        & 0 \\
      \(b\)
        & \(b\)
        & 0
        & 1
        & 0
        & 0
        & 0
        & \(\NA\)
        & \(\NA\)
        & \(\NA\)
        & 1
        & 1
        & 0
        & 0 \\
      \(b\)
        & \(b\)
        & 1
        & \cellcolor{gray!75}1
        & 1
        & 0
        & \cellcolor{gray!75}1
        & \cellcolor{gray!75}\(b\)
        & \cellcolor{gray!25}0
        & \(\NA\)
        & 0
        & 1
        & 1
        & \cellcolor{gray!75}1 \\
      \(w\)
        & \(w\)
        & 0
        & \cellcolor{gray!75}0
        & 1
        & 1
        & \cellcolor{gray!75}1
        & \cellcolor{gray!75}\(w\)
        & \cellcolor{gray!25}1
        & \(\NA\)
        & 1
        & 0
        & 0
        & \cellcolor{gray!75}0 \\
      \(w\)
        & \(w\)
        & 0
        & 1
        & 0
        & 0
        & 0
        & \(\NA\)
        & \(\NA\)
        & \(\NA\)
        & 0
        & 0
        & 0
        & 0 \\
      \hline \\[-1.8ex]
    \end{tabular}

  \end{center}
\end{table}

\paragraph{Features of our data-generating process.}

Table~\ref{tb:o} displays a sample of five rows of data generated from our
model. From the full set of potential outcomes, we can compute the true
\(\cdes\) by directly applying Definition~1 to the generated data, taking the
average difference between \(Y(b, 1)\) and \(Y(w, 1)\) among arrested
individuals.\footnote{%
  Because \(Z\) and \(D\) are separately manipulable in our framing, this
  quantity---obtained by first subsetting on arrested individuals, and then
  computing the average difference between potential outcomes---can also be
  expressed in the \emph{do}-calculus: \(\cdes = \EE[Y \mid \Do(Z = b), M = 1] -
  \EE[Y \mid \Do(Z = w), M = 1]\). However, as is common in causal mediation
  analysis, if there were only one indecomposable treatment (e.g., if one
  instead imagined directly manipulating \(S\)) then the corresponding estimand
  could no longer be expressed using \emph{do}-operations alone
  \citep{pearl2009causality, pearl2015conditioning}. \label{fn:pearl}
}
However, given the simple linear form of our structural equations, a
straightforward calculation also shows that the \(\cdes\) is exactly equal to
\(\beta_{\text{black}}\).

Our hypothetical example captures three key features of real-world
discrimination studies. First, prosecutorial records do not contain all
information that influenced officers' first-stage arrest decisions (i.e.,
prosecutors only observe \(R\), not \(A\)). Second, our set-up allows for
situations where the arrest decisions are themselves discriminatory---those
where \(\alpha_\text{black} > 0\)---or the officer's report is discriminatory,
e.g., because of omission of exculpatory information or deliberate
falsification---those where \(\lambda_{\text{black}} > 0\). Third, the
prosecutor's records include the full set of information on which charging
decisions are based (i.e., \(Z\), \(X\), and \(R\)).

Among those who were arrested, the charging potential outcomes depend only on
one's criminal history (\(X\)) and the arrest report (\(R\)). In particular,
they do not depend on one's realized, prosecutor-perceived race (\(Z\)).
Consequently, \(Y(z, 1) \indep Z \mid X, R, M = 1\), meaning that the model
satisfies subset ignorability relative to \(X\) and \(R\). As a result, access
to \(X\) and \(R\), along with overlap, guarantees the stratified
difference-in-means is a consistent estimator of the \(\cdes\), even if one does
not have access to \(A\).\footnote{%
  In general, first-stage discrimination such as discriminatory arrest decisions
  or fabrication of evidence in arrest reports does not affect the consistency
  of the stratified difference-in-means estimator, since subset ignorability
  will continue to hold. Consistency may fail if discrimination is so extreme
  that overlap fails, e.g., if no white people are arrested.
} However, in general, \(Y(z, 1) \not \indep Z \mid X, M = 1\) (and, likewise,
\(Y(z, 1) \not \indep Z \mid R, M = 1\)), and so if one only has partial
information on charging decisions there is no guarantee the \(\cdes\) can be
consistently estimated.\footnote{%
  In the prosecutorial context, sufficiently diligent data gathering can
  mitigate this possibility; many offices maintain detailed case files, and we
  make use of such records in our empirical analysis in
  Section~\ref{sec:empirical}. In general studies of discrimination, it is
  important to ensure that decision factors are accurately captured and made
  available to analysts.
} Indeed, when there is such unmeasured
confounding in the prosecutor's decisions, one should expect biased estimates of
the \(\cdes\).

\subsection{Estimating the \texorpdfstring{\(\cdes\)}{CDE-Ob}}

Although the data-generating procedure produces the full set of potential
outcomes for each individual, the prosecutor only observes a subset of the
cells---realized outcomes for arrested individuals, highlighted in gray in
Table~\ref{tb:o}. While this circumscribes the causal effects one can
estimate---e.g., discrimination by police will no longer be identifiable in the
reduced dataset---one can still learn about the \(\cdes\). We explore the
performance of two statistical methods for estimating the \(\cdes\) based on
data observed by the prosecutor: the stratified difference-in-means estimator
described in Eq.~\eqref{eq:estimator}, and a regression-based estimator. We
apply each of these methods to two types of data: the full set of information
available to prosecutors (i.e., \(Y\), \(Z\), \(X\) and \(R\)), and an
incomplete dataset comprised only of \(Y\), \(Z\), and \(X\) (highlighted in
dark gray in Table~\ref{tb:o}), in which case we view \(R\) as an unmeasured
confounder.

One can compute the stratified difference-in-means estimate in three steps.
First,  partition arrested individuals into subsets that have the same value of
the available control variables (i.e., \(X\) and \(R\) in the complete data
setting, and \(X\) alone in the partial data setting). Second, on each resulting
subset, compute the average difference in charging rates between Black and white
individuals. Third, take a weighted average of these differences, where the
weights reflect the proportion of arrested individuals in each subset. In
addition, one can apply Eq.~\eqref{eq:sedim} to estimate the standard error of
this point estimate to generate confidence intervals.

The stratified difference-in-means estimator is theoretically appealing in that
it is guaranteed to yield consistent estimates of the \(\cdes\) when subset
ignorability and overlap hold. But the estimator can have high variance when the
dimension of the covariate space is high and the sample size is small. Thus, in
practice, it is common to model potential outcomes as a function of observed
covariates---also known as response surface modeling~\citep{hill2011bayesian}.
In particular, on the subset of arrested individuals, one can estimate the
\(\cdes\) via a parametric model that estimates observed charging decisions as a
function of the available information.

To demonstrate this latter approach, we use a linear probability model. In the
complete data setting, we have:
  \begin{align}
  \label{eq:linear-model}
    \EE[Y \mid Z, X, R] = \beta_0 +
    \beta_1 Z + \beta_2 X + \beta_3 R,
  \end{align}
where the model is fit on the full set of arrests seen by the prosecutor. Under
this model, the \(\cdes\) is approximated by the fitted coefficient
\(\hat{\beta}_1\), since that term captures the difference in charging potential
outcomes after adjusting for the observed covariates. For our specific stylized
example, the linear regression model in Eq.~\eqref{eq:linear-model} is in fact
perfectly specified---exactly mirroring the prosecutor's charging
decisions---and so we are guaranteed to obtain statistically consistent
estimates. In the partial data setting, where an analyst only has access to
\(X\), one must fit a reduced model that excludes \(R\):
  \begin{align}
  \label{eq:linear-model-incomplete}
    \EE[Y \mid Z, X] = \beta_0 +
    \beta_1 Z + \beta_2 X.
  \end{align}
In this case, \(\hat{\beta}_1\) in general yields a biased estimate of the
\(\cdes\), because of the omitted variable \(R\). The stratified
difference-in-means estimator will in general similarly yield a biased estimate
of the \(\cdes\) in this omitted-variable setting.

\subsection{Simulation results}
\label{ssec:sim_res}

We perform a simulation study to understand the properties of the above
estimators, varying our assumptions about discrimination and confounding. We
simulate 10,000 datasets of size 100,000 for each of 25 different parameter
settings. Each setting is defined as a combination of our two key
discrimination parameters, \(\alpha_\text{black}\) and \(\beta_\text{black}\),
where each parameter is allowed to take one of five values: 0.20, 0.25, 0.30,
0.35, and 0.40. Across all simulation settings, we assume the population of
individuals encountered by police is \(30\%\) Black (i.e., \(\mu_L = 0.3\));
that \(30\%\) of white individuals and \(40\%\) of Black individuals have a past
drug conviction, indicated by \(X\); and that \(30\%\) of white individuals and
\(20\%\) of Black individuals are seen engaging in a drug transaction, indicated
by \(A\).\footnote{%
  More specifically, the full set of parameters in our simulation was set as
  follows: \(\mu_L = 0.3, \mu_X = 0.3\), \(\mu_A = 0.3\), \(\delta = 0.1, \gamma
  = -0.1\), \(\alpha_0 = 0.1\), \(\alpha_A = 0.3\), \(\alpha_{\text{black}} \in
  \{0.2, 0.25, 0.3, 0.35, 0.4\}\), \(\lambda_0 = 0.2\), \(\lambda_A = 0.6\),
  \(\lambda_{\text{black}} = 0.1\), \(\beta_0 = 0.2\), \(\beta_X = 0.4\),
  \(\beta_R = 0.2\), and \(\beta_{\text{black}} \in \{0.2, 0.25, 0.3, 0.35,
  0.4\}\).
} These settings allow for a substantial amount of overlap across race groups
with regard to the key covariates.

\begin{figure}[t!]
  \begin{center}
    \includegraphics[height=3in]{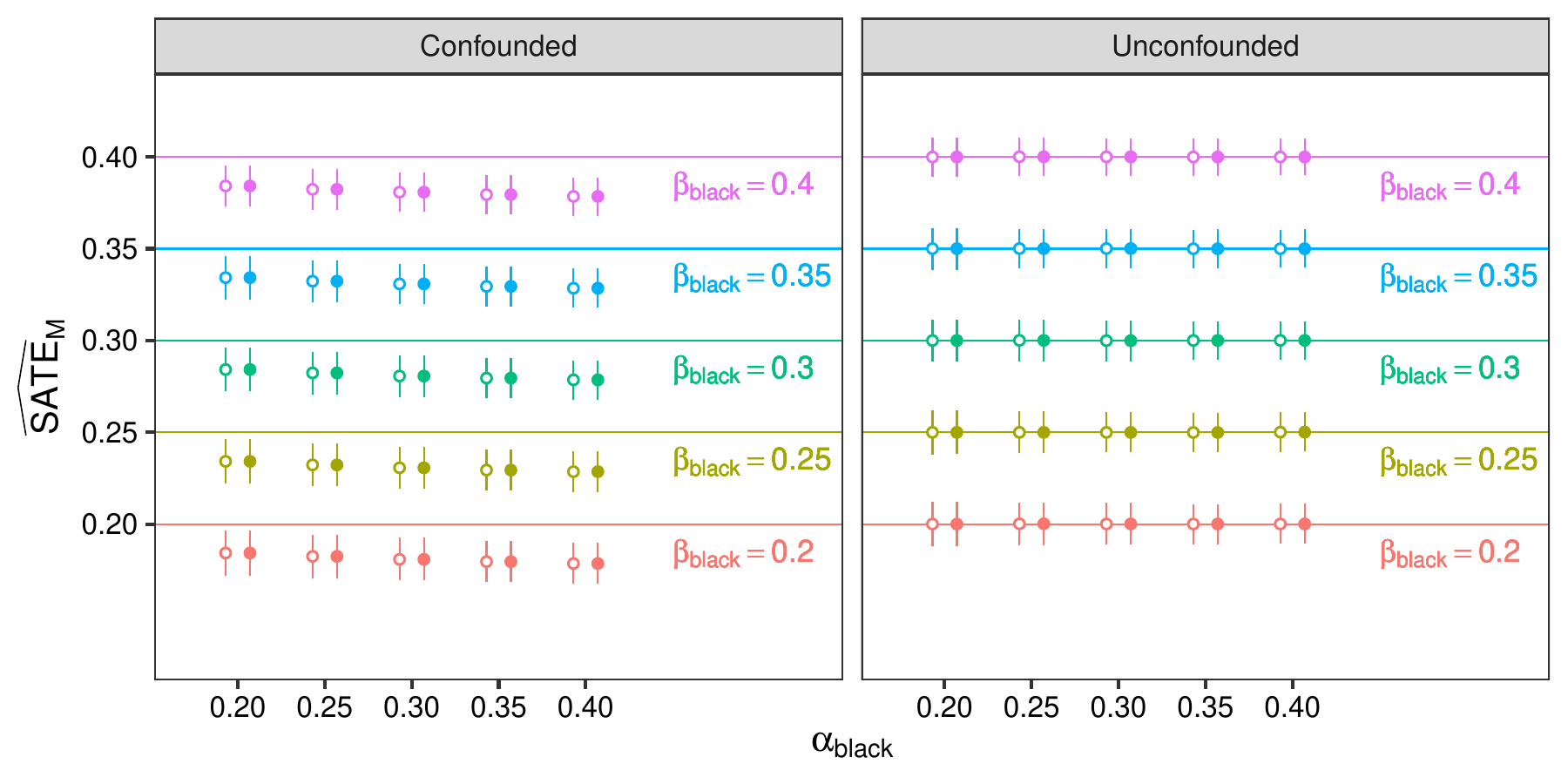}
    \caption{\emph{%
      In our hypothetical example of officer and prosecutor behavior, estimates
      of discrimination in charging decisions are biased when information
      directly influencing those decisions---in this case, an officer's
      report---is omitted (left). However, one can obtain accurate estimates of
      discrimination when accounting for all information directly influencing
      charging decisions (right). Each plot shows the results of 10,000
      simulations for each of 25 different combinations of discrimination in
      officer and prosecutor decisions, given by \(\alpha_\text{black}\) and
      \(\beta_\text{black}\), respectively. The true value of the \(\cdes\),
      indicated by the horizontal colored lines, is computed based on the full
      set of potential outcomes for each individual, and does not depend on the
      degree of discrimination in the first stage, as seen by the constant value
      of the \(\cdes\) across different values of \(\alpha_\text{black}\). For
      each parameter choice, we display the mean of the sampling distribution
      for the stratified difference-in-means estimator (solid circle) and the
      regression-based estimator (hollow circle), along with the interval
      spanned by the 2.5th and 97.5th percentiles of the sampling distribution.
      In the right plot (``unconfounded''), estimates are based on all three
      factors that directly influence charging decisions: race, criminal
      history, and officer report; in the left plot (``confounded''), we omit
      the report. When all variables directly influencing charging decisions are
      available, both estimators recover the true value of the \(\cdes\), even
      when there is an unknown degree of discrimination in arrest decisions.
    }}
    \label{fig:cdes_sampling}
  \end{center}
\end{figure}

On each synthetic dataset, we estimate the \(\cdes\) using both the stratified
difference-in-means estimator and the regression-based estimator, and compare
the results to the true population-level \(\cdes\) in two scenarios. To
illustrate the impact of omitted-variable bias, in the first scenario, we assume
the officer's report \(R\) is unavailable---meaning there is unmeasured
confounding---and therefore only stratify based on \(X\) in the
difference-in-means estimator, and fit the model in
Eq.~\eqref{eq:linear-model-incomplete} for the regression-based estimator. In
the second scenario, we assume that \(R\) is available, and stratify on both
\(X\) and \(R\) in the difference-in-means estimator, and fit the model in
Eq.~\eqref{eq:linear-model} for the regression-based estimator. For each
combination of \(\alpha_\text{black}\) and \(\beta_\text{black}\), the estimates
on the 10,000 synthetic datasets yield the approximate sampling distributions
for the difference-in-means and regression-based estimators. In
Figure~\ref{fig:cdes_sampling}, we summarize each sampling distribution by its
mean, 2.5th percentile, and 97.5th percentile. The solid points correspond to
the difference-in-means estimator, and the hollow points to the regression-based
estimator. The horizontal lines indicate the true population-level \(\cdes\).

In the left panel (``confounded'') of Figure~\ref{fig:cdes_sampling}, the points
lie below the horizontal lines in all cases, meaning we underestimate
discrimination in charging decisions. In this setting, estimates do not account
for the officer reports \(R\), and so there is unmeasured confounding in the
charging decisions. We set \(\gamma < 0\) in our simulations, and thus stopped
and arrested Black individuals are less likely to be engaging in criminal
activity, a pattern (noisily) reflected in the officer reports. Because we
assume these arrest reports are not available for analysis, we cannot fully
adjust for their direct influence on prosecutor decisions. As a result, by
adjusting for \(X\) alone, we miss an important, unmeasured difference between
arrested white and Black individuals, leading us to underestimate discrimination
in prosecutorial decisions.

In the right panel (``unconfounded'') of Figure~\ref{fig:cdes_sampling}, the
points lie on the horizontal lines in all cases, meaning the estimators are
unbiased, and the range between the 2.5th and 97.5th percentiles is relatively
narrow, indicating estimates are typically close to the true value. These
results hold even when one is unable to assess the degree of discrimination
\(\alpha_{\text{black}}\) in the arrest decisions. As implied by
Theorem~\ref{thm:main}, to accurately estimate the \(\cdes\), it is sufficient
to measure all covariates that directly influence the prosecutor's decisions. In
practice, it is nearly always impossible to do so perfectly; for instance,
decision factors such as forensic evidence may not be readily available, or
non-obvious factors, such as the time of day, may play a role in the
prosecutor's charging decision. Thus it is important to gauge the sensitivity of
estimates to unmeasured confounding in those decisions, as we demonstrate with
real-world data in Section~\ref{sec:empirical} below. The key point is that it
is sufficient to adjust for unmeasured confounding in the charging decisions
alone; to estimate discrimination in these charging decisions---formalized by
the \(\cdes\)---one need not account for unmeasured confounding in either the
documents generated by police, such as arrest reports, or the arrest decisions
themselves.

Finally, in addition to examining the sampling distributions, we assessed the
coverage of our \(95\%\) confidence intervals. For the difference-in-means
estimator, confidence intervals were constructed via the estimated standard
error given by Eq.~\eqref{eq:sedim}; and for the regression-based estimator, we
used the conventional OLS estimate of standard error. For each parameter
setting, we computed the proportion of confidence intervals for the 10,000
datasets that contained the true value of the \(\cdes\). In the no-confounding
scenario, we found the true coverage was in line with the nominal coverage,
ranging from \(94\%\) to \(96\%\) across parameter specifications. In the
confounding scenario, the intervals rarely covered the true values, as expected,
with coverage ranging from \(1\%\) to \(30\%\) across parameters.

\section{An Empirical Analysis of Prosecutorial Charging Decisions}
\label{sec:empirical}

We now apply the statistical framework developed above to assess possible race
and gender discrimination in real-world prosecutorial charging decisions. We
start with the set of individuals in a major U.S.\ county who were arrested for
a felony offense between 2013 and 2019. For our race-based analysis, we then
limit to the 25,918 instances in which the race of the arrested individual was
identified as either Black (14,686) or non-Hispanic white (11,232), and for our
gender-based analysis we limit to the 34,871 instances in which the gender of
the arrested individual was recorded as either male (29,283) or female
(5,588).\footnote{%
  Both Hispanic and non-Hispanic white individuals in our dataset appear to have
  been recorded simply as ``white''. To disentangle these two categories, we
  followed past work and imputed Hispanic ethnicity from
  surnames~\citep{word2008demographic, word1996building, OPP}.
}

Our dataset includes a variety of information about each case, including the
criminal history of the arrested individual; the alleged offenses (e.g.,
burglary); the location, date, and time of the incident; whether there is
body-worn camera footage; whether a weapon was involved; whether an elderly
victim was involved; and whether there was gang involvement. (See
Appendix~\ref{app:demographics} for additional details.) We also know the
ultimate charging decision for each case. Disaggregating by gender, \(51\%\) of
cases involving a male arrestee were charged, compared to \(45\%\) of cases
involving a female arrestee; and disaggregating by race, \(51\%\) of cases
involving a Black arrestee were charged, compared to \(50\%\) of cases involving
a white arrestee.

To gauge the extent to which charging decisions may suffer from disparate
treatment by race or gender, we estimate the \(\cdes\). We start by checking
that overlap is satisfied for both our race-based and our gender-based analyses.
Recall that overlap means \(\Pr(Z = z \mid X= x, M=1) > 0\), where \(Z=1\)
indicates an individual's ``treatment'' status (i.e., whether an individual is
male in our analysis of gender discrimination, or Black in our analysis of
racial discrimination), \(X\) is a vector of observed case features,  and \(M =
1\) means we restrict to those individuals who were arrested. In contrast to
ignorability, overlap can be assessed directly by examining the data. To do so,
we estimate propensity scores~\citep{rosenbaum1983central}, \(\Pr(Z = z \mid X=
x, M=1)\), via an \(L^1\)-regularized (lasso) logistic regression model. In
Figure~\ref{fig:overlap}, we plot the distribution of the estimated propensity
scores. In the left panel we disaggregate by gender, and in the right panel we
disaggregate by race (Black and white). In situations where overlap does not
hold, it is common to restrict one's analysis to a region of the covariate space
where it does hold. In our case, however, the vast majority of the data are
already far from the endpoints of the unit interval, so we work with the dataset
in its entirety.

\begin{figure}[t]
  \begin{center}
    \begin{subfigure}{.48\textwidth}
      \begin{center}
        \includegraphics[height=2in]{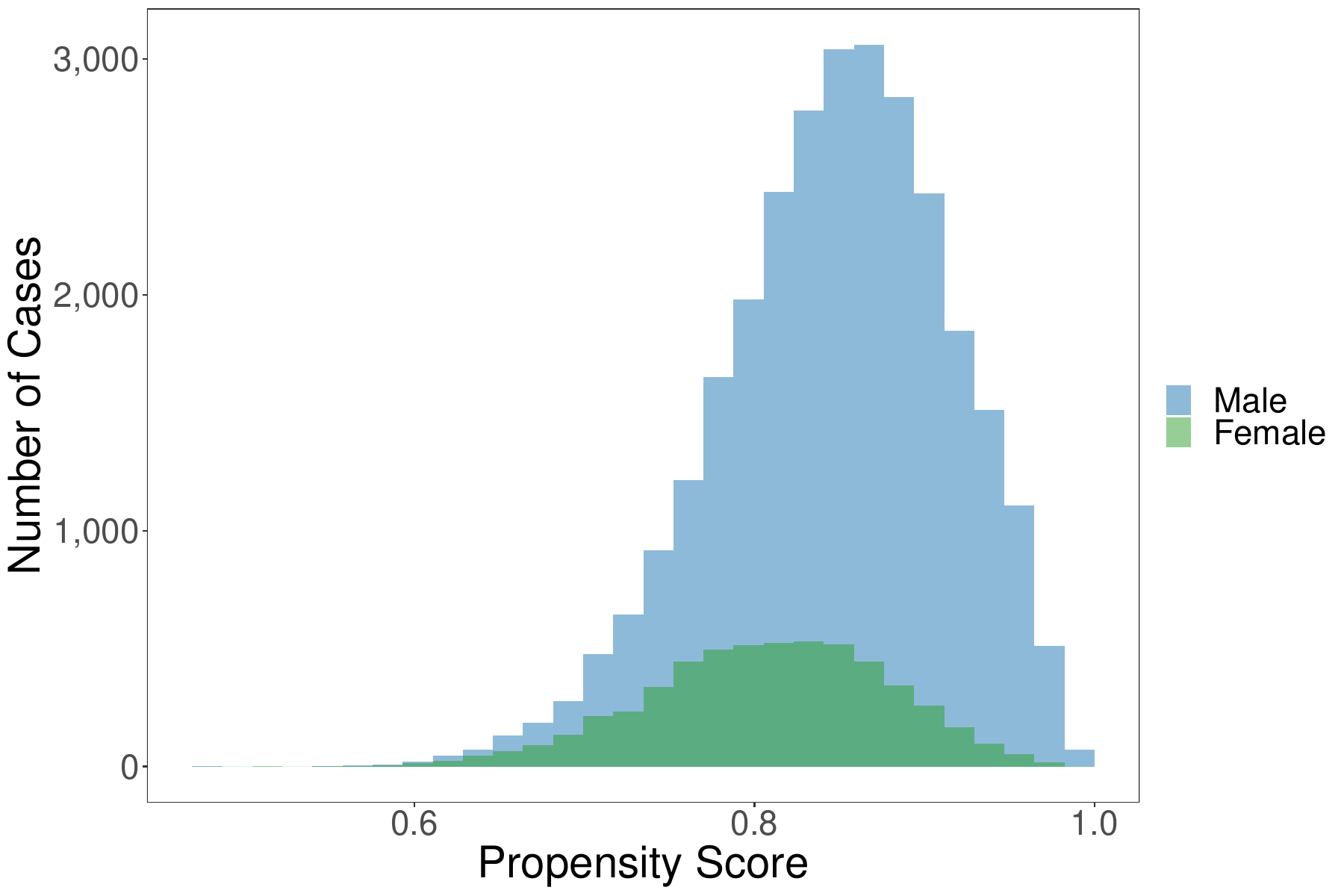}
        \caption{Gender-based analysis}
        \label{fig:gender_overlap}
      \end{center}
    \end{subfigure}
    \begin{subfigure}{.48\textwidth}
      \begin{center}
        \includegraphics[height=2in]{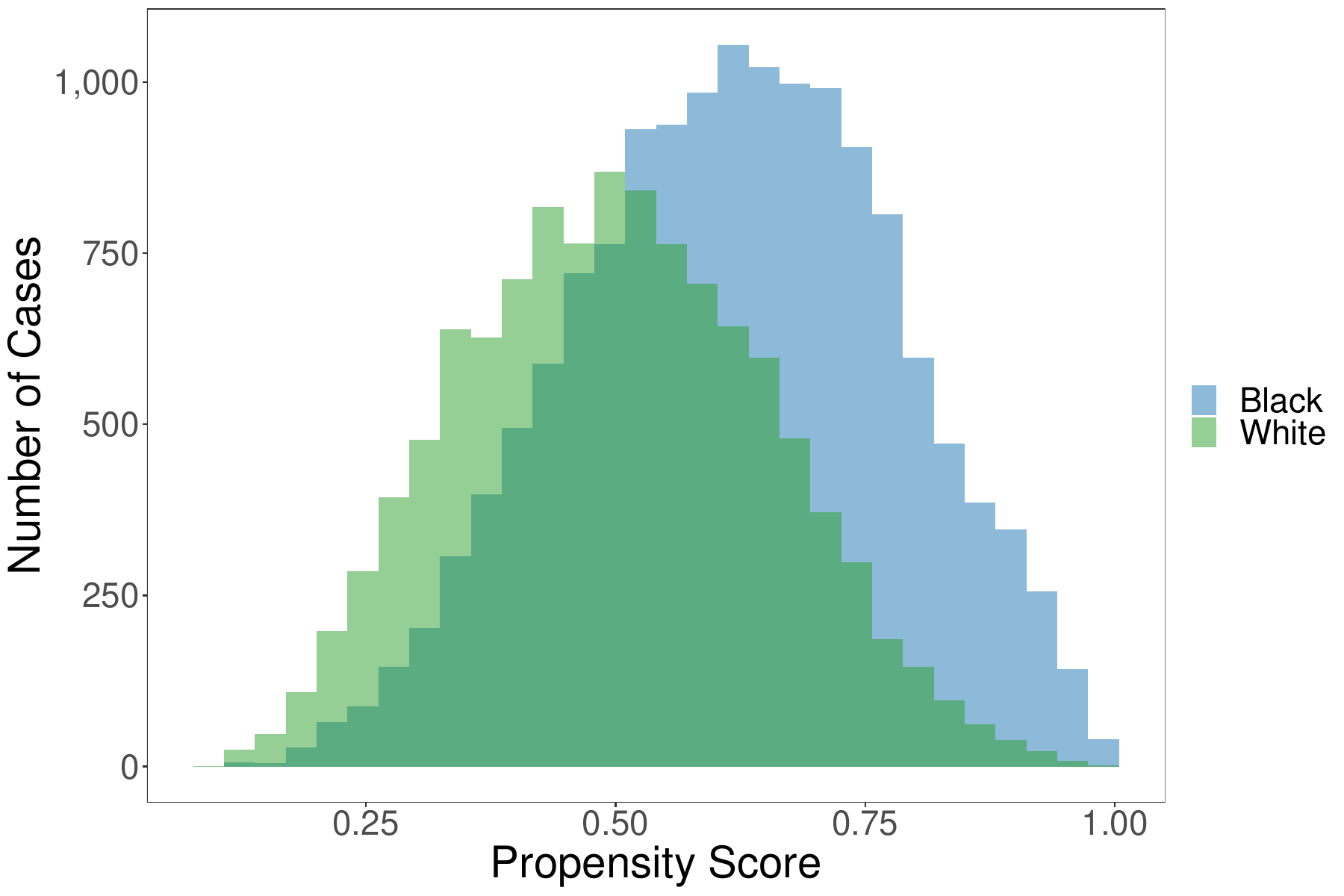}
        \caption{Race-based analysis}
        \label{fig:race_overlap}
      \end{center}
    \end{subfigure}
    \caption{\emph{%
      We plot, for both our gender-based (left) and race-based (right) analyses,
      the distribution of propensity scores, disaggregated by observed treatment
      status. We find that the propensity scores are concentrated away from the
      interval endpoints, satisfying overlap.
    }}
    \label{fig:overlap}
  \end{center}
\end{figure}

As discussed in Section~\ref{sec:ex}, regression-based estimators can be viewed
as a parametric variant of the stratified difference-in-means estimator
\(\Delta_n\). Thus, to help account for the high dimensionality of our feature
set, we now estimate the \(\cdes\) via linear regression. In particular, for
ease of interpretation, we use a linear probability model:
  \begin{equation}
  \label{eq:empirical_linear_model}
    \EE[Y \mid Z, X] = \beta_0 + \beta_1 Z + \beta_2^T X,
  \end{equation}
where \(Y\) indicates whether an arrested individual was charged, and \(X\)
denotes the vector of covariates.

In the gender model, we find that the \(\cdeshat\)---as given by
\(\hat{\beta}_1\)---is \(0.025\) (\(95\%\) CI: [\(0.014\), \(0.037\)]); and in
the race model, we find that the \(\cdeshat\) is \(-0.008\) (\(95\%\) CI:
[\(-0.018\), \(0.002\)]). These results indicate that the charging rate for men
is slightly higher than the rate for similar women, and that the charging rate
for Black individuals is on par with that of similar white individuals,
mirroring the patterns we saw with the raw, unadjusted charging rates. If there
are no unmeasured confounders (i.e., if subset ignorability holds) and our
parametric model is appropriate, these results suggest race and gender have a
relatively modest impact on charging decisions in the jurisdiction we consider.

To help contextualize these results, we note that past studies have found mixed
evidence of disparate treatment in prosecutorial charging decisions, likely due
in part to differences in the jurisdictions and time periods analyzed, and the
methods employed. In one of the most comprehensive investigations to date,
\citet{rehavi2014racial} examined nearly 40,000 individuals in the federal
criminal justice system from initial arrest to final sentencing. The authors
found that disparate treatment in prosecutorial charging
decisions---specifically for charges with statutory mandatory minimum
sentences---was a primary driver for sentencing disparities between Black and
white individuals. In contrast, in a recent experimental study,
\citet{robertson_race_2019} found no evidence of racial bias in charging
decisions when they presented prosecutors with vignettes in which the race of
the suspect was randomly varied. Similarly, in an observational analysis of
prosecutors at the San Francisco District Attorney's Office,
\citet{macdonald2017analysis} found little evidence of discrimination in
charging decisions---in fact, the authors found that white individuals were
charged slightly more often than similarly situated Black individuals. Finally,
in a recent quasi-random study of charging decisions at a large metropolitan
district attorney's office, \citet{chohlaswood2020blind} similarly found little
evidence of disparate treatment.

The AUC of our outcome model in Eq.~\eqref{eq:empirical_linear_model}
above---fit with all available covariates, including race and gender---is
\(86\%\), indicating that it can predict charging decisions well. Our model,
however, cannot capture all aspects of prosecutorial decision making, as at
least some information used by prosecutors (e.g., forensic evidence) is not
recorded in our dataset, meaning that subset ignorability likely does not hold
exactly. To check the robustness of our causal estimates to such unmeasured
confounding, one may use a variety of statistical methods for sensitivity
analysis~\citep{%
  rr, imbens2003sensitivity, carnegie2016assessing, dorie2016flexible,
  mccandless2007bayesian, mccandless2017comparison, jung2020bayesian,
  franks2019flexible%
}. At a high level, these methods posit relationships between
the unmeasured confounder and both the treatment variable (e.g., race or gender)
and the outcome (e.g., the charging decision), and then examine the sensitivity
of estimates under the model of confounding.

We apply a technique for sensitivity analysis recently introduced by
\citet{cinelli2018making}. In brief, their approach bounds the extent to which
a coefficient estimate in a linear model---like \(\hat{\beta}_1\) in
Eq.~\eqref{eq:empirical_linear_model}---might change if one were to refit the
model including an unmeasured confounder \(U\). More specifically, under the
extended model
  \begin{equation*}
    \EE[Y \mid Z, X, U] = \beta_0 + \beta_1 Z +  \beta_2^T X + \gamma U,
  \end{equation*}
\citeauthor{cinelli2018making} bound the change in \(\hat{\beta}_1\) in terms of
two partial \(R^2\) values: \(\rsqy\) and \(\rsqz\). These two values
respectively quantify how much residual variance in the outcome \(Y\) and
treatment \(Z\) is explained by \(U\). Formally, \(\rsqy\) is defined in terms
of the \(R^2\) of two linear regressions: one using all the covariates \(X\),
\(Z\), and \(U\) to estimate \(Y\) (\(R^2_{\text{full}}\)), and one excluding
\(U\) (\(R^2_{\text{red}}\)). Then, \(\rsqy = (R^2_{\text{full}} -
R^2_{\text{red}}) / (1 - R^2_{\text{red}})\). The quantity \(\rsqz\) is defined
analogously. As these partial \(R^2\) values increase, so does the amount by
which \(\hat{\beta}_2\) could change.

\begin{figure*}[t]
  \begin{center}
    \begin{subfigure}{.48\textwidth}
      \begin{center}
        \includegraphics[height=2.75in]{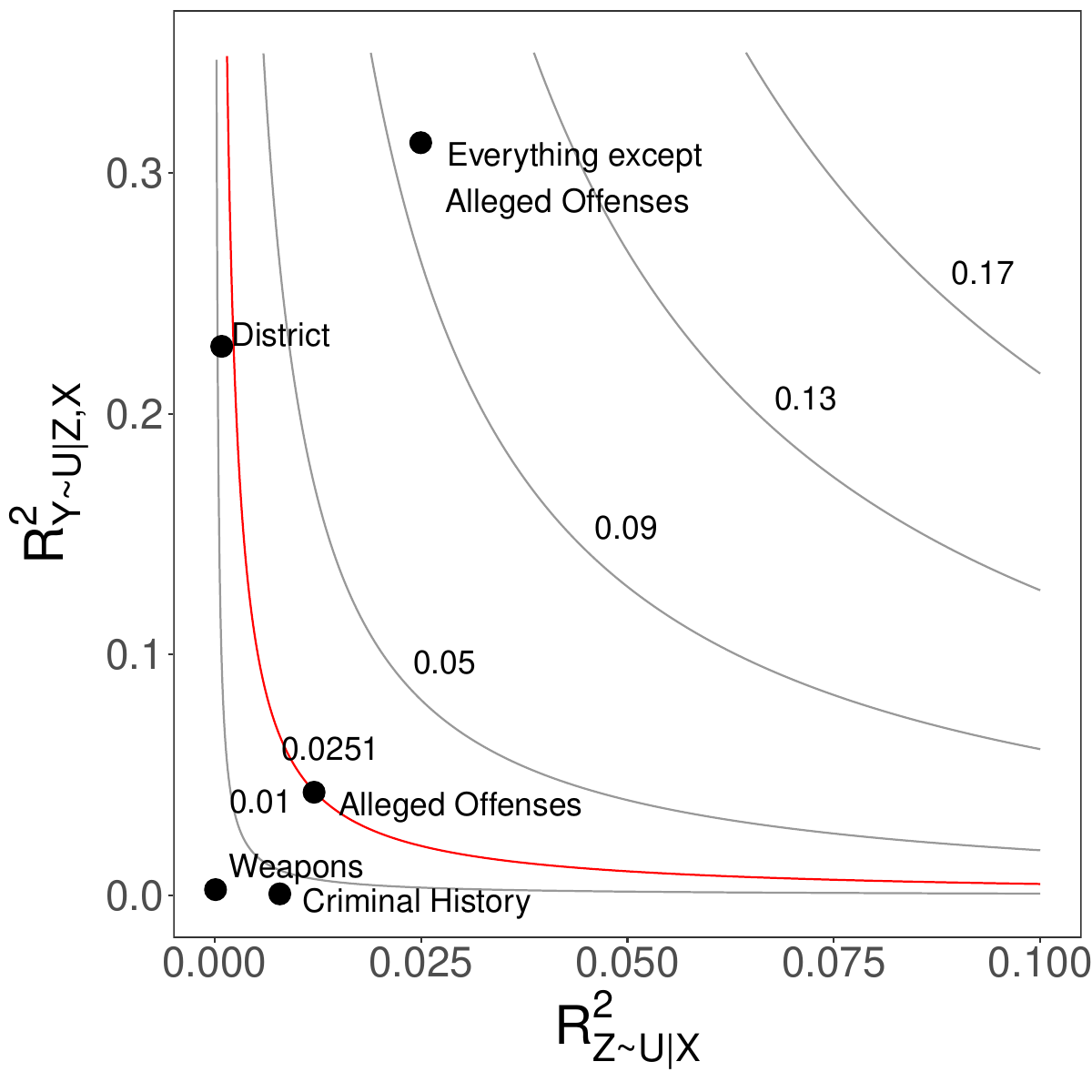}
        \caption{Gender-based analysis}
      \end{center}
    \label{fig:sens_gender}
    \end{subfigure}
    \begin{subfigure}{.48\textwidth}
      \begin{center}
        \includegraphics[height=2.75in]{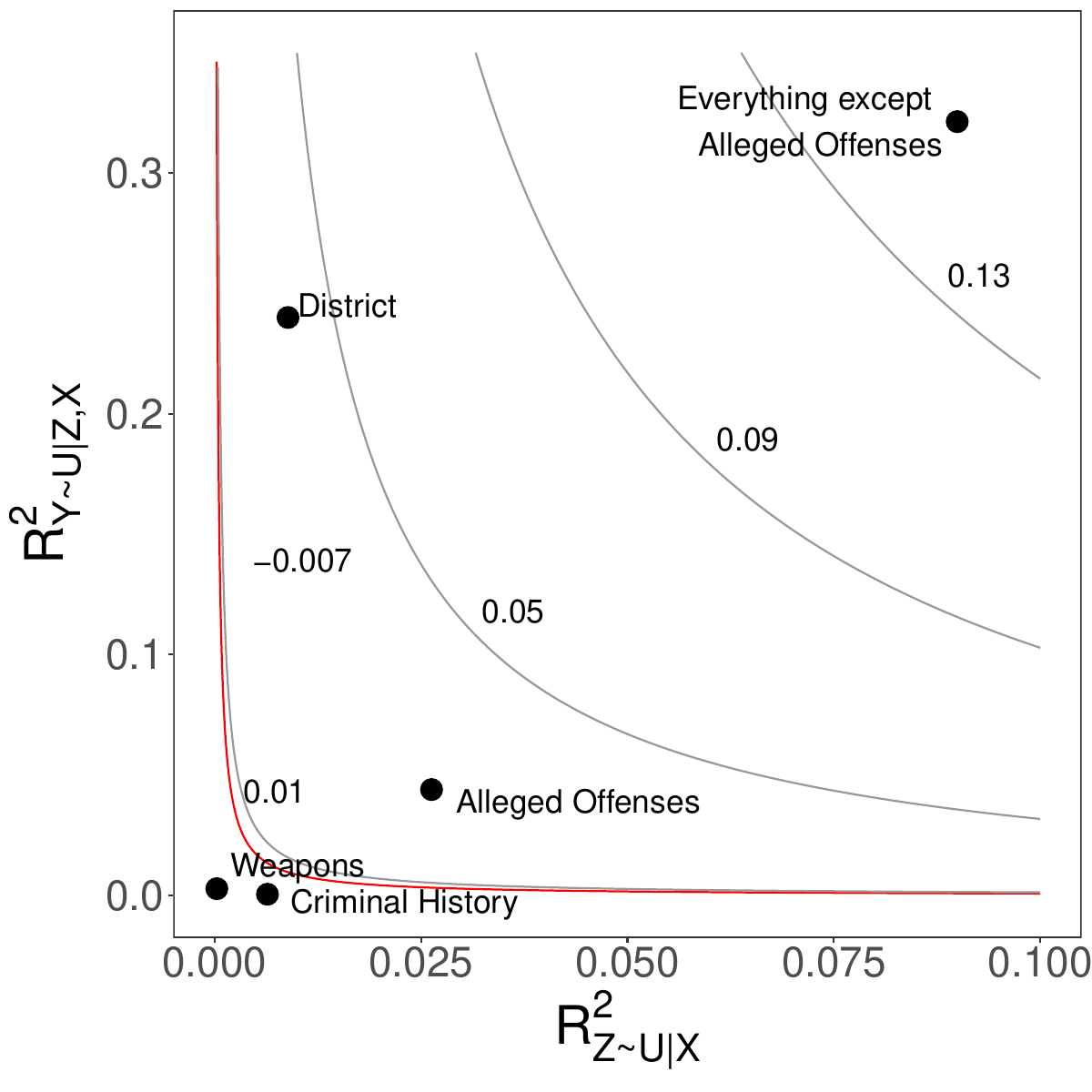}
        \caption{Race-based analysis}
      \end{center}
    \label{fig:sens_race}
    \end{subfigure}
    \caption{\emph{%
      Contour plots describing the sensitivity of the \(\cdeshat\) to unmeasured
      confounding, for our analysis of gender (left) and race (right). The plots
      indicate the maximum amount the \(\cdeshat\) may change under the
      \citet{cinelli2018making} model of confounding, parameterized by two
      partial \(R^2\) values. The red curves correspond to a change equalling
      the magnitude of the \(\cdeshat\) estimated from the available data. Thus,
      an unobserved confounder corresponding to a point above the red curve
      would be capable of changing the sign of our estimate. To aid
      interpretation, both plots display the partial \(R^2\) values associated
      with several observed subsets of covariates.
    }}
    \label{fig:sens}
  \end{center}
\end{figure*}

The contour plots in Figure~\ref{fig:sens} show the maximum amount by which the
\(\cdeshat\) may change as a function of \(\rsqy\) and \(\rsqz\) for our
analysis of gender and race---with that change potentially increasing or
decreasing the estimate. The red lines trace out values for which the maximum
change equals our empirical point estimates of the \(\cdeshat\). In particular,
an unmeasured confounder lying above the red line could be sufficient to change
the sign of our estimate.

A key hurdle in sensitivity analysis is positing a reasonable range for the
strength of a possible unmeasured confounder. To aid interpretation, we compute
the partial \(R^2\) values for various subsets of observed covariates, as
recommended by \citeauthor{cinelli2018making}. For each such subset, we fit the
regression model in Eq.~\eqref{eq:empirical_linear_model} both with and without
that subset, which in turn yields a pair of partial \(R^2\) values for that
subset of covariates.

The contour plots in Figure~\ref{fig:sens} contain these reference points for
five different subsets of covariates: (1) the subset describing criminal history
(e.g., number of prior convictions and number of prior arrests); (2) the alleged
offenses (e.g., burglary); (3) the subset of all covariates except for the
alleged offenses; (4) the district in which the alleged incident took place; and
(5) whether a weapon was alleged to have been used. We find that the partial
\(R^2\) values associated with criminal history and whether a weapon was used
are below the red curves for both our analysis of gender and race,
indicating that a confounder with comparable marginal explanatory power to these
covariates would not be sufficient to change the sign of our estimates.
However, the partial \(R^2\) values corresponding to the alleged offenses and
the district in which the charges were filed are near the red curve for our
gender-based analysis and far above the curve for our race-based analysis,
meaning that omitting a covariate with similar explanatory power could
qualitatively change our conclusions. Furthermore, the partial \(R^2\) values
corresponding to everything except the alleged offenses are far above the red
curve in both cases, suggesting that an unobserved confounder of similar
strength could again substantially alter our results. For instance, in this
extreme scenario, inclusion of a currently omitted confounder with similar
characteristics in the race-based analysis could yield an estimated treatment
effect of more than \(13\%\).

One cannot know the exact nature and impact of unmeasured confounding. Thus, as
in many applied statistical problems, we must rely in large part on domain
expertise and intuition to form reasonable conclusions. In this case, given the
results of our sensitivity analysis, we interpret our empirical findings as
providing evidence that perceived gender and race have limited effects on
prosecutorial charging decisions in the jurisdiction we consider. As with the
\(\cdes\), our sensitivity analysis is solely focused on discrimination in the
charging decision, and, in particular, is not designed to capture the cumulative
effects of discrimination stemming from arrests and other earlier decision
points.

\section{Discussion}
\label{sec:discussion}

We have outlined a formal causal framework to ground observational studies of
discrimination. We specifically showed that subset ignorability, together with
overlap, is sufficient to guarantee that one important causal measure of
discrimination (the \(\cdes\)) is nonparametrically identified in a canonical
two-stage decision-making setting. \emph{In this context}, we therefore believe
potential issues of post-treatment bias are more appropriately thought of as
concerns about omitted variables. Indeed, our treatment of interest---perception
of race by the second decision maker---occurs after the subsetting in the first
stage, and so it is not post-treatment relative to the selection process. As
such, we demonstrated that a traditional regression-based analysis can be used
to assess discrimination in real-world prosecutorial charging decisions, even
though the underlying arrests may have been discriminatory in unknown ways. In
that example---as in many applied settings---subset ignorability may only hold
approximately, and our empirical analysis illustrates the importance of
sensitivity analysis for robust inference.

Measurements of the \(\cdes\) can be an important step in quantifying
discrimination by specific decision makers at specific points in time. In our
running example, estimates of the \(\cdes\) can help identify prosecutors who
may be making systematically biased charging decisions. Identification of bias,
however, is only the first step toward reform. To mitigate identified
disparities, one could imagine a variety of interventions, such as training
programs \citep{spencer2016implicit}, or blinding prosecutors to the race of
arrested individuals \citep{chohlaswood2020blind}. As with all interventions,
care must be taken to ensure they do not have unintended consequences. Changes
in prosecutorial policies could have negative spillover, for example on
policing, or unexpected equilibrium effects, such as overall harsher charging
decisions.

The \(\cdes\) is but one way to characterize and inform interventions designed
to reduce discriminatory behavior. There are at least two broad notions of
discrimination, which approximately map to the legal concepts of disparate
treatment and disparate impact. Both involve causal interpretations, though with
key differences in the definition of the estimand. Disparate treatment concerns
the causal effect of race on outcomes---as we formalize here by the
\(\cdes\)---with behavior often driven by animus or explicit racial
categorization. Disparate impact, on the other hand, concerns the causal effect
of policies or practices on unjustified racial disparities, regardless of
intent. Disparate treatment and disparate impact both play important roles in
legal and policy discussions, and the perspective one adopts in any given
situation affects the choice of statistical estimation strategy and the
interpretation of results~\citep{jung2018omitted}.

We have throughout focused on the statistical foundations and measurement of
disparate treatment. In our primary example, we estimate---assuming subset
ignorability holds---that perceived race and gender have relatively small
effects on prosecutorial charging decisions in the jurisdiction we examine. We
further demonstrate that these estimates are moderately robust to potential
omitted-variable bias. However, that finding, in and of itself, does not mean
charging decisions are equitable in a broader sense. Consider, for example, the
1,637 cases in our data involving alleged possession of controlled substances by
Black or non-Hispanic white individuals. Of these, 748 cases (\(46\%\)) were
ultimately charged, and charging rates by race were nearly identical across race
groups, offering little prima facie evidence of disparate treatment. However,
among the 748 charged cases, 464 (\(62\%\)) involved a Black individual---far
exceeding the proportion of Black residents in the county we study. Charging
decisions for these cases thus impose a heavy burden on Black individuals, even
if those decisions were not tainted by animus. To the extent that prosecution of
drug crimes is misaligned with community goals, these decisions create an
unjustified, and discriminatory, disparate impact.

Rigorously estimating discrimination is a daunting task that requires careful
consideration. At an empirical level, it is often difficult to obtain detailed
data on individual decisions, in which case benchmark analysis may be
inadequate---even if coupled with sensitivity analysis. At a theoretical level,
we have a limited statistical language to make precise concepts such as animus
and implicit bias that are central to discrimination research. Further, as we
note above, past work has often framed discrimination as the causal effect of
race on behavior, but other conceptions of discrimination, such as disparate
impact, are equally important for assessing and reforming practices. Finally,
the conclusions of discrimination studies are generally limited to specific
decisions that happen within a long chain of potentially discriminatory actions.
Quantifying discrimination at any one point (e.g., in charging decisions) does
not yield estimates of specific or cumulative discrimination at other points
(e.g., in arrest decisions). Despite these important considerations, we hope our
work helps place discrimination research on more solid statistical footing, and
provokes further interest in the subtle conceptual and methodological issues at
the heart of discrimination studies.

\bibliographystyle{abbrvnat}
\bibliography{refs}

\begin{thebibliography}{54}
\providecommand{\natexlab}[1]{#1}
\providecommand{\url}[1]{\texttt{#1}}
\expandafter\ifx\csname urlstyle\endcsname\relax
  \providecommand{\doi}[1]{doi: #1}\else
  \providecommand{\doi}{doi: \begingroup \urlstyle{rm}\Url}\fi

\bibitem[Ayres(2002)]{ayres2002}
I.~Ayres.
\newblock Outcome tests of racial disparities in police practices.
\newblock \emph{Justice Research and Policy}, 4\penalty0 (1-2):\penalty0
  131--142, 2002.

\bibitem[Balsa et~al.(2005)Balsa, McGuire, and Meredith]{balsa2005testing}
A.~I. Balsa, T.~G. McGuire, and L.~S. Meredith.
\newblock Testing for statistical discrimination in health care.
\newblock \emph{Health Services Research}, 40\penalty0 (1):\penalty0 227--252,
  2005.

\bibitem[Baum and Goodstein(2005)]{baum2005gender}
S.~Baum and E.~Goodstein.
\newblock Gender imbalance in college applications: Does it lead to a
  preference for men in the admissions process?
\newblock \emph{Economics of Education Review}, 24\penalty0 (6):\penalty0
  665--675, 2005.

\bibitem[Berg and Lien(2002)]{berg2002measuring}
N.~Berg and D.~Lien.
\newblock Measuring the effect of sexual orientation on income: Evidence of
  discrimination?
\newblock \emph{Contemporary Economic Policy}, 20\penalty0 (4):\penalty0
  394--414, 2002.

\bibitem[Bertrand and Mullainathan(2004)]{bertrand2004emily}
M.~Bertrand and S.~Mullainathan.
\newblock Are {E}mily and {G}reg more employable than {L}akisha and {J}amal?
  {A} field experiment on labor market discrimination.
\newblock \emph{American Economic Review}, 94\penalty0 (4):\penalty0 991--1013,
  2004.

\bibitem[Billingsley(2008)]{billingsley2008probability}
P.~Billingsley.
\newblock \emph{Probability and Measure}.
\newblock John Wiley \& Sons, 2008.

\bibitem[Carnegie et~al.(2016)Carnegie, Harada, and
  Hill]{carnegie2016assessing}
N.~B. Carnegie, M.~Harada, and J.~L. Hill.
\newblock Assessing sensitivity to unmeasured confounding using a simulated
  potential confounder.
\newblock \emph{Journal of Research on Educational Effectiveness}, 9\penalty0
  (3):\penalty0 395--420, 2016.

\bibitem[Chohlas-Wood et~al.(2021)Chohlas-Wood, Nudell, Yao, Lin, Nyarko, and
  Goel]{chohlaswood2020blind}
A.~Chohlas-Wood, J.~Nudell, K.~Yao, Z.~Lin, J.~Nyarko, and S.~Goel.
\newblock Blind justice: Algorithmically masking race in charging decisions.
\newblock In \emph{Proceedings of the 2021 AAAI/ACM Conference on AI, Ethics,
  and Society}, pages 35--45, 2021.

\bibitem[Cinelli and Hazlett(2020)]{cinelli2018making}
C.~Cinelli and C.~Hazlett.
\newblock Making sense of sensitivity: Extending omitted variable bias.
\newblock \emph{J. R. Statist. Soc. B}, 2020.

\bibitem[Cornfield et~al.(1959)Cornfield, Haenszel, Hammond, Lilienfeld,
  Shimkin, and Wynder]{cornfield1959smoking}
J.~Cornfield, W.~Haenszel, E.~C. Hammond, A.~M. Lilienfeld, M.~B. Shimkin, and
  E.~L. Wynder.
\newblock Smoking and lung cancer: Recent evidence and a discussion of some
  questions.
\newblock \emph{Journal of the National Cancer Institute}, 22\penalty0
  (1):\penalty0 173--203, 1959.

\bibitem[Dorie et~al.(2016)Dorie, Harada, Carnegie, and
  Hill]{dorie2016flexible}
V.~Dorie, M.~Harada, N.~B. Carnegie, and J.~Hill.
\newblock A flexible, interpretable framework for assessing sensitivity to
  unmeasured confounding.
\newblock \emph{Statistics in Medicine}, 35\penalty0 (20):\penalty0 3453--3470,
  2016.

\bibitem[Edelman and Luca(2014)]{edelman2014digital}
B.~G. Edelman and M.~Luca.
\newblock Digital discrimination: The case of airbnb.com.
\newblock \emph{Harvard Business School NOM Unit Working Paper}, 14\penalty0
  (054), 2014.

\bibitem[Franks et~al.(2019)Franks, D’Amour, and Feller]{franks2019flexible}
A.~Franks, A.~D’Amour, and A.~Feller.
\newblock Flexible sensitivity analysis for observational studies without
  observable implications.
\newblock \emph{Journal of the American Statistical Association}, pages 1--33,
  2019.

\bibitem[Friedman et~al.(2001)Friedman, Hastie, and
  Tibshirani]{friedman2001elements}
J.~Friedman, T.~Hastie, and R.~Tibshirani.
\newblock \emph{The Elements of Statistical Learning}.
\newblock Springer Series in Statistics, New York, 2001.

\bibitem[Fryer~Jr(2019)]{fryer2019empirical}
R.~G. Fryer~Jr.
\newblock An empirical analysis of racial differences in police use of force.
\newblock \emph{Journal of Political Economy}, 127\penalty0 (3):\penalty0
  1210--1261, 2019.

\bibitem[Gelman et~al.(2007)Gelman, Fagan, and Kiss]{gelman2007analysis}
A.~Gelman, J.~Fagan, and A.~Kiss.
\newblock An analysis of the {New York City Police Department's}
  ``stop-and-frisk'' policy in the context of claims of racial bias.
\newblock \emph{Journal of the American Statistical Association}, 102\penalty0
  (479):\penalty0 813--823, 2007.

\bibitem[Gelman et~al.(2013)Gelman, Carlin, Stern, Dunson, Vehtari, and
  Rubin]{gelman2013bayesian}
A.~Gelman, J.~B. Carlin, H.~S. Stern, D.~B. Dunson, A.~Vehtari, and D.~B.
  Rubin.
\newblock \emph{Bayesian Data Analysis}.
\newblock CRC press, 2013.

\bibitem[Goel et~al.(2017)Goel, Perelman, Shroff, and
  Sklansky]{goel2017combatting}
S.~Goel, M.~Perelman, R.~Shroff, and D.~A. Sklansky.
\newblock Combatting police discrimination in the age of big data.
\newblock \emph{New Criminal Law Review: An International and Interdisciplinary
  Journal}, 20\penalty0 (2):\penalty0 181--232, 2017.

\bibitem[Goldin and Rouse(2000)]{goldin2000orchestrating}
C.~Goldin and C.~Rouse.
\newblock Orchestrating impartiality: The impact of ``blind'' auditions on
  female musicians.
\newblock \emph{American Economic Review}, 90\penalty0 (4):\penalty0 715--741,
  2000.

\bibitem[Greenberg et~al.(2016)Greenberg, Gershenson, and
  Desmond]{greenberg2016discrimination}
D.~Greenberg, C.~Gershenson, and M.~Desmond.
\newblock Discrimination in evictions: Empirical evidence and legal challenges.
\newblock \emph{Harv. CR-CLL Rev.}, 51:\penalty0 115, 2016.

\bibitem[Greiner and Rubin(2011)]{greiner2011causal}
D.~J. Greiner and D.~B. Rubin.
\newblock Causal effects of perceived immutable characteristics.
\newblock \emph{Review of Economics and Statistics}, 93\penalty0 (3):\penalty0
  775--785, 2011.

\bibitem[Grogger and Ridgeway(2006)]{grogger2006testing}
J.~Grogger and G.~Ridgeway.
\newblock Testing for racial profiling in traffic stops from behind a veil of
  darkness.
\newblock \emph{Journal of the American Statistical Association}, 101\penalty0
  (475):\penalty0 878--887, 2006.

\bibitem[Heckman(1979)]{heckman1979sample}
J.~J. Heckman.
\newblock Sample selection bias as a specification error.
\newblock \emph{Econometrica: Journal of the econometric society}, pages
  153--161, 1979.

\bibitem[Heckman and Durlauf(2020)]{heckman2020comment}
J.~J. Heckman and S.~N. Durlauf.
\newblock Comment on ``{A}n empirical analysis of racial differences in police
  use of force'' by {Roland G. Fryer Jr.}
\newblock \emph{Journal of Political Economy}, 2020.

\bibitem[Hill(2011)]{hill2011bayesian}
J.~L. Hill.
\newblock Bayesian nonparametric modeling for causal inference.
\newblock \emph{Journal of Computational and Graphical Statistics}, 20\penalty0
  (1):\penalty0 217--240, 2011.

\bibitem[Holland(1986)]{holland1986statistics}
P.~W. Holland.
\newblock Statistics and causal inference.
\newblock \emph{Journal of the American Statistical Association}, 81\penalty0
  (396):\penalty0 945--960, 1986.

\bibitem[Imai et~al.(2010{\natexlab{a}})Imai, Keele, and
  Tingley]{imai-2010-general}
K.~Imai, L.~Keele, and D.~Tingley.
\newblock A general approach to causal mediation analysis.
\newblock \emph{Psychol. Methods}, 15\penalty0 (4):\penalty0 309--334,
  2010{\natexlab{a}}.

\bibitem[Imai et~al.(2010{\natexlab{b}})Imai, Keele, and
  Yamamoto]{imai-2010-identification}
K.~Imai, L.~Keele, and T.~Yamamoto.
\newblock Identification, inference and sensitivity analysis for causal
  mediation effects.
\newblock \emph{Statist. Sci.}, 25\penalty0 (1):\penalty0 51--71,
  2010{\natexlab{b}}.

\bibitem[Imbens(2003)]{imbens2003sensitivity}
G.~W. Imbens.
\newblock Sensitivity to exogeneity assumptions in program evaluation.
\newblock \emph{The American Economic Review}, 93\penalty0 (2):\penalty0
  126--132, 2003.

\bibitem[Imbens and Rubin(2015)]{imbens2015causal}
G.~W. Imbens and D.~B. Rubin.
\newblock \emph{Causal Inference in Statistics, Social, and Biomedical
  Sciences}.
\newblock Cambridge University Press, 2015.

\bibitem[Jung et~al.(2018)Jung, Corbett-Davies, Shroff, and
  Goel]{jung2018omitted}
J.~Jung, S.~Corbett-Davies, R.~Shroff, and S.~Goel.
\newblock Omitted and included variable bias in tests for disparate impact.
\newblock \emph{arXiv preprint arXiv:1809.05651}, 2018.

\bibitem[Jung et~al.(2020)Jung, Shroff, Feller, and Goel]{jung2020bayesian}
J.~Jung, R.~Shroff, A.~Feller, and S.~Goel.
\newblock Bayesian sensitivity analysis for offline policy evaluation.
\newblock In \emph{Proceedings of the AAAI/ACM Conference on AI, Ethics, and
  Society}, pages 64--70, 2020.

\bibitem[Knox et~al.(2020)Knox, Lowe, and Mummolo]{knox-2019}
D.~Knox, W.~Lowe, and J.~Mummolo.
\newblock Administrative records mask racially biased policing.
\newblock \emph{American Political Science Review}, 2020.

\bibitem[MacDonald and Raphael(2021)]{macdonald2017analysis}
J.~MacDonald and S.~Raphael.
\newblock Effect of scaling back punishment on racial and ethnic disparities in
  criminal case outcomes.
\newblock \emph{Criminology \& Public Policy}, 2021.

\bibitem[McCandless and Gustafson(2017)]{mccandless2017comparison}
L.~C. McCandless and P.~Gustafson.
\newblock A comparison of {B}ayesian and {M}onte {C}arlo sensitivity analysis
  for unmeasured confounding.
\newblock \emph{Statistics in Medicine}, 2017.

\bibitem[McCandless et~al.(2007)McCandless, Gustafson, and
  Levy]{mccandless2007bayesian}
L.~C. McCandless, P.~Gustafson, and A.~Levy.
\newblock Bayesian sensitivity analysis for unmeasured confounding in
  observational studies.
\newblock \emph{Statistics in Medicine}, 26\penalty0 (11):\penalty0 2331--2347,
  2007.

\bibitem[Munnell et~al.(1996)Munnell, Tootell, Browne, and
  McEneaney]{munnell1996mortgage}
A.~H. Munnell, G.~M.~B. Tootell, L.~E. Browne, and J.~McEneaney.
\newblock Mortgage lending in {B}oston: Interpreting {HMDA} data.
\newblock \emph{The American Economic Review}, 86\penalty0 (1):\penalty0
  25--53, 1996.

\bibitem[Pearl(2009)]{pearl2009causality}
J.~Pearl.
\newblock \emph{Causality}.
\newblock Cambridge university press, 2009.

\bibitem[Pearl(2015)]{pearl2015conditioning}
J.~Pearl.
\newblock Conditioning on post-treatment variables.
\newblock \emph{Journal of Causal Inference}, 3\penalty0 (1):\penalty0
  131--137, 2015.

\bibitem[Pearl et~al.(2016)Pearl, Glymour, and Jewell]{pearl2016causal}
J.~Pearl, M.~Glymour, and N.~P. Jewell.
\newblock \emph{Causal Inference in Statistics: A Primer}.
\newblock John Wiley \& Sons, 2016.

\bibitem[Pierson et~al.(2020)Pierson, Simoiu, Overgoor, Corbett-Davies, Jenson,
  Shoemaker, Ramachandran, Barghouty, Shroff, Phillips, and Goel]{OPP}
E.~Pierson, C.~Simoiu, J.~Overgoor, S.~Corbett-Davies, D.~Jenson, A.~Shoemaker,
  V.~Ramachandran, P.~Barghouty, R.~Shroff, C.~Phillips, and S.~Goel.
\newblock A large-scale analysis of racial disparities in police stops across
  the {U}nited {S}tates.
\newblock \emph{Nature Human Behaviour}, 4\penalty0 (5), 2020.

\bibitem[Rehavi and Starr(2014)]{rehavi2014racial}
M.~M. Rehavi and S.~B. Starr.
\newblock Racial disparity in federal criminal sentences.
\newblock \emph{Journal of Political Economy}, 122\penalty0 (6):\penalty0
  1320--1354, 2014.

\bibitem[Richardson and Robins(2013)]{richardson2013swig}
T.~Richardson and J.~Robins.
\newblock Single world intervention graphs ({SWIGs}): A unification of the
  counterfactual and graphical approaches to causality.
\newblock Technical report, Technical Report 128, Center for Statistics and the
  Social Sciences, Univ. Washington, Seattle, WA., 2013.

\bibitem[Robertson et~al.(2019)Robertson, Baughman, and
  Wright]{robertson_race_2019}
C.~Robertson, S.~B. Baughman, and M.~S. Wright.
\newblock Race and {Class}: {A} {Randomized} {Experiment} with {Prosecutors}.
\newblock \emph{Journal of Empirical Legal Studies}, 16\penalty0 (4):\penalty0
  807--847, 2019.

\bibitem[Robins(1986)]{robins1986ffrcistg}
J.~Robins.
\newblock A new approach to causal inference in mortality studies with a
  sustained exposure period— application to control of the healthy worker
  survivor effect.
\newblock \emph{Mathematical Modelling}, 7\penalty0 (9):\penalty0 1393 -- 1512,
  1986.

\bibitem[Robins et~al.(2020)Robins, Richardson, and
  Shpitser]{robins2020interventionist}
J.~M. Robins, T.~S. Richardson, and I.~Shpitser.
\newblock An interventionist approach to mediation analysis.
\newblock \emph{arXiv preprint arXiv:2008.06019}, 2020.

\bibitem[Rosenbaum and Rubin(1983{\natexlab{a}})]{rosenbaum1983central}
P.~R. Rosenbaum and D.~B. Rubin.
\newblock The central role of the propensity score in observational studies for
  causal effects.
\newblock \emph{Biometrika}, 70\penalty0 (1):\penalty0 41--55,
  1983{\natexlab{a}}.

\bibitem[Rosenbaum and Rubin(1983{\natexlab{b}})]{rr}
P.~R. Rosenbaum and D.~B. Rubin.
\newblock Assessing sensitivity to an unobserved binary covariate in an
  observational study with binary outcome.
\newblock \emph{Journal of the Royal Statistical Society}, 45\penalty0
  (2):\penalty0 212--218, 1983{\natexlab{b}}.

\bibitem[Sekhon(2008)]{sekhonneyman}
J.~Sekhon.
\newblock The {N}eyman--{R}ubin model of causal inference and estimation via
  matching methods.
\newblock In \emph{The Oxford Handbook of Political Methodology}. Oxford
  University Press, 2008.

\bibitem[Spencer et~al.(2016)Spencer, Charbonneau, and
  Glaser]{spencer2016implicit}
K.~B. Spencer, A.~K. Charbonneau, and J.~Glaser.
\newblock Implicit bias and policing.
\newblock \emph{Social and Personality Psychology Compass}, 10\penalty0
  (1):\penalty0 50--63, 2016.

\bibitem[Tsybakov(2008)]{tsybakov2008introduction}
A.~B. Tsybakov.
\newblock \emph{Introduction to nonparametric estimation}.
\newblock Springer Science \& Business Media, 2008.

\bibitem[Word and Perkins(1996)]{word1996building}
D.~Word and C.~Perkins.
\newblock \emph{Building a Spanish Surname List for the 1990's: A New Approach
  to an Old Problem}.
\newblock Population Division, US Bureau of the Census Washington, DC, 1996.

\bibitem[Word et~al.(2008)Word, Coleman, Nunziata, and
  Kominski]{word2008demographic}
D.~Word, C.~Coleman, R.~Nunziata, and R.~Kominski.
\newblock Demographic aspects of surnames from {C}ensus 2000.
\newblock Technical report, U.S. Census Bureau Population Division, 2008.
\newblock URL
  \url{http://www2.census.gov/topics/genealogy/2000surnames/surnames.pdf}.

\bibitem[Zhao et~al.(2021)Zhao, Keele, Small, and Joffe]{zhao2020note}
Q.~Zhao, L.~J. Keele, D.~S. Small, and M.~M. Joffe.
\newblock A note on posttreatment selection in studying racial discrimination
  in policing.
\newblock \emph{American Political Science Review}, page 1–14, 2021.

\end{thebibliography}

\newpage
\appendix

\setcounter{table}{0}
\setcounter{figure}{0}
\setcounter{equation}{0}

\renewcommand\thetable{\thesection.\arabic{figure}}
\renewcommand\thefigure{\thesection.\arabic{figure}}
\renewcommand{\theequation}{\thesection.\arabic{equation}}

\section{A Comparison to Alternative Ignorability Conditions}
\label{sec:ignorability}

To better understand subset ignorability, we compare it to alternative
conditions that recently have been proposed in the context of discrimination
studies. In particular, we compare subset ignorability to a set of assumptions
introduced by \citet{knox-2019}, which they call treatment ignorability,
mediator ignorability, and mediator monotonicity. We show that this set of
assumptions, like subset ignorability, is sufficient---but not necessary---to
ensure the \(\cdes\) is nonparametrically identified by data on second-stage
decisions. Importantly, however, the \citeauthor{knox-2019}\ conditions are
unlikely to be satisfied in important examples of potentially discriminatory
decision making where subset ignorability holds (either exactly or
approximately) and the \(\cdes\) accordingly can be estimated, like those
situations presented in Sections~\ref{sec:ex} and \ref{sec:empirical}.

Aside from the \citeauthor{knox-2019}\ conditions, it is instructive to compare
subset ignorability to sequential ignorability~\citep{imai-2010-general,
imai-2010-identification}, a popular and often useful concept that was
introduced to formalize causal mediation analysis, and one that is closely
related to the \citeauthor{knox-2019}\ conditions. Sequential ignorability is
strictly stronger than subset ignorability, meaning that the former implies the
latter but that the converse does not hold. In the setting of discrimination
studies, there is little reason to believe sequential ignorability---or
reasonable approximations of it---would be satisfied, and we primarily discuss
the idea to clarify its distinction from subset ignorability.

The alternative ignorability conditions considered here were developed in the
context of a single treatment. Therefore, to facilitate a direct comparison
between subset ignorability and these alternatives, we adopt this
single-treatment perspective throughout the Appendix. As discussed in the main
text, there are substantive issues with positing a single manipulation of
(perceived) race, gender, or other immutable characteristics in many multi-stage
settings. Formally, however, it is straightforward to collapse \(Z\) and \(D\)
to a single treatment---which we call \(Z\)---that affects both the first-stage
and the second-stage decisions. In particular, we now assume the potential
outcomes \(M(z)\) and \(Y(z, m)\) satisfy the consistency relations \(M = M(Z)\)
and \(Y = Y(Z, M)\). We emphasize that in this new framing, the definition of
subset ignorability in Eq.~\eqref{eq:ci} remains the same and that
Theorem~\ref{thm:main} likewise holds unaltered---since neither explicitly
references the first-stage potential outcomes.\footnote{%
  As noted in Footnote~\ref{fn:pearl} above, since we have restricted to the
  context of a single treatment \(Z\), many of the quantities we consider are
  not expressible via the \emph{do}-calculus, though they are still expressible
  in terms of potential outcomes. We emphasize that these potential outcomes
  should be understood in the conventional sense~\citep{pearl2009causality}:
  \(Y(z, 1)\) represents what would have resulted for an individual if,
  counterfactually, one had intervened on \(M\) so that \(M = 1\) and \(Z\) so
  that \(Z = z\). Although directly manipulating the first-stage decision so
  that \(M = 1\) may be implausible in some situations---for instance, it may be
  challenging in practice to intervene on an arresting officer's decision---no
  issue arises in our setting as we are only concerned with the outcomes \(Y(z,
  1)\) for individuals who would be arrested in the absence of such an
  intervention (i.e., where it is already the case that \(M = 1\)). Moreover,
  while the FFRCISTG framework \citep{robins1986ffrcistg, richardson2013swig}
  may consider these to be  ``cross-world'' counterfactual quantities, we note
  that recent extensions of these frameworks discussed in
  \citet{robins2020interventionist} could accommodate our estimand and
  identifying assumptions by allowing for the race variable to be split into
  race variables that are time- and context-specific, as we did in the main body
  of the paper.
}

We start by formally considering sequential ignorability, following
\cite{imai-2010-general, imai-2010-identification}.

\begin{defn}[Sequential ignorability]
  We say that \emph{sequential ignorability} is satisfied when the following two
  conditional independence criteria hold:
    \begin{align}
      \label{eq:si1} \{Y(z', m), M(z)\} \indep Z & \mid X,\\
      \label{eq:si2} Y(z', m) \indep M & \mid Z, X,
    \end{align}
  for \(z, z' \in \{w, b\}\) and \(m\in\{0, 1\}\).
\end{defn}

The two key conditional independence assumptions we list are the same as in the
definition of sequential ignorability given by \cite{imai-2010-general,
imai-2010-identification}, but to facilitate direct comparison with other
ignorability criteria, we omit from our definition the accompanying overlap
conditions. Also, for ease of exposition, we present the definition in the
setting of binary treatment and mediator variables, though the original was more
general. In the context of our running example, sequential ignorability means
that: (1) conditional on the observed covariates \(X\), the potential outcomes
for charging \(Y(z,m)\) and arrest \(M(z)\) are jointly independent of an
individual's actual race \(Z\); and (2) conditional on the observed covariates
\(X\) and an individual's race \(Z\), the arrest decision \(M\) is independent
of the potential charging outcomes \(Y(z,m)\).

Theorem~\ref{thm:main-body}, below, shows that sequential ignorability implies
subset ignorability, but also, importantly, that sequential ignorability is a
strictly stronger condition. To understand why, consider the stylized model of
Section~\ref{sec:simulation}, in which one has all of the information that
drives a prosecutor's charging decision---satisfying subset ignorability---but
not all of the information that drives an officer's arrest decision. For
example, suppose the prosecutor has access to the officer's report, but not the
arrested individual's actual behavior. In this case, one would in general
expect the first condition of sequential ignorability---in
Eq.~\eqref{eq:si1}---to be violated. In particular, without detailed data on
what an officer observes, there is little reason to think the arrest potential
outcomes, \(M(z)\), would be independent of an individual's race, even
controlling for factors available to the prosecutor.

We next formally present the definitions of treatment ignorability, mediator
ignorability, and mediator monotonicity proposed by \citeauthor{knox-2019},
starting with treatment ignorability.
\begin{defn}[Treatment ignorability]
  \emph{Treatment ignorability} is the combination of the following two
  conditional independence criteria: for \(z, z' \in \{w,b\}\) and \(m \in
  \{0,1\}\),
    \begin{align}
      \label{eq:ti1} M(z) \indep Z & \mid X, \\
      \label{eq:ti2} Y(z', m) \indep Z & \mid M(w), M(b), X.
    \end{align}
\end{defn}

In the context of arrest and charging decisions, treatment ignorability means
that: (1) the potential outcomes for the arrest decision \(M(z)\) are
independent of race \(Z\), after conditioning on the observed covariates \(X\);
and (2) the potential outcomes for the charging decision \(Y(z', m)\) are
independent of race \(Z\) after conditioning on both the covariates \(X\) and
the arrest potential outcomes \(M(w)\) and \(M(b)\).

The first condition of treatment ignorability is similar to the first condition
of sequential ignorability, and it is unlikely to hold in our setting for the
same reason. In general, given only information about what motivates the
second-stage decision (e.g., charging, in our case) one cannot say much about
what occurs in the first stage (e.g., arrest). But, critically, such
information about the first stage is not necessary to estimate the \(\cdes\),
which only quantifies discrimination in the second-stage decision.
Theorem~\ref{thm:main} makes that statement precise, showing that subset
ignorability---which does not consider first-stage potential outcomes---is
sufficient to ensure the \(\cdes\) is nonparametrically identified by the
second-stage data.

The second criterion of treatment ignorability appears similar in spirit to
subset ignorability, but it conditions on the potential outcomes \(M(w)\) and
\(M(b)\) rather than on the actual outcome \(M\). In practice, that distinction
may not be too significant; in theory, however, the difference between the two
is large. As we show in Theorem~\ref{thm:main-body} below, treatment
ignorability alone---even with its strong assumption on the first stage---is not
sufficient to ensure the \(\cdes\) is identified by the second-stage data.

Finally, we consider mediator ignorability and the related mediator monotonicity
condition.

\begin{defn}[Mediator ignorability]
 For \(z \in \{w,b\}\) and \(m \in \{0,1\}\), \emph{mediator ignorability} is
 satisfied when
    \begin{equation}
    \label{eq:mediator_ignorability}
      Y(z,m) \indep M(w) \mid Z = z, M(b) = 1, X.
    \end{equation}
\end{defn}

\begin{defn}[Mediator monotonicity]
  \emph{Mediator monotonicity} is satisfied when
    \begin{equation}
    \label{eq:mediator_monotonicity}
      M(b) \geq M(w).
    \end{equation}
\end{defn}

In our running example, mediator ignorability means that the charging potential
outcomes \(Y(z,m)\) are independent of \emph{one} of the arrest potential
outcomes---\(M(w)\), the arrest decision for (counterfactually) white
individuals---conditional on the observed covariates \(X\), and among
individuals of race \(Z = z\), who would be arrested if they were Black. The
asymmetry in this condition stems from the additional mediator monotonicity
constraint considered by \citeauthor{knox-2019}: \(M(b) \geq M(w)\), meaning
that an individual who would be arrested if white would also be arrested if
Black. The monotonicity condition is perhaps intuitively plausible given our
understanding of racial discrimination, but the conditional independence
assumption of mediator ignorability appears harder to interpret.

Having introduced the key definitions, we now present our main analytic
result, Theorem~\ref{thm:main-body}, which summarizes and formalizes our
discussion of the various ignorability assumptions and their connections to
estimating discrimination. In particular, we show that sequential ignorability
is a strictly stronger assumption than subset ignorability, and recapitulate
(from Theorem~\ref{thm:main}) that subset ignorability is a sufficient condition
for the difference-in-means estimator \(\Delta_n\) to yield consistent estimates
of the \(\cdes\). Further, we show that treatment ignorability is not a
necessary condition for \(\Delta_n\) to yield consistent estimates. We show this
by explicitly constructing examples for which \(\Delta_n
\stackrel{\text{a.s.}}{\rightarrow} \cdes\), but which violate the treatment
ignorability condition. We additionally show that treatment ignorability is not
a sufficient condition to guarantee consistency, despite its formal resemblance
to the (sufficient) subset ignorability condition. To do so, we construct a
family of observationally equivalent examples that satisfy treatment
ignorability but which have different values of the \(\cdes\). Accordingly, no
estimator, including \(\Delta_n\), can yield a consistent estimate of the
\(\cdes\) for every instance in the family. Importantly, the more conventional
assumption of subset ignorability is sufficient to ensure the \(\cdes\) can be
identified from data on the second-stage decisions.

\begin{thm}
\label{thm:main-body}
  Assume overlap holds, meaning that \(\Pr(Z = z \mid X = x, M = 1) > 0\) for
  all \(x\) and \(z\). Then we have the following collection of implications
  and non-implications:
    \begin{equation*}
      \noindent\makebox[\textwidth]{%
        \begin{tikzpicture}[
          style = {font = \upshape, align = center},
          scale = 4,
          >= to,
          line width = 2pt,
          execute at begin node= \setlength{\baselineskip}{1em},
          impl/.style  = {-{Latex[length=3mm,width=5mm]}},
          nimpl/.style = {-{Latex[length=3mm,width=5mm]}, color = black!40,
              decoration = {markings, mark=at position 0.5 with {
                  \draw [red, line width = 1pt, -]
                  ++ (-4pt, -4pt)
                  -- (4pt, 4pt);}},
              postaction = decorate
          }
        ]
          \node[text width=3cm] (SI) at (.1, 1)
            {Sequential\\Ignorability};
          \node[text width=3cm] (CI) at (1.4, 1)
            {Subset\\Ignorability};
          \node[text width=5cm] (O)  at (2.9, 1)
            {\(\Delta_n\) is a consistent\\estimator of the \(\cdes\)};
          \node[text width=3cm] (TI) at (2.9, 1.6)
            {Treatment\\Ignorability};
          \node[text width=5cm] (TI+) at (2.9, .35)
            {%
              Treatment Ignorability,\\Mediator Ignorability, and\\Mediator
              Monotonicity%
            };

          \draw[impl]  (SI.10) to (CI.170);
          \draw[nimpl] (CI.188) to (SI.352);

          \draw[impl]  (CI.9)  to (O.174);
          \draw[nimpl] (O.186)  to (CI.350);

          \draw[impl] (TI+.114) to (O.238);
          \draw[nimpl] (O.297) to (TI+.70);

          \draw[nimpl] (O.121)  to (TI.238);
          \draw[nimpl] (TI.296)  to (O.63);
        \end{tikzpicture}
      }
    \end{equation*}
\end{thm}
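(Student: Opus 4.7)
The statement packages three positive implications and five non-implications. The plan is to tackle the implications first via straightforward graphoid-style manipulations, then construct explicit counterexamples (or observationally equivalent families of DGPs) for each of the non-implications; I will reuse the stylized DAG of Section~\ref{sec:simulation} wherever possible since it already exhibits the key asymmetry between what drives first- and second-stage decisions.

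For Sequential Ignorability $\Rightarrow$ Subset Ignorability, I would marginalize Eq.~\eqref{eq:si1} over $M(z)$ to obtain $Y(z,1) \indep Z \mid X$, then combine this with Eq.~\eqref{eq:si2} using the composition and weak union graphoid axioms to conclude $Y(z,1) \indep Z \mid X, M$, which implies Eq.~\eqref{eq:ci} after conditioning on $M = 1$. Subset Ignorability $\Rightarrow$ $\Delta_n$ is consistent is exactly the content of Theorem~\ref{thm:main} and Corollary~\ref{cor:main}, so nothing new is needed. For TI $+$ MI $+$ MM $\Rightarrow$ $\Delta_n$ consistent, I would follow the Knox et al.\ identification argument: under mediator monotonicity, observing $M = 1$ partitions individuals into principal strata $\{M(w)=1\}$ and $\{M(w)=0, M(b)=1\}$; mediator ignorability then lets one express $\EE[Y(z,1) \mid X, M=1]$ as a weighted combination of conditional means that, by treatment ignorability, are identified from the observational joint distribution. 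Plugging back into the stratified sum yields exactly the formula appearing in Theorem~\ref{thm:main}, so $\Delta_n$ converges to the $\cdes$.

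For the non-implications, I would proceed as follows. First, the stylized simulation of Section~\ref{sec:simulation}, with the officer report $R$ included in the adjustment set, satisfies subset ignorability relative to $(X, R)$ but violates both sequential ignorability and treatment ignorability, because the arrest potential outcomes $M(z)$ depend on the unobserved behavior $A$ through $D$, so $M(z) \not\indep Z \mid X, R$; this one construction simultaneously witnesses the non-implications CI $\not\Rightarrow$ SI, O $\not\Rightarrow$ TI, and O $\not\Rightarrow$ TI$+$MI$+$MM. For TI $\not\Rightarrow$ $\Delta_n$ consistent, I would exhibit two DGPs that induce the same joint distribution of $(X, Z, M, Y)$ and both satisfy treatment ignorability, but that assign different joint distributions to the counterfactuals $Y(z, 1)$ on the subpopulation $\{M = 1\}$, yielding different values of the $\cdes$; the construction exploits the fact that Eq.~\eqref{eq:ti2} conditions on the pair $(M(w), M(b))$ rather than on $M$, leaving the cross-world dependence between $Y(z,1)$ and $M$ free to be set arbitrarily. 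Since any estimator is a function of the observed distribution, $\Delta_n$ cannot converge to the $\cdes$ for both members of the family. Finally, for O $\not\Rightarrow$ CI, I would engineer a two-stratum example in which subset ignorability fails in each stratum but the stratum-specific biases have opposite signs and cancel in the weighted sum defining $\Delta_n$.

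The main obstacle will be the TI $\not\Rightarrow$ consistency step, since it is genuinely a non-identification argument rather than a conditional-independence manipulation: I need a clean parameterization in which the cross-world distribution of $(Y(z,1), M(w), M(b))$ given $X$ can be varied while holding the observational law fixed and while preserving both conditions of treatment ignorability. A natural route is to work with a binary $X$ and specify the full joint over $(X, Z, M(w), M(b), Y(w,1), Y(b,1))$ via a small number of free parameters, pinning down those determining the observed law and letting the remaining cross-world parameters vary; verifying Eqs.~\eqref{eq:ti1} and \eqref{eq:ti2} throughout, and confirming that different values of the $\cdes$ arise, is the calculation that requires the most care.
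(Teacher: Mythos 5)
Your plan is correct in outline and, for the load-bearing parts, matches the paper's own proof: the graphoid derivation of subset ignorability from sequential ignorability, the appeal to Theorem~\ref{thm:main} for the middle implication, the principal-strata decomposition of \(\{M=1\}\) under mediator monotonicity for the Knox et al.\ conditions, and---most importantly---the recognition that treatment ignorability \(\not\Rightarrow\) consistency must be proved as a \emph{non-identification} result, by exhibiting an observationally equivalent family with varying \(\cdes\) that exploits the fact that Eq.~\eqref{eq:ti2} conditions on \((M(w),M(b))\) rather than on \(M\); that diagnosis is exactly the paper's construction (a one-parameter family with \(\cdes=(1-\alpha)/3\)). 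Two divergences are worth flagging. First, a small but real technical slip: combining \(Y(z,m)\indep Z\mid X\) with \(Y(z,m)\indep M\mid Z,X\) is an application of \emph{contraction}, not composition---composition is not a valid inference rule for general distributions, and its premises are not what you have anyway---after which weak union gives \(Y(z,m)\indep Z\mid M,X\) as desired. Second, where the paper builds minimal bespoke counterexamples (e.g., \(Y(z,1)\equiv 1\), \(X\equiv 1\), with \(M(z)\) depending on \(Z\)) that make the violations of sequential and treatment ignorability immediate and that witness three non-implications at once, you propose reusing the Section~\ref{sec:simulation} simulation; that works, but it obligates you to actually verify \(M(z)\not\indep Z\mid X,R\) there, which is more computation than the minimal examples require. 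Likewise, your two-stratum cancellation for ``consistency \(\not\Rightarrow\) subset ignorability'' is a valid mechanism, though the paper achieves the same effect within a single stratum by letting the biases in the \(Z=b\) and \(Z=w\) arms cancel. None of this affects correctness, but the treatment-ignorability counterexample is the step you rightly identify as delicate and would still need to be carried out explicitly.
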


\begin{proof}

Theorem~\ref{thm:main} shows that subset ignorability implies that \(\Delta_n\)
is a consistent estimator of the \(\cdes\). We show the remaining seven
implications and non-implications in turn, starting with the claim that
sequential ignorability implies subset ignorability. In particular, we prove
that the conjunction of treatment ignorability, mediator ignorability, and
mediator monotonicity implies that \(\Delta_n\) is a consistent estimator of the
\(\cdes\)---a fact initially suggested by \citeauthor{knox-2019}

\begin{cas}[Sequential ignorability implies subset ignorability]
\label{cas:SIimpCI}
  The first condition of sequential ignorability, in Eq.~\eqref{eq:si1},  states
  that \(Y(z, m)\) and \(M(z')\) are jointly independent of \(Z\) given \(X\):
  \(\{Y(z, m), M(z')\} \indep Z \mid X\). From this, it immediately follows
  that \(Y(z, m)\) alone is independent of \(Z\) given \(X\): \(Y(z, m) \indep Z
  \mid X\). Now, because \(Y(z, m) \indep M \mid Z, X\)---which is the second
  condition of sequential ignorability, in Eq.~\eqref{eq:si2}---we have that
  \(Y(z, m) \indep \{Z, M\} \mid X\), by the contraction property of conditional
  independence. Therefore, by the weak-union property,
    \begin{equation}
    \label{eq:SIimpCI}
      Y(z, m) \indep Z \mid M, X.
    \end{equation}
  Subset ignorability now follows, as it is the special case in which \(M = 1\)
  in Eq.~\eqref{eq:SIimpCI}.
\end{cas}

\begin{cas}[Subset ignorability does not imply sequential ignorability]
  \label{cas:CInimpSI} Sequential ignorability is an intuitively stronger
  condition than subset ignorability, as the former requires that \(Z\) is
  independent of the mediator potential outcomes \(M(z)\) given \(X\). Indeed,
  the synthetic example given in Section~\ref{sec:ex} satisfies subset
  ignorability but violates sequential ignorability.

  To formally establish our claim, we construct an even simpler example that
  satisfies subset ignorability but not sequential ignorability. First, suppose
  that \(Y(z, 1) = 1\) and \(Y(z, 0) = 0\), deterministically for
  \(z\in\{b,w\}\). In particular, using the language of our policing and
  prosecution application, everyone who is arrested is charged, regardless of
  race. We further set \(X = 1\), which effectively means that there are no
  contextual variables. Finally, we set
    \begin{align}
      \begin{split}
        \Pr&(Z = z, M(b) = m_b, M(w) = m_w) \\
            & = \Pr(Z = z) \cdot \Pr(M(b) = m_b \mid Z = z) \cdot \Pr(M(w) = m_w
              \mid Z = z),
        \end{split}
      \end{align}
  where \(\Pr(Z = z) = \tfrac12\), \(\Pr(M(z) = 1 \mid Z = w) = \tfrac12\), and
  \(\Pr(M(z) = 1 \mid Z = b) = 1\). Note that \(M = M(Z)\) and \(Y = Y(Z,M)\),
  and so the above description fully defines the joint distribution on all the
  relevant variables.

  Now, because \(Y(z,1)\) = 1, we trivially have that \(Y(z,1) \indep Z \mid
  M\), meaning that subset ignorability is satisfied. But, because \(M(z) \not
  \indep Z\), sequential ignorability is violated.
\end{cas}

\begin{cas}[%
  Consistency of \(\Delta_n\) does not imply subset ignorability holds%
]
\label{cas3}
  At a high level, even if the potential outcomes \(Y(z, 1)\) are not
  independant of \(Z\)---violating subset ignorability---\(\Delta_n\) can
  still be a consistent estimator when there is appropriate cancellation. For
  a concrete illustration of this in the context of our two-stage arrest and
  charging application, consider a simple example in which: (1) there are no
  contextual variables (i.e., \(X = 1\)); (2) the population is evenly split
  across race groups (i.e., \(\Pr(Z = z) = \frac12\)); (3) everyone in the
  population is arrested (i.e., \(M = 1\)); and (4) the prosecutor's
  \emph{potential} decisions depend on an arrestee's \emph{actual} race.
  Specifically, we set \(Y(z, 0) = 0\) and \(Y(z, 1)\) to be a Bernoulli
  random variable distributed as follows:
    \begin{equation}
    \label{eq:yz1-proof}
      \Pr(Y(z, 1) = 1 \mid Z) =
        \begin{cases}
          1 & z = b \land Z = b, \\
          0 & z = w \land Z = b, \\
          \frac12 & Z = w.
        \end{cases}
    \end{equation}
  Because \(Y = Y(Z,M)\), the above relationships completely specify the joint
  distribution of \(Y\), \(Z\), \(M\), and \(X\).

  Subset ignorability is violated in this example since, by
  Eq.~\eqref{eq:yz1-proof}, \(Y(z, 1) \not \indep Z\). (Because \(X\) and \(M\)
  are constant, we need not condition on them when considering the subset
  ignorability criterion.) We further have,
    \begin{align*}
      \cdes
        &= \EE[Y(b, 1) \mid M = 1] - \EE[Y(w, 1) \mid M = 1] \\
        &= \left( \EE[Y(b,1) \mid Z = b] - \EE[Y(w, 1) \mid Z = b] \right) \cdot
            \Pr(Z = b) \nonumber \\
        &\hspace{1cm} + \left( \EE[Y(b,1) \mid Z = w] - \EE[Y(w, 1) \mid Z = w]
            \right) \cdot \Pr(Z = w) \\
        &= \left(1 - 0 \right) \cdot \frac 12 + \left( \frac 12 - \frac 12
            \right) \cdot \frac 12\\
        &= \frac12.
    \end{align*}
  Finally,
    \begin{align*}
      \lim_{n\to\infty} \Delta_n
        &\stackrel{\text{a.s.}}{=} \EE[Y \mid Z = b, M = 1] -
            \EE[Y \mid Z = w, M = 1]\\
        &= 1 - \frac 12\\
        & = \cdes.
    \end{align*}
  Thus, even though subset ignorability is violated in this example,
  \(\Delta_n\) yields a consistent estimate of the \(\cdes\).
\end{cas}

\begin{cas}[%
  Consistency of \(\Delta_n\) does not imply treatment ignorability holds%
]
\label{cas:CDESnimplTI}
  Consider the example described in Case \ref{cas:CInimpSI}. As discussed
  there, subset ignorability is satisfied in that example and so, by
  Theorem~\ref{thm:main}, \(\Delta_n\) is a consistent estimator of the
  \(\cdes\). However, that example does not satisfy treatment ignorability, as
  \(M(z) \not \indep Z\), contrary to Eq.~\eqref{eq:ti1}. (Because \(X\) is
  constant, we need not condition on it when evaluating the treatment
  ignorability criterion.)
\end{cas}

\begin{cas}[%
  Consistency of \(\Delta_n\) does not imply that treatment ignorability,
  mediator ignorability, and mediator monotonicity hold%
]
  This is directly implied by Case \ref{cas:CDESnimplTI}.
\end{cas}

\begin{cas}[%
  Treatment ignorability does not imply \(\Delta_n\) is a consistent estimator
  of the \(\cdes\)%
]
  We show, more generally, that the \(\cdes\) is not identifiable under
  treatment ignorability alone. To do so, we construct a family of
  observationally equivalent examples that satisfy treatment ignorability but
  which have different values of \(\cdes\). As a result, no
  estimator---including \(\Delta_n\)---can consistently estimate the \(\cdes\)
  for every example in this family.

  We construct the family of examples as follows. First, as in the other cases,
  we set \(X = 1\), so that there are effectively no contextual variables, and
  we set \(Y(z, 0) = 0\), meaning that if an individual were not arrested, that
  individual could not be charged. Second, we set \(M(b) = 1\), meaning that
  everyone in the population would be arrested if they were Black. Finally, we
  set
    \begin{align}
    \label{eq:cas5-jd}
      \begin{split}
        \Pr&(Y(z, 1) = y_{zm}, M(w) = m_w, Z = z) \\
          &= \Pr(Y(z, 1) = y_{zm} \mid M(w) = m_w) \cdot \Pr(M(w) = m_w) \cdot
              \Pr(Z = z),
      \end{split}
    \end{align}
  where \(\Pr(Z = z) = \tfrac12\), \(\Pr(M(w) = m_w) = \tfrac 12\), and, for
  \(\alpha \in [0,1]\),
    \begin{equation}
    \label{eq:cas5-yzm}
        \Pr(Y(z, 1) = 1 \mid M(w)) =
          \begin{cases}
            \alpha & M(w) = 0 \land z = w,\\
            1 & \text{otherwise}.\\
          \end{cases}
    \end{equation}
  The examples we construct thus differ only in the choice of \(\alpha\).

  Now, regardless of \(\alpha\), these examples all satisfy treatment
  ignorability. To see this, note that \(M(w) \indep Z\) by
  Eq.~\eqref{eq:cas5-jd} and \(M(b) \indep Z\) since \(M(b)\) is constant.
  Consequently, the first condition of treatment ignorability is satisfied.
  Eq.~\eqref{eq:cas5-jd} further implies that \(Y(z,1) \indep Z \mid M(w)\) and,
  since \(Y(z, 0)\) is constant, \(Y(z, 0) \indep Z \mid M(w)\), establishing
  the second condition of treatment ignorability. (Because \(M(b)\) and \(X\)
  are constant, we need not condition on them when considering the two treatment
  ignorability conditions.)

  We next show that all these examples are observationally equivalent.
  Intuitively, observational equivalence stems from the fact that the only
  difference between the examples is in the distribution of \(Y(w,1)\) for those
  individuals with \(M(w) = 0\). But for those with \(M(w) = 0\), who would not
  be arrested if they were white, we never observe \(Y(w,1)\).

  Now, to rigorously establish observational equivalence, we must show that
  \(\Pr(X = x, Y = y, Z = z \mid M = 1)\) does not depend on the value of
  \(\alpha\). Because \(X\) is constant, we need only consider \(\Pr(Y = y, Z =
  z \mid M = 1)\). First, observe that
    \begin{align*}
      \Pr(M = 1)
        & = \Pr(M(w) = 1, Z = w) + \Pr(M(b) = 1, Z = b) \\
        & = \Pr(M(w) = 1)\cdot \Pr(Z = w) + \Pr(Z = b) \\
        & = \tfrac34.
    \end{align*}
  Further, note that
    \begin{equation*}
      \Pr(Y = y, Z = z, M = 1) = \Pr(Y(z, 1) = y, Z = z, M(z) = 1),
    \end{equation*}
  and consider the case \(z = b\). Then, because \(Y(b,1) = 1\) and
  \(M(b) = 1\),
    \begin{equation}
    \label{eq:cas5-z=b}
      \Pr(Y = y, Z = b, M = 1) =
        \begin{cases}
          0  & y = 0,\\
          \tfrac12 & y = 1.
        \end{cases}
    \end{equation}
  Now consider the case \(z = w\). By Eq.~\eqref{eq:cas5-jd},
    \begin{align*}
        \Pr(Y&(w, 1) = y, Z = w, M(w) = 1) \\
          & = \Pr(Y(w, 1) = y \mid M(w) = 1) \cdot \Pr(M(w) = 1) \cdot
              \Pr(Z = w) \\
          & = \Pr(Y(w, 1) = y \mid M(w) = 1) \cdot \tfrac14.
    \end{align*}
  By Eq.~\eqref{eq:cas5-yzm}, \(\Pr(Y(w, 1) = 1 \mid M(w) = 1)  = 1\), and so,
    \begin{equation}
    \label{eq:cas5-z=w}
      \Pr(Y = y, Z = w, M = 1) =
        \begin{cases}
          0  & y = 0,\\
          \tfrac14 & y = 1.
        \end{cases}
    \end{equation}
  Finally, combining Eqs.~\eqref{eq:cas5-z=b} and \eqref{eq:cas5-z=w} with the
  fact that \(\Pr(M=1) = \tfrac34\), we have
    \begin{equation*}
      \Pr(Y = y, Z = z \mid M = 1) =
        \begin{cases}
          0 & y = 0,\\
          \tfrac23 & y = 1 \land z = b, \\
          \tfrac13 & y = 1 \land z = w.
        \end{cases}
    \end{equation*}
  In particular, \(\Pr(Y = y, Z = z \mid M = 1)\) does not depend on \(\alpha\),
  and so the examples are all observationally equivalent.

  We conclude the proof by showing that the \(\cdes\) differs across these
  examples. First, it remains to calculate \(\Pr(M(w) = m_w \mid M = 1)\). To do
  so, note that
    \begin{align*}
        \Pr(M(w) = 1, M = 1)
            &= \Pr(M(w) = 1, Z = w) + \Pr(M(w) = 1, M(b) = 1, Z = b)\\
            &= \Pr(M(w) = 1) \cdot \Pr(Z = w) + \Pr(M(w) = 1) \cdot \Pr(M(b) =
              1) \cdot \Pr(Z = b)\\
            &= \frac 12 \cdot \frac 12 + \frac 12 \cdot 1 \cdot \frac 12\\
            &= \frac 12,
    \end{align*}
  and so, since \(\Pr(M = 1) = \frac{3}{4}\), it follows that \(\Pr(M(w) = 1
  \mid M = 1) = \tfrac 23\).

  Consequently, we have
    \begin{align*}
       \cdes
          &= \EE[Y(b, 1) \mid M = 1] - \EE[Y(w, 1) \mid M = 1] \\
          &= \Pr(M(w) = 1 \mid M = 1) \\
          &\hspace{1.5cm} \cdot \left( \EE[Y(b,1) \mid M(w) = 1, M = 1] -
              \EE[Y(w, 1) \mid M(w) = 1, M = 1] \right)\\
          &\hspace{1cm} + \Pr(M(w) = 0 \mid M = 1) \\
          &\hspace{1.5cm} \cdot \left( \EE[Y(b,1) \mid M(w) = 0, M = 1] -
              \EE[Y(w, 1) \mid M(w) = 0, M = 1] \right)\\
          &= \frac 23 \cdot (1 - 1) + \frac 13 \cdot (1 - \EE[Y(w, 1) \mid M(w)
            = 0, Z = b])\\
          &=\frac{1 - \alpha}{3},
    \end{align*}
  where second to last equality follows from Eq.~\eqref{eq:cas5-yzm} and the
  fact that the event \(\{M(w) = 0 \land M = 1\}\) equals \(\{M(w) = 0 \land Z =
  b\}\); the final equality also follows from Eq.~\eqref{eq:cas5-yzm}, as well
  as the fact that \(Y(z,1) \indep Z \mid M(w)\). We have thus constructed a
  family of observationally equivalent examples that satisfy treatment
  ignorability but which have different \(\cdes\), implying that the  \(\cdes\)
  is not in general identifiable under treatment ignorability alone.
\end{cas}

\begin{cas}[%
  Treatment, mediator ignorability, and mediator monotonicity jointly imply
  \(\Delta_n\) is a consistent estimator of the \(\cdes\)%
]
  The proof is in two pieces. First, we derive an expression for the \(\cdes\)
  holding \(X\) constant, and then prove the general claim.

  Supposing \(X = x\) is constant, recall that by definition \(M = 1\) if and
  only if \(M(z) = 1\) where \(Z = z\). By mediator monotonicity, \(M(b)
  \geq M(w)\). Therefore, the event \(\{M = 1\}\) can be partitioned into the
  following two events:
    \begin{itemize}
      \item \label{defn:E1} \(E_1 = \{M(b) = 1 \land M(w) = 1\}\),
      \item \label{defn:E2} \(E_2 = \{Z = b \land M(b) = 1 \land M(w) = 0\}\).
    \end{itemize}
  Recall the definition of the \(\cdes\) in Definition~1. It follows from
  the law of total expectation that:
    \begin{align}
      \cdes &= \EE[Y(b, 1) - Y(z, 1) \mid M = 1]\nonumber\\
        \begin{split}
        \label{eq:mti_decomp}
          &= \EE[Y(b, 1) - Y(z, 1) \mid E_1] \cdot \Pr(E_1 \mid M = 1)\\
          &\hspace{1cm} + \EE[Y(b, 1) - Y(z, 1) \mid E_2] \cdot
              \Pr(E_2 \mid M = 1)
        \end{split}
    \end{align}

  Now, we examine each of these summands in  turn. First, consider the \(E_1\)
  term:
    \begin{equation*}
      \EE[Y(b, 1) - Y(w, 1) \mid E_1] = \EE[Y(b, 1) \mid E_1]
        - \EE[Y(w, 1) \mid E_1]
    \end{equation*}
  By the definition of \(E_1 = \{M(b) = 1 \land M(w) = 1\}\) and the second
  treatment ignorability condition, Eq.~\eqref{eq:ti2}, we are free to condition
  both terms on the right hand side by levels of \(Z\), yielding
    \begin{equation}
    \label{eq:e1}
      \EE [Y(b, 1) \mid Z = b, E_1] - \EE [Y(w, 1) \mid Z = w, E_1]
          = \EE [Y \mid Z = b, E_1] - \EE [Y \mid Z = w, E_1],
    \end{equation}
  where equality follows from replacing potential outcomes by their realized
  values according to the definition of \(Y = Y(M, Z)\).

  Next, consider the \(E_2\) term. Again,
    \begin{align*}
      \EE[Y(b, 1) - Y(w, 1) \mid E_2]
        &= \EE[Y(b, 1) \mid E_2] - \EE[Y(w, 1) \mid E_2].
    \end{align*}
  It follows from mediator ignorability, Eq.~\eqref{eq:mediator_ignorability},
  and the definition of \(E_2\) that
    \begin{align*}
      \EE[Y(w, 1) \mid E_2]
        &= \EE[Y(w, 1) \mid Z = b, M(b) = 1, M(w) = 0]\\
        &= \EE[Y(w, 1) \mid Z = b, M(b) = 1, M(w) = 1]\\
        &= \EE[Y(w, 1) \mid Z = w, M(b) = 1, M(w) = 1],
    \end{align*}
  where the last equality follows from treatment ignorability,
  Eq.~\eqref{eq:ti2}. Replacing potential outcomes with their realizations, it
  follows that
    \begin{equation}
    \label{eq:e2}
      \EE[Y(b, 1) - Y(w, 1) \mid E_2] = \EE [Y \mid Z = b, E_2] -
          \EE [Y \mid Z = w, E_1].
    \end{equation}

  Now, we substitute Eqs.~\eqref{eq:e1}~and~\eqref{eq:e2} into
  Eq.~\eqref{eq:mti_decomp}.
    \begin{align}
      \cdes
        &= \left( \EE[Y \mid Z = b, E_1] - \EE[Y \mid Z = w, E_1] \right) \cdot
            \Pr (E_1 \mid M = 1)\nonumber\\
        &\hspace{1cm} + \left( \EE[Y \mid Z = b, E_2] - \EE[Y \mid Z = w, E_1]
            \right) \cdot \Pr (E_2 \mid M = 1)\nonumber\\
        &= \left( \EE[Y \mid Z = b, E_1] - \EE[Y \mid Z = w, M = 1] \right)
            \cdot \Pr (E_1 \mid M = 1)\nonumber\\
        &\hspace{1cm} + \left( \EE[Y \mid Z = b, E_2] - \EE[Y \mid Z = w, M = 1]
            \right) \cdot \Pr (E_2 \mid M = 1)\nonumber\\
        &= \big( \EE[Y \mid Z = b, E_1] \cdot \Pr(E_1 \mid M = 1) + \EE[Y \mid Z
            = b, E_2] \cdot \Pr(E_2 \mid M = 1) \big) \nonumber\\
        &\hspace{1cm} - \big( \EE [Y \mid Z = w, M = 1] \cdot \left( \Pr (E_1
            \mid M = 1) + \Pr (E_2 \mid M = 1) \right) \big) \nonumber\\
        &= \EE [Y \mid Z = b, M = 1] - \EE [Y \mid Z = w, M = 1]
            \label{eq:mti_final},
    \end{align}
  where the second equality follows from the fact that \(\{M = 1 \land Z = w\} =
  \{E_1 \land Z = w\}\) by mediator monotonicity, and the last equality follows
  from the facts that \(\{M = 1 \land Z = b \land E_1\} = \{Z = b \land E_1\}\),
  \(\{M = 1 \land Z = b \land E_2\} = \{Z = b \land E_2\}\), and \(\Pr(E_1 \mid
  M = 1) + \Pr(E_2 \mid M = 1) = 1\).

  Now, suppose that \(X\) is not constant. Conditioning \(Y\), \(Z\), and \(M\)
  on \(X = x\), it follows from the law of total expectation that
    \begin{align}
      \EE [Y(b, 1) - Y(w, 1) \mid M = 1]
        &= \sum_x \EE [Y(b, 1) - Y(w, 1) \mid M = 1, X = x] \cdot \Pr (X = x
            \mid M = 1) \nonumber\\
          \begin{split}
          \label{eq:generalize}
            &= \sum_x \EE [Y \mid Z = b, M = 1, X = x] \cdot \Pr(X = x \mid M =
                1)\\
            &\hspace{1cm} - \EE [Y \mid Z = w, M = 1, X = x] \cdot \Pr(X = x
                \mid M = 1),
          \end{split}
    \end{align}
  where the second equality follows from Eq.~\eqref{eq:mti_final}, using the
  fact that \(X\) is constant on each of the events \(\{ X = x \}\).
  Eq.~\eqref{eq:generalize} is identical to the expression in the statement of
  Theorem~\ref{thm:main}, and so the estimator \(\Delta_n\) converges almost
  surely to the quantity on the right-hand side of Eq.~\eqref{eq:generalize} by
  precisely the same argument as there.
\end{cas}

\end{proof}

\section{Analysis of a Restricted Family of Distributions}
\label{sec:counterexamples}

Theorem~\ref{thm:main-body} shows that treatment ignorability, mediator
ignorability, and mediator monotonicity are jointly sufficient but not necessary
to identify the \(\cdes\) from data on second-stage decisions. We show that this
non-necessity holds even if we restrict to distributions compatible with a
particular causal DAG considered by \citeauthor{knox-2019}, shown in
Figure~\ref{fig:klm-dag}, where an unobserved confounder \(Q\) directly
influences the first-stage decisions \(M\) (e.g., arrests) and the second-stage
decisions \(Y\) (e.g., charging). To do so, we explicitly construct a
counterexample in which: (1) the joint distribution of random variables is
compatible with this causal DAG; (2) mediator ignorability is violated; and (3)
subset ignorability is satisfied, which in turn implies that the stratified
difference-in-means \(\Delta_n\) is a consistent estimator of the \(\cdes\), by
Theorem~\ref{thm:main}.

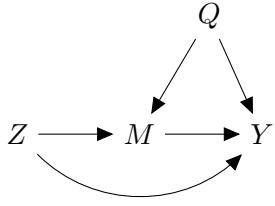
\begin{figure}[t]
  \centering
    \begin{center}
    \begin{tikzpicture}
      \node                                     (race) at (0,0) {\(Z\)};
      \node[right = of race]                    (arrest)        {\(M\)};
      \node[right = of arrest]                  (charge)        {\(Y\)};
      \node[above right = 1 and 0.3 of arrest]  (confound)      {\(Q\)};

      \draw[->] (race)      to                      (arrest);
      \draw[->] (arrest)    to                      (charge);
      \draw[->] (race)      to[out = 315, in = 225] (charge);
      \draw[->] (confound)  to                      (arrest);
      \draw[->] (confound)  to                      (charge);
    \end{tikzpicture}
  \end{center}
  \caption{\emph{
      A causal DAG considered by \citeauthor{knox-2019} In the context of our
      charging example, \(Z\) indicates race, \(M\) indicates arrest decisions,
      \(Y\) indicates charging decisions, and \(Q\) is an unobserved confounder.
      Even when one restricts to distributions compatible with this DAG, the
      \citeauthor{knox-2019} conditions are not necessary to
      non-parametrically identify the \(\cdes\) from data on second-stage
      decisions.
  }}
\label{fig:klm-dag}
\end{figure}

\begin{prop}
  There exists a structural causal model (SCM) compatible with the causal DAG in
  Figure~\ref{fig:klm-dag} which violates mediator ignorability but satisfies
  subset ignorability.
\end{prop}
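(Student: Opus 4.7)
The plan is to construct an explicit SCM in the structural-equation style of Section~\ref{sec:simulation}. The central observation is that subset ignorability only constrains the distribution of $Y(z,1)$—the potential outcome conditional on the first-stage decision being positive—whereas mediator ignorability requires $Y(z,m) \indep M(w) \mid Z=z, M(b)=1$ for \emph{both} $m=0$ and $m=1$. I will therefore arrange things so that $Y(z,1)$ does not depend on the unobserved confounder $Q$ (so subset ignorability holds trivially), but $Y(z,0)$ does depend on $Q$ (so that it shares a common source of randomness with $M(w)$, breaking mediator ignorability through the $m=0$ slice).

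Concretely, I would take exogenous variables $Z, Q \sim \bern(1/2)$ independent, $U_M, U_Y \sim \unif(0,1)$, and structural equations $M = f_M(Z, Q, U_M) = \B 1(U_M \leq p(Z, Q))$ together with $Y = f_Y(Z, M, Q, U_Y) = Q \cdot (1 - M)$. Choosing, for example, $p(z, q) = 0.2 + 0.6 \cdot \B 1(z = q)$ makes $\Pr(M = 1 \mid Z, Q)$ genuinely depend on both arguments while keeping it bounded away from $\{0,1\}$. Under this model, $M$ and $Y$ only depend on their DAG parents in Figure~\ref{fig:klm-dag} (the missing dependence of $Y$ on $Z$ in the functional form is permitted, since the DAG only upper-bounds the dependency structure), and $Y(z,1) \equiv 0$ immediately gives subset ignorability.

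To establish that mediator ignorability fails, I would compute (for $z = w$, say) the conditional joint distribution of $(Y(z,0), M(w))$ given $Z = z$ and $M(b) = 1$. By Bayes' rule applied to the stochastic arrest mechanism, the conditional distribution of $Q$ stays non-degenerate. Since both $Y(z,0) = Q$ and $M(w) = \B 1(U_M \leq p(w, Q))$ are nontrivial functions of the same $Q$ and the same $U_M$, their conditional joint distribution will not factor, which can be verified by a direct enumeration of the four cases for $(Q, M(w))$.

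The main obstacle is not technical---each step reduces to an elementary probability calculation---but rather choosing $p(z, q)$ so that two things happen simultaneously: first, the distribution of $Q$ given $Z = z$ and $M(b) = 1$ must be genuinely non-degenerate (so that $Y(z,0) = Q$ retains randomness under the conditioning); and second, the conditional distribution of $M(w)$ given $Q$ and the event $\{M(b) = 1\}$ must vary with $Q$ (so that $Y(z,0)$ and $M(w)$ actually become correlated, not merely potentially correlated). The asymmetric specification above accomplishes both, and the remainder of the verification is routine.
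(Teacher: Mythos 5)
There is a genuine gap here, and it lies in where you locate the violation of mediator ignorability. Your construction puts all of the action in the \(m=0\) potential outcomes by setting \(Y = Q\cdot(1-M)\), so that \(Y(z,1)\equiv 0\) (making subset ignorability vacuous) while \(Y(z,0)=Q\) is correlated with \(M(w)\). But in the paper's two-stage framework the potential outcome \(Y(z,0)\) is degenerate by construction---an individual who is not arrested cannot be charged, so \(Y(b,0)=Y(w,0)=0\) for all individuals (this is stated in Section~\ref{ssec:discrimination_model} and imposed in every SCM in the paper via the factor of \(m\) in \(f_Y\)). An SCM in which unarrested individuals have a nondegenerate, \(Q\)-dependent ``charging'' outcome is compatible with the bare DAG of Figure~\ref{fig:klm-dag} but not with the decision process the proposition is about; and once you restore \(Y(z,0)=0\), your construction satisfies mediator ignorability trivially for both values of \(m\), so the counterexample evaporates. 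Relatedly, even read as a bare formal statement your example is uninformative: the only slice of mediator ignorability that the Knox et al.\ identification argument actually uses (see Case~7 of Theorem~\ref{thm:main-body}) is the \(m=1\) slice, and with \(Y(z,1)\equiv 0\) your \(\cdes\) is identically zero, so nothing is learned about identification.

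The missing idea is that the violation has to run through \(Y(z,1)\), and that this forces a nontrivial balancing act: you need \(Y(z,1)\) to depend on the unobserved confounder \(Q\)---so that, after conditioning on \(M(b)=1\), it remains correlated with \(M(w)\), which also depends on \(Q\)---while simultaneously arranging that the conditional law of \(Q\) given \(\{M=1, Z=z\}\) is the \emph{same} for \(z=b\) and \(z=w\), so that \(Y(z,1)\indep Z\mid M=1\) and subset ignorability survives. The paper achieves this with a four-valued confounder \(Q\sim\unif(\{1,2,3,4\})\) and arrest probabilities chosen so that \(\Pr(Q=1\mid M(z)=1)=\tfrac12\) for both races, even though \(\Pr(Q=1\mid M(w)=m_w, M(b)=1)\) varies with \(m_w\). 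Your two-valued \(Q\) and \(p(z,q)=0.2+0.6\cdot\B 1(z=q)\) machinery could plausibly be adapted to this end, but the plan as written explicitly forgoes it (``arrange things so that \(Y(z,1)\) does not depend on \(Q\)''), so the proposal as stated does not prove the proposition in the sense the paper intends.
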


\begin{proof}
  We start by explicitly constructing an SCM that is (faithfully) compatible
  with the DAG in Figure~\ref{fig:klm-dag}. Our SCM has the following
  independent exogenous variables:
    \begin{align*}
      U_Z & \sim \unif(\{w,b\}), \\
      U_Q & \sim \unif(\{1,2,3,4\}), \\
      U_M & \sim \unif((0,1)), \\
      U_Y & \sim \unif((0,1)),
    \end{align*}
  where \(U_Z\) and \(U_Q\) are uniformly distributed over the specified
  discrete sets, and \(U_M\) and \(U_Y\) are uniform over the unit interval.
  Now, the structural equations are given by:
    \begin{align*}
      f_Z(u_z) &= u_z, \\
      f_Q(u_q) &= u_q, \\
      f_M(z, q, u_m)
        &= \B 1 \left(u_m \leq (1 + \B 1(z = b)) \cdot \frac {\B 1(q=1) + \B
          1(z = b \land q = 3) + \B 1(z = w \land q = 2)} {2} \right),\\
      f_Y(z, m, q, u_y)
        &= m \cdot \B 1 \left(u_y \leq (1 + \B 1(z = b)) \cdot \frac {\B 1(q =
          1)}{2} \right),
    \end{align*}
  where \(\B 1\) denotes the indicator function and \(\land\) denotes
  conjunction (i.e., the \texttt{and} operator). For avoidance of doubt, \(Z =
  f_Z(U_Z)\), \(Q = f_Q(U_Q)\), \(M = f_M(Z, Q, U_M)\), and \(Y = f_Y(Z, M, Q,
  U_Y)\). Further, the potential arrest outcomes are given by \(M(z) = f_M(z, Q,
  U_M)\), and the bivariate potential charge outcomes are given by \(Y(z,m) =
  f_Y(z, m, Q, U_Y)\).

  \paragraph{Mediator ignorability is violated.}

  First, note that \(Z \indep \{Y(b,1), M(w), M(b)\}\), because \(Z\) is a
  function of \(U_Z\), and \(\{Y(b,1), M(w), M(b)\}\) are functions of \(U_Y\),
  \(U_Q\), and \(U_M\), which are jointly independent of \(U_Z\). Now, applying
  this fact and conditioning on \(Q\), we have that,
    \begin{align*}
      \Pr(Y(b,1) = 1 &\mid M(w) = m_w, M(b) = 1, Z = z) \\
        &= \Pr(Y(b,1) = 1 \mid M(w) = m_w, M(b) = 1) \\
        &= \sum_{q=1}^4 \Pr(Y(b,1) = 1 \mid M(w) = m_w, M(b) = 1, Q =
          q) \\
        &\hspace{1.5cm}\cdot \Pr(Q = q \mid M(w) = m_w, M(b) = 1)\\
        &= \sum_{q=1}^4 \Pr(f_Y(b,1,q,U_Y) = 1 \mid M(w) = m_w, M(b)
          = 1, Q = q) \\
        &\hspace{1.5cm} \cdot \Pr(Q = q \mid M(w) = m_w, M(b) = 1).
    \end{align*}
  Next, observe that \(f_Y(b,1,q,U_Y) = \B 1(q=1)\),
  and so
    \begin{align}
      & \Pr(Y(b,1) = 1 \mid M(w) = m_w, M(b) = 1, Z = z) \nonumber \\
      & \hspace{1cm} =  \Pr(Q = 1 \mid M(w) = m_w, M(b) = 1) \nonumber \\
      & \hspace{1cm} = \frac{\Pr(M(w) = m_w, M(b) = 1 \mid Q = 1) \cdot
        \Pr(Q=1)}{\sum_{q=1}^4 \Pr(M(w) = m_w, M(b) = 1 \mid Q = q) \cdot
        \Pr(Q=q)} \nonumber \\
      & \hspace{1cm} = \frac{\Pr(M(w) = m_w, M(b) = 1\mid Q = 1)}{\sum_{q=1}^4
        \Pr(M(w) = m_w, M(b) = 1 \mid Q = q)}. \label{eq:bayes}
    \end{align}
  The second equality above follows from Bayes' rule, and the third follows from
  the fact that \(\Pr(Q=q) = 1/4\).

  Finally, we compute \(\Pr(M(w) = m_w, M(b) = 1 \mid Q = q)\). Note that
    \begin{align}
      M(w) &= f_M(w, Q, U_M) \nonumber \\
      &= \B 1 \left( U_M \leq \frac{\B 1(Q=1) + \B 1(Q=2)}{2} \right)\nonumber\\
      & = \label{eq:M_w}
        \begin{cases}
          \B 1(U_M \leq 1/2) & Q \in \{1,2\}, \\
          0 & \text{otherwise}. \\
        \end{cases}
    \end{align}
  Likewise,
    \begin{align}
      M(b) &= f_M(b, Q, U_M) \nonumber \\
        &= \B 1 (U_M \leq (\B 1(Q=1) + \B 1(Q=3))) \nonumber \\
        & = \label{eq:M_b}
          \begin{cases}
            1 & Q \in \{1,3\}, \\
            0 & \text{otherwise}. \\
          \end{cases}
    \end{align}
  As a result,
    \begin{align*}
      \Pr(M(w) = m_w, M(b) = 1 \mid Q = q) &=
        \begin{cases}
          1/2 & q=1, \\
          1 & q=3 \land m_w = 0, \\
          0 & \text{otherwise}. \\
        \end{cases}
    \end{align*}
  Thus, by Eq.~\eqref{eq:bayes},
    \begin{equation*}
      \Pr(Y(b, 1) = 1 \mid M(w) = 1, M(b) = 1, Z = z) = 1,
    \end{equation*}
  while
    \begin{equation*}
      \Pr(Y(b, 1) = 1 \mid M(w) = 0, M(b) = 1, Z = z) = \frac 13.
    \end{equation*}
  Therefore, \(Y(b, 1) \not\indep M(w) \mid M(b) = 1, Z = z\), meaning that
  mediator ignorability does not hold.

  \paragraph{Subset ignorability holds.}

  Similar to the above, we have that
    \begin{align}
      Y(b,1) &= f_Y(b, 1, Q, U_Y) \nonumber \\
        &= \B 1 ( U_Y \leq\B 1(Q=1)) \nonumber \\
        & = \label{eq:Y_b_1}
          \begin{cases}
            1 & Q =1 \\
            0 & \text{otherwise}, \\
          \end{cases}
    \end{align}
  and
    \begin{align}
      Y(w,1) &= f_Y(w, 1, Q, U_Y) \nonumber \\
        &= \B 1 ( U_Y \leq\B 1(Q=1)/2) \nonumber \\
        & = \label{eq:Y_w_1}
          \begin{cases}
            \B 1(U_Y \leq 1/2) & Q =1 \\
            0 & \text{otherwise}. \\
          \end{cases}
    \end{align}

  Now, as before, \(Z \indep \{Q, M(w), M(b)\}\), since \(Z\) is a function of
  \(U_Z\), and \(\{Q, M(w), M(b)\}\) are functions of \(U_Q\) and \(U_M\), which
  are jointly independent of \(U_Z\). Consequently,
    \begin{align}
      \Pr(Q = 1 \mid M = 1, Z = z)
        &= \Pr(Q=1 \mid M(z) = 1, Z = z) \nonumber \\
        &= \Pr(Q=1 \mid M(z) = 1) \nonumber \\
        &= \frac{\Pr(M(z) = 1 \mid Q = 1) \cdot \Pr(Q=1)} {\sum_{q=1}^4
          \Pr(M(z) = 1 \mid Q = q) \cdot \Pr(Q=q)} \nonumber \\
        &= \frac{1}{2}, \label{eq:Q}
    \end{align}
  where the last equality follows from Eqs.~\eqref{eq:M_w} and \eqref{eq:M_b},
  together with the fact that \(\Pr(Q = q) = 1/4\), and that \(U_M \indep Q\).

  Finally, conditioning on \(Q\), we have
    \begin{align*}
      & \Pr(Y(b,1) = 1 \mid M=1, Z =z) \\
      & \hspace{1cm} = \sum_{q=1}^4 \Pr(Y(b,1) = 1 \mid M=1, Z=z, Q=q)
        \cdot \Pr(Q = q \mid M=1, Z = z) \\
      & \hspace{1cm} = \Pr(Q = 1 \mid M=1, Z = z)  \\
      & \hspace{1cm} = \frac{1}{2},
    \end{align*}
  where the second equality follows from Eq.~\eqref{eq:Y_b_1},
  and the third from Eq.~\eqref{eq:Q}. Similarly,
    \begin{align*}
      & \Pr(Y(w,1) = 1 \mid M=1, Z =z) \\
      & \hspace{1cm} = \sum_{q=1}^4 \Pr(Y(w,1) = 1 \mid M=1, Z=z, Q=q)
        \cdot \Pr(Q = q \mid M=1, Z = z) \\
      & \hspace{1cm} = \Pr(U_Y \leq 1/2 \mid M=1, Z=z, Q=1) \cdot \Pr(Q = 1 \mid
        M=1, Z = z)  \\
      & \hspace{1cm} = \Pr(U_Y \leq 1/2) \cdot \Pr(Q = 1 \mid M=1, Z = z) \\
      & \hspace{1cm} = \frac{1}{4},
    \end{align*}
  where the second equality follows from Eq.~\eqref{eq:Y_w_1}, the third from
  the fact that \(U_Y \indep \{M, Z, Q\}\), and the fourth from
  Eq.~\eqref{eq:Q}. Therefore, \(\Pr(Y(b, 1) = y \mid M = 1, Z = b) = \Pr(Y(b,
  1) \mid M = 1, Z = w)\) and similarly for \(Y(w, 1)\). In particular, this
  means that \(Y(z,1) \indep Z \mid M = 1\), and so subset ignorability holds.
\end{proof}

\section{Extending Theorem~\ref{thm:main} to Allow for Continuous Covariates}
\label{app:cont}

Theorem~\ref{thm:main} in the main text shows that subset
ignorability---together with overlap---implies the \(\cdes\) is
nonparametrically identified, where, for simplicity, we proved the result for
discrete covariates \(X\). We now extend that result to allow for continuous
covariates. At a conceptual level, the extension is straightforward: we first
condition on \(X\), then appeal to subset ignorability to condition on \(Z\),
and, finally, use consistency to replace potential outcomes by their observed
values. In the general case, however, typically \(\Pr(X = x) = 0\), and so one
must take care to define expressions that nominally condition on these
probability-zero events.

Recall that in the discrete case, the primary conditional expectations, treated
as functions of \(z\) and \(x\), are of the form
  \begin{align}
    \EE [Y | Z = z, X = x, M = 1]
      & = \sum_y y \frac{\Pr(Y = y, Z = z, X = x \mid M = 1)}{\Pr(Z = z, X = x
        \mid M = 1)} \nonumber\\
      & = \sum_y y \frac{\Pr(Y = y, Z = z, X = x \mid M = 1)}{\Pr(Z = z \mid X =
        x, M = 1) \Pr(X = x \mid M = 1)}. \label{eq:condexp}
  \end{align}
Overlap ensures that the denominator in \eqref{eq:condexp} is non-zero, and,
accordingly, that the conditional expectation is well-defined. In the continuous
case, to address conditioning on probability-zero events, conditional
probabilities are defined as random variables rather than simple numeric
quantities (cf. \citet{billingsley2008probability}). Further, if the random
variables \(\Pr (Z = z \mid X, M = 1) > 0\) a.s.\ for \(z \in \{w, b\}\)---a
condition that we call generalized overlap---then the expression \(\EE [Y \mid Z
= z, X = x, M = 1]\) is a well-defined function of \(z\) and \(x\), as in the
discrete case, up to a set of measure zero with respect to the pushforward
measure \(\mu_{X \mid M = 1}\) for each fixed \(z\).\footnote{%
  The pushforward measure \(\mu_{X \mid M = 1}\) is the measure on \(\C
  X\)---the range of \(X\)---given by \(\mu_{X \mid M = 1}[A] = \Pr(X \in A \mid
  M = 1)\) for measurable \(A \subseteq \C X\).
}\textsuperscript{,}\footnote{%
  To see this, first note that, in general, \(\EE [Y \mid Z = z, X = x, M = 1]\)
  is uniquely defined up to a set of measure zero with respect to the
  pushforward measure \(\mu_{Z, X \mid M = 1}\). Now, for fixed \(z\), suppose,
  toward a contradiction, that \(f_1(x)\) and \(f_2(x)\) are two versions of
  \(\EE [Y \mid Z = z, X = x, M = 1]\) that differ on a set \(A\) such that
  \(\Pr(X \in A \mid M = 1) > 0\). Then, by the generalized overlap condition,
  \(\Pr(Z = z, X \in A \mid M = 1) = \int_A \Pr(Z = z \mid X = x, M=1) \,
  \text{d} F_{X \mid M = 1} > 0\), contradicting the fact that \(f_1(x) \neq
  f_2(x)\) only on a null set with respect to the pushforward measure \(\mu_{Z,
  X \mid M = 1}\).
}

We now state and prove the extension of Theorem~\ref{thm:main}, with the
understanding that the conditional probabilities and expectations below are
defined according to the usual measure-theoretic conventions.

\begin{thm}
  Suppose \(Y(z,1)\), \(Z\), \(M\), and \(X\) satisfy subset ignorability, and
  that generalized overlap holds---i.e., for \(z \in \{b, w\}\), \(\Pr (Z = z
  \mid X, M = 1) > 0\) a.s. Then, the \(\cdes\) equals
    \begin{align}
      \begin{split}
      \label{eq:thmmaincont}
       & \int_{\C X} \EE[Y \mid Z = b, X=x, M=1] \, \textnormal{d} F_{X \mid M =
         1} \\
       & \hspace{1cm} - \int_{\C X} \EE[Y \mid Z = w, X=x, M=1] \,
         \textnormal{d} F_{X \mid M = 1},
      \end{split}
    \end{align}
    where \(\C X\) denotes the range of \(X\) and \(\textnormal{d} F_{X \mid M =
    1}\) denotes integration over \(\C X\) with respect to the pushforward
    measure \(\mu_{X \mid M = 1}\).
\end{thm}

\begin{proof}
  By conditioning on \(X\), we have,
    \begin{align}
      \cdes
        &= \mathbb{E}[Y(b,1) - Y(w,1) \mid M = 1]\nonumber \\
        &= \int_{\C X} \mathbb{E}[Y(b,1) - Y(w,1) \mid X = x, M = 1] \, \text{d}
          F_{X \mid M = 1} \nonumber \\
        &= \int_{\C X} \mathbb{E}[Y(b,1) \mid X = x, M = 1] - \mathbb{E}[Y(w,1)
          \mid X = x, M = 1] \, \text{d} F_{X \mid M = 1}. \label{eq:coneq0}
    \end{align}
  Now, subset ignorability gives that
    \begin{equation}
    \label{eq:coneq1}
        \mathbb{E}[Y(z,1) \mid X = x, M = 1] = \mathbb{E}[Y(z,1) \mid X = x,
          Z=z, M = 1]\ \text{a.s.},
    \end{equation}
  where generalized overlap ensures that the right-hand side of
  Eq.~\eqref{eq:coneq1} is well-defined up to a set of measure zero with respect
  to \(\text{d} F_{X \mid M = 1}\). Substituting this expression into
  Eq.~\eqref{eq:coneq0}, and then appealing to consistency to replace potential
  outcomes with their observed values, we have
    \begin{align*}
      \cdes
        &= \int_{\C X} \mathbb{E}[Y(b,1) \mid X = x, Z = b, M = 1] -
          \mathbb{E}[Y(w,1) \mid X = x, Z = w, M = 1] \, \text{d} F_{X \mid M =
          1} \\
        &= \int_{\C X} \mathbb{E}[Y(Z,M) \mid X = x, Z = b, M = 1] -
          \mathbb{E}[Y(Z,M) \mid X = x, Z = w, M = 1] \, \text{d}F_{X \mid M =
          1} \\
        &= \int_{\C X} \mathbb{E}[Y \mid X = x, Z = b, M = 1] - \mathbb{E}[Y\mid
          X = x, Z = w, M = 1] \, \text{d}F_{X \mid M = 1}.
    \end{align*}
\end{proof}

All of the quantities in Eq.~\eqref{eq:thmmaincont} (i.e., the distribution of
\(X\) and the conditional expectations) are functions of observables,
establishing that the \(\cdes\) is identified by data on second-stage decisions.
One may adopt a variety of approaches to estimate the terms in
Eq.~\eqref{eq:thmmaincont}, including model-based strategies, as we do in
Section~\ref{sec:empirical}. One may also adopt non-parametric estimation
strategies, wherein continuous covariates are appropriately binned into discrete
sets. For further treatment of these issues, see, for example,
\cite{gelman2013bayesian}, \cite{friedman2001elements}, and
\cite{tsybakov2008introduction}.

\section{Summary Statistics of Prosecution Dataset}
\label{app:demographics}

We present summary statistics, disaggregated by demographic group, of the
dataset used to conduct the empirical analysis of prosecutorial charging
decisions in Section~\ref{sec:empirical}.

\begin{table}[t]
  \caption{\emph{%
    Breakdown of charges, prior arrests, prior convictions, and weapons
    involvement among individuals arrested for a felony offense between 2013 and
    2019 in a major U.S.\ county, as analyzed in Section~\ref{sec:empirical}.
    Note that only Black and non-Hispanic white individuals are analyzed in our
    race-based analysis, so within-row sums differ between race and gender.
  }}
  \label{tbl:summary}
  \begin{center}
    \begin{tabular}{rlrrrr}
      \hline
      {} & {} & Female & Male & Black & White \\
      \hline
      \textbf{Charge:}
         & arson                       & 50    & 218   & 79    & 124   \\
      {} & assault                     & 2,062 & 9,160 & 4,131 & 3,662 \\
      {} & burglary                    & 664   & 3,334 & 1,762 & 1,535 \\
      {} & burglary (auto)             & 125   & 1,216 & 737   & 338   \\
      {} & driving under the influence & 26    & 90    & 17    & 50    \\
      {} & drug-related offense        & 736   & 5,643 & 2,815 & 1,881 \\
      {} & forcible rape               & 3     & 112   & 49    & 26    \\
      {} & forgery                     & 60    & 187   & 103   & 84    \\
      {} & hit-and-run                 & 10    & 42    & 18    & 17    \\
      {} & kidnapping                  & 4     & 24    & 12    & 1     \\
      {} & manslaughter (vehicular)    & 2     & 8     & 4     & 3     \\
      {} & motor vehicle theft         & 83    & 323   & 161   & 141   \\
      {} & other felony offense        & 287   & 2,115 & 877   & 910   \\
      {} & other sex offense           & 8     & 180   & 85    & 54    \\
      {} & robbery                     & 458   & 2,186 & 1,536 & 589   \\
      {} & theft                       & 877   & 3,385 & 1,723 & 1,539 \\
      {} & weapons offense             & 80    & 882   & 513   & 218   \\
      {} & willful homicide            & 53    & 178   & 65    & 60    \\
      \\[-7pt]
      \textbf{Prior arrests:}
         & 0   & 3,192 & 13,491 & 5,881 & 5,682 \\
      {} & 1   & 1,005 & 5,486  & 2,972 & 1,977 \\
      {} & 2   & 553   & 3,315  & 1,929 & 1,134 \\
      {} & 3   & 321   & 2,199  & 1,215 & 786   \\
      {} & 4   & 191   & 1,558  & 895   & 540   \\
      {} & 5   & 116   & 1,066  & 591   & 364   \\
      {} & 6   & 78    & 697    & 381   & 242   \\
      {} & 7   & 40    & 469    & 248   & 157   \\
      {} & 8   & 29    & 316    & 180   & 103   \\
      {} & 9   & 18    & 204    & 111   & 84    \\
      {} & 10+ & 45    & 482    & 283   & 163   \\
      \\[-7pt]
      \textbf{Prior convictions:}
         & 0  & 4,451 & 20,775 & 9,937 & 8,290 \\
      {} & 1  & 748  & 5,221   & 2,961 & 1,773 \\
      {} & 2  & 278  & 2,055   & 1,111 & 730   \\
      {} & 3  & 78   & 766     & 429   & 261   \\
      {} & 4  & 23   & 312     & 171   & 116   \\
      {} & 5+ & 10   & 154     & 77    &  62   \\
      \\[-7pt]
      \textbf{Weapons:}
         & no weapon involved & 4,442 & 24,932 & 12,810 & 9,434 \\
      {} & weapon involved    & 1,146 & 4,351  & 1,876  & 1,798 \\
      \hline
    \end{tabular}
  \end{center}
\end{table}

\begin{figure}[b]
  \begin{center}
    \begin{subfigure}{0.48\textwidth}
      \begin{center}
        \includegraphics[height=2in]{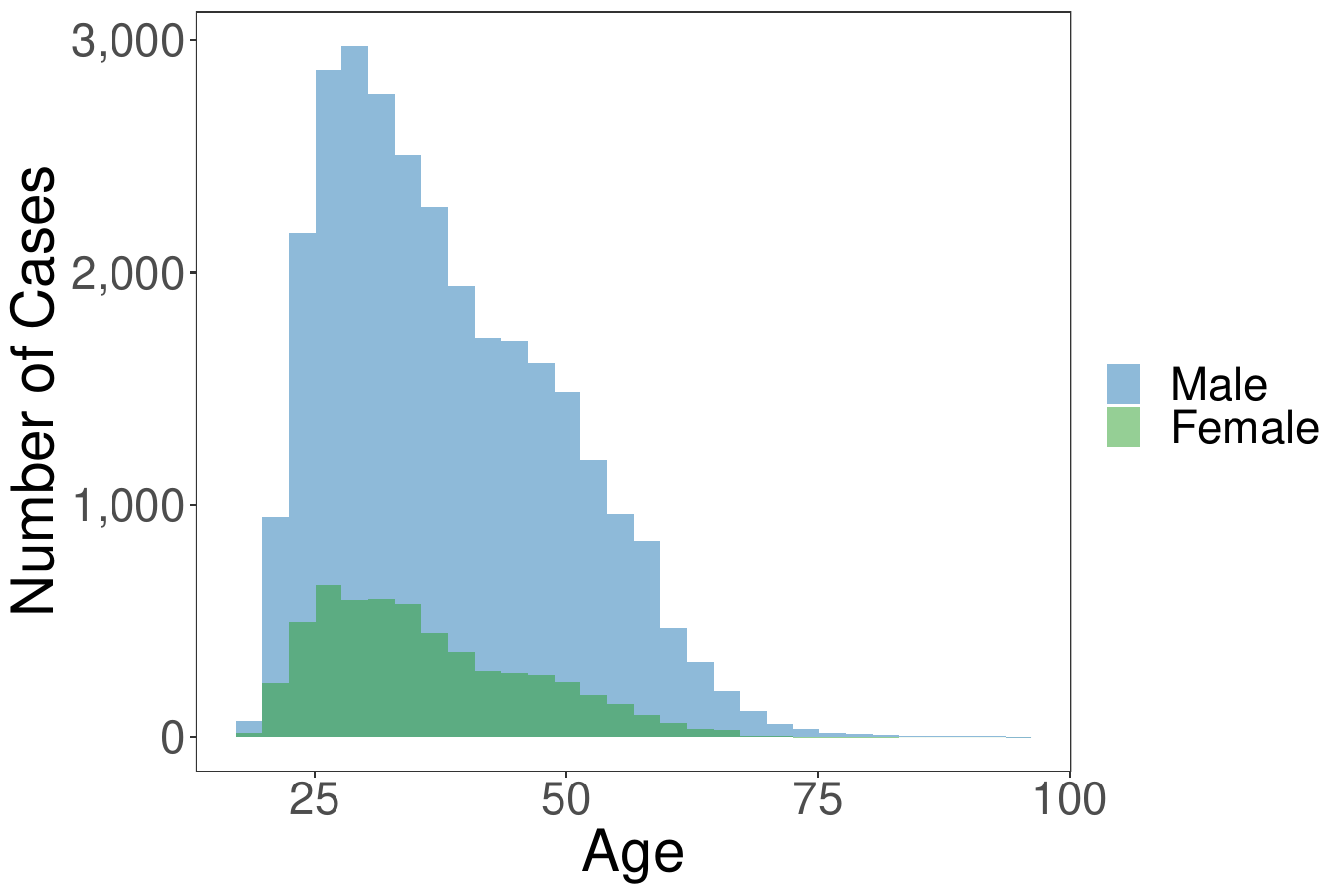}
      \end{center}
      \subcaption{Distribution of age by gender}
    \end{subfigure}
    \begin{subfigure}{0.48\textwidth}
      \begin{center}
        \includegraphics[height=2 in]{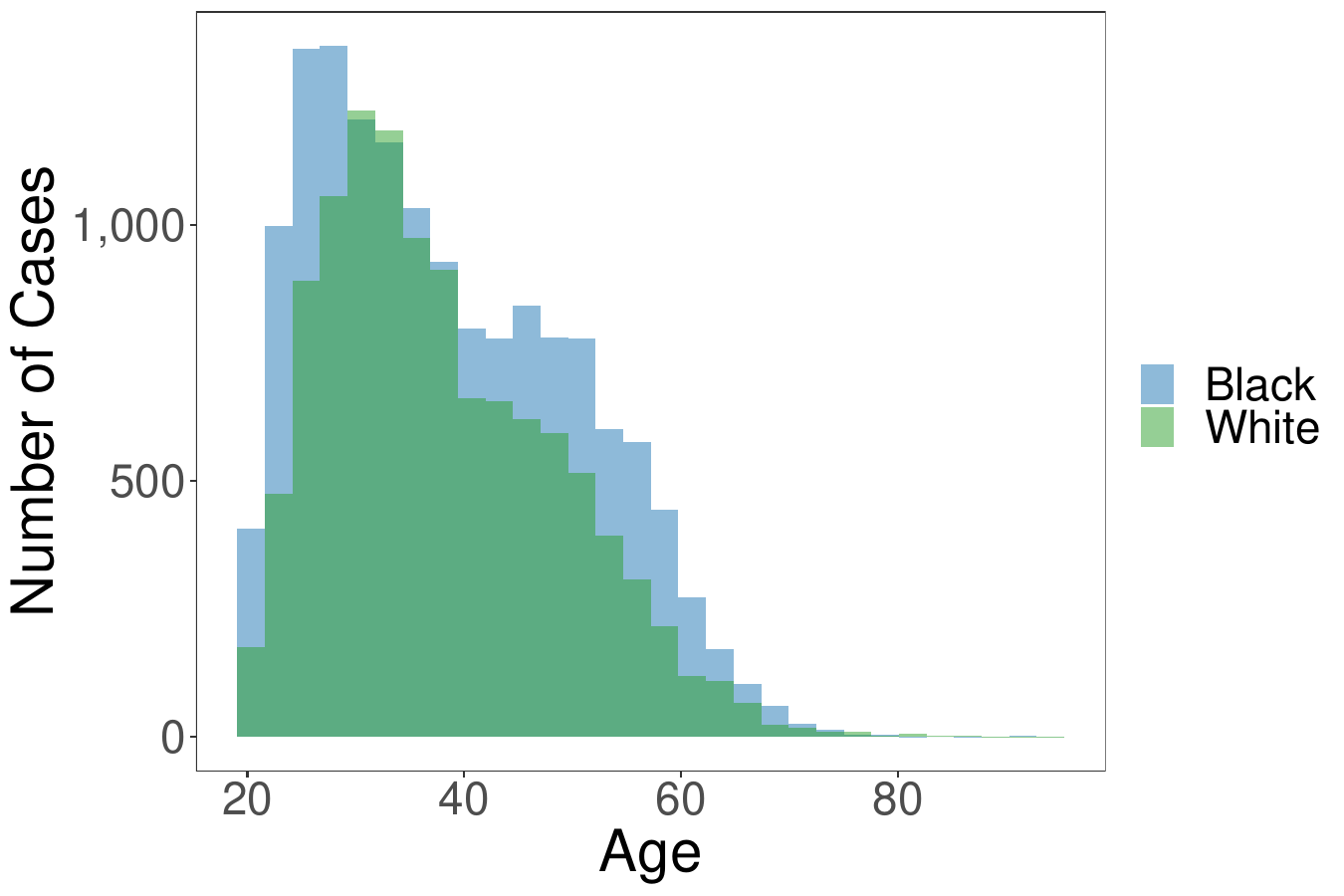}
      \end{center}
      \subcaption{Distribution of age by race}
    \end{subfigure}
  \end{center}
  \caption{\emph{%
    Breakdown of age by gender and by race of individuals arrested for a felony
    offense between 2013 and 2019 in a major U.S.\ county, as analyzed in
    Section~\ref{sec:empirical}.
  }}
\end{figure}

\end{document}